\DeclareMathAlphabet{\pazocal}{OMS}{zplm}{m}{n}
\algrenewcommand\algorithmicindent{1.3em}%
\algnewcommand\algorithmicforeach{\textbf{for each}}
\newcommand{\cL}{\pazocal{L}}
\newcommand{\cO}{\pazocal{O}}
\newcommand{\Stream}{\pazocal{S}}
\newcommand{\cA}{\pazocal{A}}
\newcommand{\cT}{\pazocal{T}\!}
\newcommand{\oalph}{\Omega}
\newcommand{\qinit}{I}
\newcommand{\out}{\textsf{out}}
\newcommand{\sem}[1]{\llbracket #1\mkern1mu\rrbracket}
\newcommand{\trans}[1]{\overset{#1}{\longrightarrow}}
\newcommand{\msom}{\text{MSO}_{\textsf{match}}}
\newcommand{\inpAlph}{\Omega}
\newcommand{\inpw}{x}
\newcommand{\outw}{y}
\newcommand{\outtime}{\mathsf{time}}
\newcommand{\outdelay}{\mathsf{delay}}
\newcommand{\yield}[1]{\texttt{yield}[#1]}
\newcommand{\enumvpt}{\textsc{EnumVPT}}
\newcommand{\D}{\pazocal{D}}
\newcommand{\cD}{\pazocal{D}}
\newcommand{\spanset}{\mathsf{Spans}}
\newcommand{\varset}{\mathsf{Vars}}
\newcommand{\spanc}[2]{[#1,#2\rangle}
\newcommand{\smap}{\mu}
\newcommand{\varcaptures}[1]{\pazocal{C}_{#1}}
\newcommand{\gprod}[1]{\Rightarrow_{#1}}
\newcommand{\splain}{\textsf{plain}}
\newcommand{\dsname}{enumerable compact set\xspace}
\newcommand{\dsnamebigcaps}{Enumerable Compact Set\xspace}
\newcommand{\dsabbr}{ECS\xspace}
\newcommand{\dsepsabbr}{$\eps$-ECS\xspace}
\newcommand{\enumds}{\textsc{Enum-ECS}}
\newcommand{\odepth}{\mathsf{odepth}}
\newcommand{\add}{\mathsf{add}}
\renewcommand{\prod}{\mathsf{prod}}
\newcommand{\union}{\mathsf{union}}
\newcommand{\push}{\mathsf{push}}
\newcommand{\pop}{\mathsf{pop}}
\newcommand{\varop}[1]{\{_{#1}}
\newcommand{\varcl}[1]{\}_{#1}}
\newcommand{\vpa}{VPA\xspace}
\newcommand{\vptname}{visibly pushdown transducer\xspace}
\newcommand{\vptnames}{visibly pushdown transducers\xspace}
\newcommand{\vpt}{VPT\xspace}
\newcommand{\vpts}{VPT\xspace}
\newcommand{\lindelay}{output-linear delay\xspace}
\newcommand{\enumE}{\pazocal{E}}
\newcommand{\nat}{\mathbb{N}}
\newcommand{\eps}{\varepsilon}
\renewcommand{\epsilon}{\varepsilon}
\renewcommand{\L}{\pazocal{L}}
\newcommand{\opS}{\Sigma^{\texttt{<}}}
\newcommand{\clS}{\Sigma^{\texttt{>}}}
\newcommand{\noS}{\Sigma^{\texttt{|}}}
\newcommand{\tinyop}{{\scriptsize\texttt{<}}}
\newcommand{\tinycl}{{\scriptsize\texttt{>}}}
\newcommand{\wnS}{\Sigma^{\texttt{<*>}}}
\newcommand{\pwnS}{\textsf{prefix}(\wnS)}
\newcommand{\op}[1]{\tinyop #1}
\newcommand{\cl}[1]{#1\tinycl}
\newcommand{\br}[1]{\llbracket #1 \mkern1mu \rrbracket}
\newcommand{\depth}{\mathsf{depth}}
\DeclareMathOperator{\oout}{%
	\ooalign{\raisebox{0ex}{$o$}\cr\hidewidth\rotatebox[origin=c]{15}{\raisebox{.2ex}{\scalebox{.7}{$\ \boldsymbol{\smallsmile}$}}}\hidewidth}}%
\DeclareMathOperator{\ooutscr}{\scriptsize{\ooalign{\raisebox{0ex}{$o$}\cr\hidewidth\rotatebox[origin=c]{15}{\raisebox{.2ex}{\scalebox{.7}{$\ \,  \boldsymbol{\smallsmile}$}}}\hidewidth}}}
\newcommand{\ogapname}{output weight\xspace}
\newcommand{\outgap}{\mathsf{outputweight}}
\newcommand{\cQ}{\pazocal{Q}}
\newcommand{\cI}{\pazocal{I}}
\title{Streaming enumeration on nested documents} 
\titlerunning{Streaming enumeration on nested documents}
\author{Martín Muñoz}{Pontificia Universidad Católica de Chile \and Millennium Institute for Foundational Research on Data}{mmunos@uc.cl}{}{}
\author{Cristian Riveros}{Pontificia Universidad Católica de Chile \and Millennium Institute for Foundational Research on Data}{cristian.riveros@uc.cl}{}{}
\authorrunning{Martín Muñoz and Cristian Riveros}
\keywords{Streaming, nested documents, query evaluation, enumeration algorithms.}
\begin{document}
	
\maketitle


\begin{abstract}
	
Some of the most relevant document schemas used online, such as XML and JSON, have a nested format. In the last decade, the task of extracting data from nested documents over streams has become especially relevant. We focus on the streaming evaluation of queries with outputs of varied sizes over nested documents. We model queries of this kind as Visibly Pushdown Transducers (VPT), a computational model that extends visibly pushdown automata with outputs and has the same expressive power as MSO over nested documents. Since processing a document through a VPT can generate a massive number of results, we are interested in reading the input in a streaming fashion and enumerating the outputs one after another as efficiently as possible, namely, with constant-delay. This paper presents an algorithm that enumerates these elements with constant-delay after processing the document stream in a single pass. Furthermore, we show that this algorithm is worst-case optimal in terms of update-time per symbol and memory usage. 


\end{abstract}

\section{Introduction}\label{sec:intro}

Streaming query evaluation~\cite{altinel2000efficient,babcock2002models} is the task of processing queries over data streams in one pass and with a limited amount of resources. This approach is especially useful on the web, where servers share data, and they have to extract the relevant content as they receive it. For structuring the data, the de facto structure on the web are nested documents, like XML or JSON. For querying, servers use languages designed for these purposes, like XPath, XQuery, or JSON query languages.
As an illustrative example, suppose our data server (e.g. Web API) is continuously receiving XML documents of the form:
\vspace{-1mm}

\begin{center}
\texttt{
<doc> <a> <b/> <c/> <b/> </a> <c> <b/> <b/> </c> </doc> ...}
\end{center}
\vspace{-1mm}
and for each document it has to evaluate the query $\cQ = //a/b$ (i.e., to extract all $b$-tags that are surrounded by an $a$-tag). The streaming query evaluation problem consists on reading these documents and finding all $b$-tags without storing the entire document on memory, i.e., by making one pass over the data and spending constant time per tag. In our example, we need to retrieve the 3rd and 5th tag as soon as the last tag $\texttt{</doc>}$ is received. One could consider here that the server has to read an infinite stream and perform the query evaluation continuously, where it must enumerate partial outputs as one of the XML documents ends.

Researchers have studied the streaming query evaluation problem in the past, focusing on reducing the processing time or memory usage (see, e.g. \cite{BarYossefFJ07}). Hence, they spent less effort on understanding the enumeration time of such a problem, regarding delay guarantees between outputs.
Constant-delay enumeration is a new notion of efficiency for retrieving outputs~\cite{DurandG07,Segoufin13}.
Given an instance of the problem, an algorithm with constant-delay enumeration performs a preprocessing phase over the instance to build some indices and then continues with an enumeration phase. It retrieves each output, one-by-one, taking a delay that is constant between any two consecutive outcomes. These algorithms provide a strong guarantee of efficiency since a user knows that, after the preprocessing phase, {she} will access the output as if {the algorithm had} already computed {it}. These techniques have attracted researchers' attention, finding sophisticated solutions to several query evaluation problems~\cite{BaganDG07,BerkholzGS20,Bagan06,AmarilliBJM17,FlorenzanoRUVV20,AmarilliBMN19}. 

In this work, we investigate the streaming query evaluation problem over nested documents by including enumeration guarantees, like constant-delay. We study the evaluation of queries given by Visibly Pushdown Transducers (\vpts) over nested documents.  These machines are the natural ``output extension'' of visibly pushdown automata, 
{and have the same expressive power as MSO over nested documents. 
In particular, \vpts can define queries like $\cQ$ above or any fragment of query languages for XML or JSON included in MSO.}
Therefore, \vpts allow considering the streaming query evaluation from a more general perspective, without getting married to a specific language (e.g., XPath). 

We study the evaluation of \vpt over a nested document in a streaming fashion. Specifically, we want to find a streaming algorithm that reads the document sequentially and spends constant time per input symbol. 
Furthermore, whenever needed, the algorithm can enumerate all outputs with output-linear delay. 
The main contribution of the paper is an algorithm with such characteristics for the class of I/O-unambiguous \vpts. We can extend this algorithm by determinization to all \vpts (i.e., in data complexity). 
Regarding memory consumption, we bound the amount of memory used in terms of the nesting of the document and the output weight. We show that our algorithm is worst-case optimal in the sense that there are instances where the maximum amount of memory required by any streaming algorithm is at least one of these two measures. 


Our main result applies to the streaming evaluation of XML and JSON query languages. 
In the appendix, we also show an application in the context of document spanners~\cite{FaginKRV15}.

\smallskip
\noindent \textbf{Related work.} The problem of streaming query evaluation has been extensively studied in the last decades. Some work considered streaming verification, like schema validation~\cite{SegoufinV02} or type-checking~\cite{KumarMV07}, where the output is true or false. Other proposals~\cite{ChenDZ06,OlteanuFB04,JosifovskiFB05,GreenGMOS04,Olteanu07} provided streaming algorithms for XPath or XQuery's fragments; however, extending them for reaching constant-delay enumeration seems unlikely. Furthermore, most of these works~\cite{KumarMV07,GouC07,GauwinNT09} assumed outputs of fixed size (i.e., tuples). People have also considered other aspects of streaming evaluation with outputs like earliest query answering~\cite{GauwinNT09} or bounded delay~\cite{DBLP:conf/lata/GauwinNT09} (i.e., given the first visit of a node, find the earliest event that permits its selection). These aspects are orthogonal to the problem studied here.
Another line of research is~\cite{BarYossefFJ05,BarYossefFJ07}, which presents space lower bounds for evaluating fragments of XPath or XQuery over streams. These works do not consider restrictions on the delay to give outputs.  

Visibly pushdown automata~\cite{AlurM04} {are} a model usually used for streaming evaluation of boolean queries~\cite{KumarMV07}. In~\cite{FiliotGRS19,AlurFMRS20}, authors studied the evaluation of \vpt in a streaming fashion, but none of them {saw} enumeration problems. Other extensions~\cite{GauwinNR08} for streaming evaluation have been analyzed but restricted to fixed-size outputs, and constant-delay was not included. 

Constant-delay algorithms have been studied for several classes of query languages and structures~\cite{Segoufin13}, as we already discussed. In~\cite{Bagan06,AmarilliBJM17}, researchers considered query evaluation over trees (i.e., a different representation {for} nested documents), but their algorithms are for offline evaluation and {it is not clear how to extend this algorithm for the online setting}. 
This research is extended with updates in~\cite{AmarilliBMN19pods}, which can encode streams by inserting new data items to the left. However, their update-time is logarithmic, and our proposal can do it with constant time~(i.e., in data complexity).
Furthermore, to the best of our knowledge it is unclear how to modify the work in~\cite{AmarilliBMN19pods} to get constant update-time in our scenario.
Streaming evaluation with constant-delay enumeration was included in the context of dynamic query evaluation~{\cite{IdrisUV17,BerkholzKS17,NikolicO18,0002NOZ20}} or complex event processing~\cite{GrezRU19,GrezR20}. In both cases, the input cannot encode nested documents, and their results do not apply.

\section{Preliminaries}\label{sec:prelim}



%
%
%
%

\noindent \textbf{Well-nested words and streams.} As usual, given a set $\Sigma$ we denote by $\Sigma^*$ all finite words with symbols in~$\Sigma$ where $\epsilon \in \Sigma^*$ represents the empty word of length $0$.

We will work over a {\em structured alphabet} $\Sigma = (\opS, \clS, \noS)$ comprised of three disjoint sets $\opS$, $\clS$, and $\noS$ that contain {\it open}, {\it close}, and {\it neutral} symbols respectively (in \cite{AlurM04,FiliotRRST18} these sets  are named \emph{call}, \emph{return}, and \emph{local}, respectively). 
Furthermore, we will denote symbols in $\opS$, $\clS$ or $\noS$ by $\op{a}$, $\cl{a}$, and~$a$, respectively.
Instead, we will use $s$ to denote any symbol in $\opS$, $\clS$, or $\noS$.
The set of {\em well-nested words}
over $\Sigma$, denoted as $\wnS$, is defined as the closure of the following rules: 
$\noS \cup \{\eps\} \subseteq \wnS$,
if $w_1, w_2 \in \wnS\setminus\{\eps\}$ then $w_1 \cdot w_2 \in \wnS$, and if $w \in \wnS$ and $\op{a} \in \opS$ and $\cl{b} \in \clS$ then $\op{a}\cdot w\cdot\cl{b} \in \wnS$. 
In addition, we will work with prefixes of well-nested words, that we call {\em prefix-nested words}. We denote the set of prefixes of $\wnS$ as $\pwnS$.
Also, we will sometimes use $w[i]$ to refer to the $i$-th symbol in a word $w$.

A {\em stream} $\Stream = s_1 s_2 \cdots$ is an infinite sequence where $s_i\in \opS \cup \clS \cup \noS$. Given a stream $\Stream = s_1 s_2 \ldots$ and positions $i,j\in\nat$ such that $i \leq j$, the word $\Stream[i,j]$ is the sequence $s_i \cdots s_j$. We also use this notation to refer to subsequences of infinite sequences that are not composed of symbols in $\Sigma$. For a stream $\Stream$, we will always assume that for each $i \in \nat$, the word $\Stream[1,i]$ is a prefix of some nested word (i.e., it can be completed to form a nested word). We also consider a method $\yield{\Stream}$ which can be called to access each element of $\Stream$ sequentially.

\smallskip
\noindent\textbf{Visibly pushdown automata.} A {\em visibly pushdown automaton}~\cite{AlurM04} (\vpa) is a tuple $\cA = (Q, \Sigma, \Gamma, \Delta, \qinit, F)$ where $Q$ is a finite set of states, $\Sigma = (\opS, \clS, \noS)$ is the input alphabet, $\Gamma$ is the stack alphabet, $\Delta \subseteq (Q\times\opS\times Q\times\Gamma)\cup(Q\times\clS\times\Gamma\times Q)\cup(Q\times\noS\times Q)$ is the transition relation, $\qinit \subseteq Q$ is a set of initial states, and $F\subseteq Q$ is a set of final states.
A transition $(q,\op{a},q',\gamma)$ is a {\em push-transition} where on reading $\op{a}\in\opS$, $\gamma$ is pushed onto the stack and the current state switches from $q$ to $q'$. Conversely, $(q,\cl{a},\gamma,q')$ is a {\em pop-transition} where on reading $\cl{a}\in\clS$ from the input and $\gamma$ from the top of the stack, the current state changes from $q$ to $q'$, and the symbol $\gamma$ is popped. Lastly, we say that $(q,a,q')$ is a {\em neutral transition} if $a\in\noS$, where there is no stack operation.

A stack is a finite sequence $\sigma$ over $\Gamma$ where the top of the stack is the first symbol on~$\sigma$. For a well-nested word $w = s_1 \cdots s_n$ in $\wnS$, a run of $\cA$ on $w$ is a sequence $\rho = (q_1,\sigma_1) \trans{s_1} \ldots \trans{s_n} (q_{n+1},\sigma_{n+1})$, where each $q_i \in Q$, $\sigma_i\in\Gamma^{*}$, $q_1\in \qinit$, $\sigma_1 = \eps$, and for every $i\in[1,n]$ the following holds: 
(1) if $s_{i}\in\opS$, then there is $\gamma\in\Gamma$ such that $(q_i,s_{i},q_{i+1},\gamma) \in \Delta$ and $\sigma_{i+1} = \gamma\sigma_i$, 
(2) if $s_{i}\in\clS$, then there is $\gamma\in\Gamma$ such that $(q_i,s_{i},\gamma,q_{i+1}) \in \Delta$ and $\sigma_{i} = \gamma\sigma_{i+1}$, and
(3) if $s_{i}\in\noS$, then $(q_i,s_{i},q_{i+1}) \in \Delta$ and $\sigma_{i+1} = \sigma_i$.
A run $\rho$ is accepting if $q_{n+1}\in F$. A well-nested word $w\in\wnS$ is accepted by a \vpa $\cA$ if there is an accepting run of $\cA$ on $w$. The language $\cL(\cA)$ is the set of well-nested words accepted by $\cA$. Note that if $\rho$ is an accepting run of $\cA$ on a well-nested word $w$, then $\sigma_{n+1} = \eps$. A set of well-nested words $\L \subseteq \wnS$ is called a visibly pushdown language if there exists a \vpa $\cA$ such that $\L = \cL(\cA)$.

A \vpa $\cA = (Q, \Sigma, \Gamma, \delta, \qinit, F)$ is said to be \emph{deterministic} if $|\qinit| =1$ and $\delta$ is a function subset of $(Q\,\times\opS \to Q\times \Gamma) \cup
(Q\times\clS\times\Gamma \to Q)\cup
(Q\times\noS \to Q)$. We also say that $\cA$ is \emph{unambiguous} if, for every $w \in \cL(\cA)$, there exists exactly one accepting run of $\cA$ on $w$. In~\cite{AlurM04}, it is shown that for every \vpa there exists an equivalent deterministic \vpa of at most exponential size. 

\smallskip
\noindent\textbf{Model of computation.} 
As it is common in the enumeration algorithms literature~\cite{Bagan06,Courcelle09,Segoufin13}, for our algorithms we assume the computational model of Random Access Machines (RAM) with uniform cost measure, and addition and subtraction as basic operations~\cite{AhoHU74}. We assume that a RAM has read-only input registers where the machine places the input, read-write work registers where it does the computation, and write-only output registers where it gives the output (i.e., the enumeration of the results).

\section{Streaming evaluation with output-linear delay}\label{sec:streamenum}


We are interested in defining a notion of a streaming enumeration problem: to evaluate a query over a stream and to enumerate the outputs with bounded delay whenever there is such. Towards this goal, we want to restrict the amount of resources used (i.e., time and space) and impose strong guarantees on the delay. As our gold standard, we consider the notion of \emph{output-linear delay} defined in~\cite{FlorenzanoRUVV20}. This notion is a refinement of the definition of constant-delay~\cite{Segoufin13} or linear-delay~\cite{Courcelle09} enumeration that better fits our purpose. Altogether, our plan for this section is to define a streaming enumeration problem and then provide a notion of efficiency that a solution for this problem should satisfy. 

We adopt the setting of relations to formalize a streaming enumeration problem~\cite{jerrum1986random,ArenasCJR19}. First, we need to define what is an enumeration problem outside the stream setting. 
Let $\inpAlph$  be an alphabet. An enumeration problem is a relation $R\subseteq (\inpAlph^{*} \times \inpAlph^*) \times\inpAlph^{*}$. For each pair $((q, x), y) \in R$ we view $(q, x)$ as the input of the problem and $y$ as a possible output for $(q, x)$. 
Furthermore, we call $q$ the query and $x$ the data.
This separation allows for a fine-grained analysis of the query complexity and data complexity of the problem. 
For an instance $(q, x)$ we define the set $\sem{q}_{R}(\inpw) = \{\outw \mid ((q,\inpw), \outw)\in R\}$ of all outputs of evaluating $q$ over $x$.

A streaming enumeration problem is an extension of an enumeration problem $R$ where the input is a pair $(q, \Stream)$ such that $\Stream$ is an infinite sequence of elements in $\Omega$. 
We identify two ways of extending an enumeration problem $R$ that differ in the output sets that are desired at each position in the stream:
\begin{enumerate}
\item The {\em streaming full-enumeration problem} for $R$ is one where the objective is to enumerate the set $\sem{q}_{R}(\Stream[1,n])$ at each position $n \geq 1$. 
\item A {\em streaming $\Delta$-enumeration problem} for $R$ is one where the objective is to enumerate the set $\sem{q}^{\Delta}_{R}(\Stream[1,n]) = \sem{q}_{R}(\Stream[1,n]) \setminus \bigcup_{i < n} \sem{q}_{R}(\Stream[1,i])$ at each position $n \geq 1$. 
\end{enumerate}
These versions give us two different ways of returning the outputs. These notions have been studied previously in the context of incremental view maintenance~\cite{ChirkovaY12} and more recently, for dynamic query evaluation~\cite{IdrisUV17,BerkholzKS17}.
For the sake of simplification, in the following we provide all definitions for the full-enumeration scenario. All definitions can be extended to $\Delta$-enumeration by changing $\sem{q}_{R}$ to $\sem{q}^{\Delta}_R$.

We turn now to our notion of efficiency for solving a streaming enumeration problem. Let $f\colon\nat\to\nat$. We say that $\enumE$ is a {\em streaming evaluation algorithm} for $R$ with $f$-{\em update-time} if $\enumE$ operates in the following way: it receives a query $q$ and reads the stream $\Stream$ by calling the $\yield{\Stream}$ method sequentially. After the $n$-th call to $\yield{\Stream}$, the algorithm processes the $n$-th data symbol in two phases:
\begin{itemize}
	\item In the first phase, called the {\em update} phase, the algorithm updates a data structure $D$ with the read symbol and the time spent is bounded by $\cO(f(\vert q \vert))$.
	\item The second phase, called the {\em enumeration} phase, occurs immediately after each update phase and outputs $\sem{q}_{R}(\Stream[1,n])$ using $D$. During this phase the algorithm:
	(1) writes $\# \outw_1 \# \outw_2\#\cdots\#\outw_m\#$ to the output registers where \# is a distinct separator symbol not contained in $\inpAlph$, and $\outw_1,\outw_2,\ldots,\outw_m$ is an enumeration (without repetitions) of the set $\sem{q}_{R}(\Stream[1,n])$,
	(2) it writes the first \# as soon as the enumeration phase starts, and 
	(3) it stops immediately after writing the last \#.
\end{itemize}
The purpose of separating $\enumE$’s operation into an update and enumeration phase is to make an output-sensitive analysis of $\enumE$’s complexity. Moreover, from a user perspective, this separation allows running the enumeration phase without interrupting the update phase. That is, the user could execute the enumeration phase in a separate machine, and its running time only depends on how many outputs she wants to enumerate.  

For the enumeration phase, we measure the delay between two outputs as follows: For an input $\inpw\in\inpAlph^*$, let $\# \outw_1 \# \outw_2\#\cdots\#\outw_m\#$ be the output of the algorithm during any call to the enumeration phase. Furthermore, let $\outtime_i(\inpw)$ be the time in the enumeration phase when the algorithm writes the $i$-th $\#$ when running on $\inpw$ for $i \leq m+1$. 
Define $\outdelay_i(\inpw) = \outtime_{i+1}(\inpw) - \outtime_{i}(\inpw)$ for $i \leq m$. 
Then we say that $\enumE$ has {\em output-linear delay} if there exists a constant $k$ such that for every $\inpw\in\inpAlph^*$ and $i \leq m$ it holds that $\outdelay_i(\inpw) \leq k \cdot |y_i|$. In other words, the number of instructions executed by $\enumE$ between the time that the $i$-th and the $(i+1)$-th \# are written is linear on the size of $y_i$.
Note that, in particular, an output-linear delay implies that the enumeration phase ends in constant time if there is no output for enumerating. 

As the last ingredient, we define how to measure the memory space of a streaming evaluation. Note that after the $n$-th call a streaming evaluation algorithm with $f$-update time will necessarily use at most $\cO(n\cdot f(|q|))$ bits of space. As a refinement of this bound, we say that this algorithm uses $g$-space over a query $q$ and stream $\Stream$ if the number of bits used by it after the $n$-th call is in $\cO(g(|q|, \Stream[1, n]))$.

Given a streaming enumeration problem, we say that it can be solved with update-time $f$, output-linear delay, and in $g$-space if there exists an algorithm such as the one described above. For $\Delta$-enumeration, the notion of streaming evaluation algorithm also applies, even though it could be the case that one can find such an algorithm for full-enumeration but not for $\Delta$-enumeration, and vice versa. 
Finally, the enumeration problem and solutions provided here are a formal refinement of the algorithmic notions proposed in the literature of streaming evaluation~\cite{GauwinNT09}, dynamic query evaluation~\cite{BerkholzKS17,IdrisUV17}, and complex event processing~\cite{GrezRU19,GrezR20}.

\section{Visibly pushdown transducers and main result}\label{sec:vpawo}

In this section, we present the definition of visibly pushdown transducers~\cite{FiliotRRST18} (\vpt), which are an extension of visibly pushdown automata to produce outputs. We use \vpt as our computational model to represent queries with output. This model is general enough to include any query language for nested documents, like XML or JSON, whose expressive power is in MSO.
After the setting is formalized, we state the main result of the paper. 

A \emph{\vptname} (\vpt) is a tuple $\cT = (Q, \Sigma, \Gamma, \oalph, \Delta, \qinit, F)$ where $Q$, $\Sigma$, $\Gamma$, $\qinit$, and $F$ are the same as for \vpa, $\oalph$ is the output alphabet with $\eps \notin \oalph$, and 
$
\Delta \subseteq  
(Q \times \opS \times (\oalph \cup \{\eps\}) \times Q \times \Gamma)  \cup 
(Q \times \clS \times (\oalph \cup \{\eps\}) \times \Gamma \times Q)  \cup 
(Q \times \noS \times (\oalph \cup \{\eps\}) \times Q)
$
\noindent is the transition relation. As usual for transducers, a symbol $s \in \opS \cup \clS \cup \noS$ is an input symbol that the machine reads and $\oout \in \oalph \cup \{\eps\}$ is a symbol that the machine prints in an output tape. Furthermore, $\eps$ represents that no symbol is printed for that transition.
A run $\rho$ of $\cT$ over a well-nested word $w = s_1s_2\cdots s_n \in\wnS$ 
is a sequence of the form
$
\rho = (q_1, \sigma_1) \xrightarrow{s_1/\!\ooutscr_1} \ldots  \xrightarrow{s_n/\!\ooutscr_n} (q_{n+1}, \sigma_{n+1})
$
where $q_i \in Q$, $\sigma_i\in \Gamma^{*}$, $q_1 \in I$, $\sigma_1 = \eps$ and for every $i\in[1,n]$ the following holds:
(1)~if $s_{i}\in \opS$, then $(q_i, s_{i}, \oout_{i},q_{i+1},\gamma) \in \Delta$ for some $\gamma\in\Gamma$ and $\sigma_{i+1} = \gamma\sigma_i$,
(2)~if $s_{i}\in\clS$, then $(q_i, s_{i}, \oout_{i}, \gamma, q_{i+1}) \in \Delta$ for some $\gamma\in\Gamma$ and $\sigma_i = \gamma\sigma_{i+1}$, and
(3)~if $s_{i}\in\noS$, then $(p_i, s_{i}, \oout_{i},q_{i+1})\in \Delta$ and $\sigma_i = \sigma_{i+1}$. We say that the run is accepting if $q_{n+1}\in F$. 
We call a pair $(q_i, \sigma_i)$ a configuration of $\rho$. 
Finally, the output of an accepting run $\rho$ is defined as:
$
\out(\rho) =  \out(\oout_1,1)\cdot \ldots \cdot \out(\oout_n, n)
$
where $\out(\oout, i) = \eps$ when $\oout = \eps$ and $(\oout, i)$ otherwise. Note that in $\oout_1 \cdots \oout_n$ we use $\eps$ as a symbol, and in $\out(\rho)$ we use $\eps$ as the empty string. Given a \vpt $\cT$ and a $w \in\wnS$, we define the set $\sem{\cT}(w)$ of all outputs of $\cT$ over $w$ as:
$
\sem{\cT}(w) = \{ \out(\rho) \, \mid \, \text{$\rho$ is an accepting run of $\cT$ over $w$}\}.
$


Strictly speaking, our definition of \vpt is richer than the one studied in~\cite{FiliotRRST18}. In our definition of \vpt each output element is a tuple  $(\oout,i)$ where $\oout$ is the symbol and $i$ is the output position, where for a standard \vpt~\cite{FiliotRRST18} an output element is just the symbol $\oout$. 
The extension presented here is indeed important for practical applications like in document spanners~\cite{FlorenzanoRUVV20,AmarilliBMN19} or in XML query evaluation~\cite{BarYossefFJ05, ShalemB08}. 

A first reasonable question is to understand what is the expressive power of VPT, namely, as a formalism for non-boolean query evaluation over nested words. For the Boolean case, it was shown~\cite{AlurM04} that VPA describe the same class of queries as MSO over nested words, called $\msom$.
Formally, fix a structured alphabet $\Sigma$ and let $w \in \wnS$ be a word of length~$n$. 
We encode $w$ as a structure:
$$
\big(\, [1,n], \, \leq,\, \{P_a\}_{a\in\Sigma}, \, \sf{match} \, \big)
$$
where $[1,n]$ is the domain, $\leq$ 
is the total order over $[1, n]$, $P_a = \{i \mid w[i] = a\}$, and $\sf{match}$ is a binary relation over $[1,n]$ that corresponds to the matching relation of open and close symbols: $\textsf{match}(i, j)$ is true iff $w[i]$ is an open symbol and $w[j]$ is its matching close symbol.
By some abuse of notation, we also use $w$ to denote its corresponding logical structure. 
A $\msom$ formula $\varphi$ over $\Sigma$ is given by:
\[
\ \ \ \ \ \ \, \varphi \ :=\ P_a(x) \ \mid\  x \in X \ \mid \ x \leq y \ \mid \ \textsf{match}(x, y) \ \mid \ \neg \varphi \ \mid \ \varphi \vee \varphi \ \mid \ \exists x.\varphi \ \mid \ \exists X.\varphi 
\]
where $a\in\Sigma$, $x$ and $y$ are first-order variables and $X$ is a monadic second order (MSO) variable. 
We write $\varphi(X_1, \ldots, X_n)$ where $X_1, \ldots, X_n$ are the free MSO variables of $\varphi$ (first-order variables are a special case of MSO variables). 
Then we write $w \models \varphi(A_1, \ldots, A_n)$ for $A_1, \ldots, A_n \subseteq [1, n]$ when $w$ satisfies $\varphi$ by replacing each variable $X_i$ with the set $A_i$. Here, we assume the standard semantics for MSO logic~\cite{libkin2004elements}. 

Given that VPT is an extension of VPA, it should not be a surprise that we can translate these results to VPT. In particular, the result in~\cite{AlurM04} can be easily extended to link VPT with formulas expressible in $\msom$. 

\begin{proposition}\label{prop:msovpt}
	Let $\varphi(X_1,\ldots,X_m)$ be a $\msom$ formula with $m$ free variables $X_1,\ldots,X_m$. 
	There is a VPT $\cT$ for which there is a one-to-one correspondence between the set 
	$\sem{\cT}(w)$ 
	and the set $\{(A_1,\ldots,A_m)\mid w\models \varphi(A_1,\ldots,A_m)\}$ for any word $w\in\wnS$.
	Moreover, for every VPT $\cT$ there is an $\msom$ formula $\varphi(X_1,\ldots,X_m)$ for which the same one-to-one correspondence holds.
\end{proposition}

In other words, VPT has the same expressive power as MSO over nested words. Given that fragments of query languages over nested documents (e.g., navigational XPath~\cite{CateM07}, JSON Navigational Logic~\cite{BourhisRV20}) are usually included in MSO, this shows that VPT is an expressive formalism for query evaluation over nested documents. 
As an example, in the appendix we show how to translate some XPath queries into \vpt, including $\cQ$.

%
%

We say that a \vpt $\cT = (Q, \Sigma, \Gamma, \oalph, \Delta, \qinit, F)$ is \emph{input/output deterministic} (I/O-deterministic for short) if $|\qinit| = 1$ and $\Delta$ is a partial function of the form $\Delta: (Q\times\opS \times \oalph\to Q\times \Gamma) \cup 
(Q\times\clS\times \oalph\times\Gamma \to Q) \cup
(Q\times\noS\times \oalph \to Q)$.
On the other hand, we say that $\cT$ is \emph{input/output unambiguous} (I/O-unambiguous for short) if for every $w\in \wnS$ and every $\mu \in \sem{\cT}(w)$ there is exactly one accepting run $\rho$ of $\cT$ over $w$ such that $\mu = \out(\rho)$. 
Notice that an I/O-deterministic \vpt is also I/O-unambiguous and in both models for each output there exists at most one run. The definition of I/O-deterministic is in line with the notion of I/O-deterministic variable automata of~\cite{FlorenzanoRUVV20} and I/O-unambiguous is a generalization of this idea that is enough for the purpose of our enumeration algorithm. One can show that for every \vpt $\cT$ there exists an equivalent I/O-deterministic \vpt and, therefore, an equivalent I/O-unambiguous \vpt. 
\begin{lemma}\label{vpawo:det}
	For every \vpt $\cT$ there exists an I/O-deterministic \vpt $\cT'$ of size $\cO(2^{|Q|^2|\Gamma|})$ such that $\br{\cT}(w) = \br{\cT'}(w)$ for every $w\in\wnS$.
\end{lemma}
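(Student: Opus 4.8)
The plan is to reduce I/O\nobreakdash-determinization of a \vptname to ordinary determinization of a \vpa, via a composite-alphabet encoding. Write $\oalph_\eps := \oalph \cup \{\eps\}$. First I would form the structured alphabet $\Sigma' = (\opS \times \oalph_\eps,\ \clS \times \oalph_\eps,\ \noS \times \oalph_\eps)$, so that a symbol of $\Sigma'$ is a pair \emph{(input symbol, printed symbol)} whose visibility class (open/close/neutral) is inherited from the input component; this is a legitimate structured alphabet precisely because push/pop/neutral status is read off from the input component alone. Since each transition of $\cT$ reads exactly one input symbol and prints exactly one symbol of $\oalph_\eps$, the relation $\Delta$ of $\cT$ translates verbatim into the transition relation of a \vpa $\cA_\cT$ over $\Sigma'$: a push-transition $(q,\op{a},\oout,q',\mathtt{x})$ becomes $(q,(\op{a},\oout),q',\mathtt{x})$, and similarly for pop- and neutral transitions, keeping the same $Q$, $\Gamma$, $\qinit$, $F$. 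By construction $\cA_\cT$ accepts exactly the words $(s_1,\oout_1)\cdots(s_n,\oout_n)$ over $\Sigma'$ such that $\cT$ has an accepting run on $s_1\cdots s_n$ whose sequence of printed symbols is $\oout_1\cdots\oout_n$, and runs of $\cA_\cT$ on such a word are in bijection with accepting runs of $\cT$ on $s_1\cdots s_n$ with that output sequence.

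Next I would invoke the \vpa determinization theorem of Alur and Madhusudan~\cite{AlurM04}, already cited above, to obtain a deterministic \vpa $\cA^{d}$ over $\Sigma'$ with $\cL(\cA^{d}) = \cL(\cA_\cT)$. Finally I would decode $\cA^{d}$ back into a \vptname $\cT'$ over $\Sigma$ and $\oalph$ by turning each transition of $\cA^{d}$ on a composite symbol $(s,\oout)$ into the \vptname transition that reads $s$, prints $\oout$, and has the same stack behaviour, copying the (single) initial state and the final states. Since $\cA^{d}$ is deterministic, its transition function is a partial function of exactly the shape required for $\Delta'$ to witness that $\cT'$ is I/O\nobreakdash-deterministic, and $|\qinit'| = 1$.

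It then remains to check $\br{\cT'}(w) = \br{\cT}(w)$ for every $w = s_1\cdots s_n \in \wnS$. By the decoding, an accepting run $\rho'$ of $\cT'$ on $w$ with printed sequence $\mu = \oout_1\cdots\oout_n$ corresponds to an accepting run of $\cA^{d}$ on $(s_1,\oout_1)\cdots(s_n,\oout_n)$; by $\cL(\cA^{d}) = \cL(\cA_\cT)$ this composite word lies in $\cL(\cA_\cT)$ iff $\cT$ has an accepting run on $w$ with printed sequence $\mu$. Since $\out(\cdot)$ depends only on $w$ and the printed sequence, and is injective in the printed sequence once $w$ is fixed, the set of values $\out(\rho)$ over accepting runs $\rho$ of $\cT$ on $w$ coincides with the set over accepting runs of $\cT'$ on $w$, i.e. $\br{\cT}(w) = \br{\cT'}(w)$. (As a by-product, determinism of $\cA^{d}$ gives at most one run of $\cT'$ per pair $(w,\mu)$, so $\cT'$ is also I/O\nobreakdash-unambiguous, consistent with the remark that I/O\nobreakdash-determinism implies I/O\nobreakdash-unambiguity.)

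The argument is essentially bookkeeping, and the determinization step is a black box. The one place to be careful is the interface between the two notions of run: a run of the \vptname $\cT'$ observes only the input letters while it \emph{chooses} the output letters, whereas a run of the \vpa $\cA^{d}$ observes the composite letters; the letter-to-letter nature of our transducers — exactly one printed symbol, possibly $\eps$, per input symbol — is exactly what makes these two notions coincide, and I would state this correspondence explicitly rather than leave it implicit.
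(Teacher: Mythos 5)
Your proposal is correct, and the idea is a sound black-box reduction: because a VPT is letter-to-letter (exactly one printed symbol, possibly $\eps$, per input symbol), the transducer can be coded as a VPA over the product alphabet $\Sigma' = (\opS\times\oalph_\eps, \clS\times\oalph_\eps, \noS\times\oalph_\eps)$, determinized by Alur--Madhusudan, and decoded; the resulting $\Delta'$ is a partial function of the right shape, and the language equality of the two VPAs yields the equality of output sets since $\out(\rho)$ is a function of $w$ together with the printed sequence.

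The paper instead performs the determinization \emph{by hand} at the transducer level: it builds $\cT'$ explicitly with $Q' = 2^{Q\times Q}$, $\Gamma' = 2^{Q\times\Gamma\times Q}$, defines $\delta$ directly for the three transition types, and proves the equivalence $\br{\cT}(w) = \br{\cT'}(w)$ via two symmetric claims established by induction on well-nested words (one giving $\br{\cT}\subseteq\br{\cT'}$, the other the converse). Unfolding your reduction gives exactly this construction — your $\cA^d$ has the same state and stack alphabets — so the two routes coincide extensionally, and the size bound is the same. What you gain is brevity and immunity from error: correctness is inherited from the cited determinization theorem rather than re-proved. What the paper gains is that the explicit state structure ($(p,q)$ pairs and $(p,{\tt x},q)$ triples, and the invariants (a)/(b) relating them to $\clevel$ and $\llevel$) is essential scaffolding for Section~\ref{sec:eval}: the preprocessing algorithm's tables $\aS$ and $\aT$ mirror precisely this determinized configuration, so the paper needs the construction in the open, not behind a black box. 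For the statement of the lemma alone, your argument is the cleaner one. One small wrinkle you gloss over: the paper's stated type for an I/O-deterministic $\Delta$ writes $\oalph$ rather than $\oalph\cup\{\eps\}$, but you (correctly) read it as $\oalph_\eps$ throughout, which is clearly the intended reading since the paper's own $\delta$ construction in the appendix must also accommodate $\eps$-outputs.
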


In this paper, we are interested on the following streaming enumeration problem for \vpt. Let $\pazocal{C}$ be a class of \vpt (e.g. I/O-deterministic \vpt). 
\begin{center}
	\framebox{
		\begin{tabular}{rl}
			\textbf{Problem:} & $\enumvpt[\pazocal{C}]$\\
			\textbf{Input:} & a \vpt $\cT \in \pazocal{C}$ and $w\in \wnS$ \\
			\textbf{Output:} & Enumerate $\sem{\cT}(w)$
		\end{tabular}
	}
\end{center}
The main result of the paper is that for the class of I/O-unambiguous \vpt, the streaming full-enumeration version of this problem can be solved efficiently. 
\begin{theorem}\label{theo:main}
	The streaming full-enumeration problem of $\enumvpt$ for the class of I/O-unambiguous \vpt can be solved with update-time $\cO(|Q|^2\vert\Delta\vert)$ and output-linear delay. For the general class of \vpt, it can be solved with update-time $\cO(2^{|Q|^2\vert\Delta\vert})$ and output-linear \nolinebreak delay. 
\end{theorem} 

The result for the class of all \vpt is a consequence of Lemma~\ref{vpawo:det} and the enumeration algorithm for I/O-unambiguous \vpt (see Section~\ref{sec:ds} and~\ref{sec:eval}). For both cases, if the \vpt is fixed (i.e., in data complexity), then the update-time of the streaming algorithm is constant. 

For the streaming version of $\enumvpt$, one can have $\Delta$-enumeration with a small loss of efficiency by solving the full-enumeration problem. Specifically, one can show that for any I/O-unambiguous \vpt $\cT$ there is an I/O-unambiguous \vpt $\cT'$ of linear size with respect to $|\cT|$ such that $\br{\cT'}(w) = \br{\cT}(w) \setminus \bigcup \, \{ \br{\cT}(w[1, i]) \mid i < |w|, w[1,i] \in\wnS\}$ for each $w\in\wnS$.

\begin{theorem}\label{vpawo:deltamain}
	The streaming $\Delta$-enumeration problem of $\enumvpt$ for the class of I/O-unambiguous \vpt can be solved with update-time $\cO(|Q|^2\vert\Delta\vert)$ and output-linear delay. For the general class of \vpt, it can be solved with update-time $\cO(2^{|Q|^2\vert\Delta\vert})$ and output-linear \nolinebreak delay. 
\end{theorem}

We could have considered a more general definition of VPT to produce outputs for prefix-nested words. This would be desirable for having some sort of \emph{earliest query answering}~\cite{GauwinNT09} which is important in practical scenarios. We remark that the algorithm of Theorem~\ref{theo:main} can be extended for this case at the cost of making the presentation more complicated. For the sake of presentation, we defer this extension to the full version of this paper.

\smallskip
\noindent \textbf{Space lower bounds of evaluating a \vpt.}  
This subsection deals with the space used by the streaming evaluation algorithm of Theorem~\ref{theo:main}. 
Indeed, this algorithm could use linear space in the worst case. In the following we explore some lower bounds in the space needed by any algorithm, and show that this bound is tight for a certain type of \vpt.

To study the minimum number of bits needed to solve $\enumvpt$ we need to introduce some definitions.
Fix a \vpt $\cT$ and $w \in \pwnS$. Let 
$\outgap(\cT, w)$ be the number of positions less than $|w|$ that appear in some output of $\sem{\cT}(w \cdot w')$ for some $w \cdot w'\in \wnS$. 
Furthermore, for a well-nested word $u$ let $\depth(u)$ be the maximum number of nesting pairs inside $u$, formally, $\depth(a) = 0$ for $a \in  \noS \cup \{\eps\}$, $\depth(u_1 \cdot u_2) = \max\{\depth(u_1), \depth(u_2)\}$, and $\depth(\op{a}\cdot u\cdot\cl{b}) = \depth(u) + 1$. For $w \in \pwnS$, we define $\depth(w) = \min\{\depth(w') \mid w \text{ is a prefix of } w'\}$.
We can now state some worst-case space lower bounds for \enumvpt{}.
\begin{proposition}\label{alg:spacebound}
	\begin{enumerate}
		\item There exists a \vpt $\cT$ such that every streaming evaluation algorithm for \enumvpt{} with input $\cT$ and $\Stream$ requires at least $\Omega(\depth(\Stream[1, n]))$ bits of space.
		\item 
		There exists a \vpt $\cT$ such that every streaming evaluation algorithm for \enumvpt{} with input $\cT$ and $\Stream$ requires at least $\Omega(\outgap(\cT, \Stream[1, n]))$ bits of space.
	\end{enumerate}
\end{proposition}

In~\cite{BarYossefFJ05,BarYossefFJ07}, the authors provide lower bounds on the amount of space needed for evaluating XPath in terms of the nesting and the concurrency (see~\cite{BarYossefFJ05} for a definition). 
One can show that  the \ogapname of $\cT$ and $w$ is always above the concurrency of $\cT$ and $w$. Despite this, one can check that both notions coincide for the space lower bound given in Proposition~\ref{alg:spacebound}.

The previous results show that, in the worst case, any streaming evaluation algorithm for VPT will require space of at least the depth of the document or the \ogapname.
To show that Theorem~\ref{theo:main} is optimal in the worst-case, we need to consider a further assumption of our \vpt. We say that a \vpt $\cT$ is \emph{trimmed}~\cite{caralp2015trimming} if for every $w\in \pwnS$ and every (partial) run $\rho$ of $\cT$ over $w$, there exists $w'$ and an accepting run $\rho'$ of $\cT$ over $w \cdot w'$ such that $\rho$ is a prefix of $\rho'$. 
This notion is the analog of trimmed non-deterministic automata. Similarly to Lemma~\ref{vpawo:det}, one can show that for every \vpt $\cT$ there exists a trimmed I/O-deterministic \vpt $\cT'$ equivalent to $\cT$ (i.e., by extending the construction in~\cite{caralp2015trimming} to VPT). 
The next result shows that, if the input to \enumvpt{} is a trimmed I/O-unambiguous \vpt, then the memory footprint is at most the maximum between the depth and \ogapname of the input. 

\begin{proposition}\label{prop:space}
	The streaming enumeration problem of \enumvpt\ for the class of trimmed I/O-unambiguous \vpt can be solved with update-time $\cO(|Q|^2\vert\Delta\vert)$, output-linear delay and $\cO(\max\{\depth(\Stream[1, n]), \outgap(\cT,\Stream[1, n])\}\times|Q|^2|\Delta|)$ space for every stream $\Stream$.
\end{proposition} 

Unfortunately, the algorithm provided in Theorem~\ref{theo:main} is not \emph{instance optimal}, in the sense of using the lowest number of bits needed for each specific \vpt (see the appendix).
Note that an instance optimal algorithm for the streaming enumeration problem of \vpts will imply a solution to the weak evaluation problem, stated by Segoufin and Vianu~\cite{SegoufinV02}. This is an open problem in the area (see \cite{Barloy21} for some recent results), so we leave this for future work. 


\section{Enumerable compact sets: a data structure for output-linear delay}\label{sec:ds}

This section presents a data structure, called \dsnamebigcaps{} (\dsabbr{}), which is the cornerstone of our enumeration algorithm for VPT. This data structure is strongly inspired by the work in~\cite{AmarilliBJM17,AmarilliBMN19}. Indeed, \dsabbr{} can be considered a refinement of the d-DNNF circuits used in~\cite{AmarilliBJM17} or of the set circuits used in~\cite{AmarilliBMN19}.
Several papers~\cite{OlteanuZ15,AmarilliBJM17,AmarilliBMN19pods,Torunczyk20}   have considered circuits-like structures for encoding outputs and enumerate them with constant delay. The novelty of ECS is twofold. First, we use ECS for solving a streaming evaluation problem. Although people have studied streaming query evaluation with enumeration before~\cite{IdrisUV17,BerkholzKS17}, this is the first work that uses a circuit-like data structure in an online setting. 
Second and more important, there is a difference in performance if we compare ECS to the previous approaches.
In offline evaluation, constant delay algorithms usually create an initial circuit from the input, making several passes over the structure, building indices, and then running the enumeration process. Given time restrictions for the online evaluation, we cannot create a circuit and do this linear-time preprocessing before enumerating. On the contrary, we must extend the circuit-like data structure for each data item in constant time and then be ready to start the enumeration. This requirement justifies the need for a new data structure for representing and enumerating outputs.  Therefore, ECS differs from previous proposals because each operation must take constant time, and we can run the enumeration process with output-linear delay, at any time and without any further preprocessing. In the following, we present \dsabbr{} step-by-step to use it later in the next section.


Let $\Sigma$ be a (possibly infinite) alphabet. We define an \emph{\dsnamebigcaps{}} (\dsabbr) as a tuple $\D = (\Sigma, V, I, \ell, r, \lambda)$ such that $V$ and $I \subseteq V$ are finite sets of nodes, $\ell\colon I\to V$ and $r\colon I\to V$ are the {\em left} and {\em right} functions, and $\lambda\colon V\to\Sigma\cup\{\cup,\odot\}$ is a label function such that $\lambda(v)\in\{\cup,\odot\}$ if, and only if, $v\in I$.
Further, we assume that the directed graph $(V, \{(v,\ell(v)),(v,r(v))\mid v\in V\})$ is acyclic.
We call the nodes in $I$ {\em inner nodes} and the nodes in $V \setminus I$ {\em leaves}. Furthermore, for $v\in I$ we say that $v$ is a {\em product node} if $\lambda(v) = \odot$, and a {\em union node} if $\lambda(v) = \cup$. We define the size of $\D$ as $|\D| = |V|$.
For each node $v$ in $\D$, we associate a set of words $\L_{\D}(v)$ recursively as follows: (1) $\L_{\D}(v) = \{a\}$ whenever $\lambda(v) = a\in\Sigma$, (2) $\L_{\D}(v) = \L_{\D}(\ell(v)) \cup \L_{\D}(r(v))$ whenever $\lambda(v) = \cup$, and (3) $\L_{\D}(v) = \L_{\D}(\ell(v)) \cdot \L_{\D}(r(v))$ whenever $\lambda(v) = \odot$,
where $L_1 \cdot L_2 = \{w_1\cdot w_2 \mid w_1\in L_1\text{ and }w_2\in L_2\}$.

The size $|\L_{\D}(v)|$ can be exponential with respect to $|\D|$. For this reason, we say that $\D$ is a \emph{compact} representation of $\L_{\D}(v)$ for any $v \in V$.
Although $\L_{\D}(v)$ is very large, the goal is to enumerate all of its elements efficiently. Specifically, we consider the following problem:
\vspace{.1cm}
\begin{center}
	\framebox{
		\begin{tabular}{rl}
			\textbf{Problem:} & $\enumds$\\
			\textbf{Input:} & An \dsabbr{} $\D = (\Sigma, V, I, \ell, r, \lambda)$ and $v \in V$.  \\
			\textbf{Output:} & Enumerate the set $\L_\D(v)$ without repetitions. \\
		\end{tabular}
	}
\end{center}
\vspace{.1cm}
Plus, we want to solve $\enumds$ with output-linear delay. To reach this goal we need to impose two additional restrictions on $\D$. The first restriction is to guarantee that $\D$ is not ambiguous, namely, for each $w \in \L_\D(v)$ there is at most one way to retrieve $w$ from~$\D$. Formally, we say that $\D$ is \emph{unambiguous} if $\D$ satisfies the following two properties: (1) for every union node $v$ it holds that $\L_{\D}(\ell(v))$ and $\L_{\D}(r(v))$ are disjoint, and (2) for every product node $v$ and for every $w \in \L_{\D}(v)$, there exists a unique way to decompose $w = w_1 \cdot w_2$ such that $w_1 \in \L_{\D}(\ell(v))$ and $w_2 \in \L_{\D}(r(v))$. Thus, if $\D$ is unambiguous, there will be no duplicates if we enumerate $\L_{\D}(v)$ directly, given that there is no way of producing the same element in two different ways. 

The second restriction is to guarantee that, for each node $v$, there exists an output or, more specifically, a symbol of an output {\em close} to~$v$, in the sense that it can be reached in a bounded number of steps. This is not always the case for an \dsabbr{}. For example, consider a balanced tree of union nodes where all the outputs are at the leaves. One has to traverse a logarithmic number of nodes from the root to reach the first output.
Note that product nodes do not pose this problem since the number of nodes that have to be traversed to produce a certain output is proportional to its length.
For this reason, we define the notion of \emph{$k$-bounded}~\dsabbr{}. 
Given an \dsabbr{} $\D$, define the (left) output-depth of a node $v\in V$, denoted by $\odepth_{\D}(v)$, recursively as follows:
$\odepth_{\D}(v) = 0$ whenever $\lambda(v)\in\Sigma$ or $\lambda(v) = \odot$, and $\odepth_{\D}(v) = \odepth_{\D}(\ell(v))+1$ whenever $\lambda(v) = \cup$.
Then, for a fixed $k\in\nat$ we say that $\D$ is $k$-bounded if $\odepth_{\D}(v)\leq k$ for all $v\in V$.

Given the definition of output-depth, we say that $v$ is an output node of $\D$ if $v$ is a leaf or a product node. 
Note that if $\D$ only has output nodes, then it is 0-bounded, and one can easily check that $\L_\D(v)$ can be enumerated with output-linear delay.
Indeed, for a fixed $k$ the same happens with every unambiguous and $k$-bounded \dsabbr{}.

\begin{proposition}\label{ds:lindelay}
	Fix $k\in\nat$. Let $\D = (\Sigma, V, I, \ell, r, \lambda)$ be an unambiguous and $k$-bounded \dsabbr{}. Then the set $\L_\D(v)$ can be enumerated with output-linear delay for any $v\in V$.
\end{proposition}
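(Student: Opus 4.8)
The goal is to enumerate $\L_\D(v)$ with delay linear in the size of each output, given that $\D$ is unambiguous and $k$-bounded. The plan is to describe an explicit recursive procedure — morally a DFS over $\D$ — that produces the words of $\L_\D(v)$ one by one, and then to argue that between two consecutive outputs the work done is $O(|y_i|)$ for a constant depending only on $k$.

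\medskip
\noindent\textbf{The enumeration procedure.} For each node $v$ I would define a routine $\textsf{enum}(v)$ that yields the words of $\L_\D(v)$ in some fixed order, together with a routine $\textsf{first}(v)$ that returns the first such word. The cases:
\begin{itemize}
	\item If $\lambda(v) = a \in \Sigma$, then $\textsf{enum}(v)$ yields the single word $a$.
	\item If $\lambda(v) = \odot$, then $\textsf{enum}(v)$ runs a nested loop: for each $w_1$ produced by $\textsf{enum}(\ell(v))$, and for each $w_2$ produced by $\textsf{enum}(r(v))$, yield $w_1 \cdot w_2$. Concatenation is done by copying $w_1$ to the output tape, then $w_2$; unambiguity of product nodes guarantees no repetition.
	\item If $\lambda(v) = \cup$, then by $k$-boundedness the left spine of union nodes below $v$ has length at most $k$: there are output nodes $u_0, u_1, \ldots, u_m$ with $m \le k$ reachable by repeatedly taking $\ell$ until hitting an output node, together with the right children $r$ along the spine. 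Concretely, $\L_\D(v)$ is the disjoint union (by unambiguity) of $\L_\D$ of these $O(k)$ output nodes, and $\textsf{enum}(v)$ concatenates their enumerations.
\end{itemize}
The key structural observation is that after peeling off at most $k$ union layers we always reach an output node (leaf or product node), and a product node immediately exposes a symbol (its left child's first symbol) without passing through more unions on that side — so the ``distance to the next symbol'' is $O(k)$ everywhere.

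\medskip
\noindent\textbf{Delay analysis.} I would bound the delay by charging the work between the $i$-th and $(i{+}1)$-th separator to the output $y_i$. Each call to $\textsf{enum}$ or $\textsf{first}$ on a union node does $O(k)$ work before delegating to an output node; each output node either emits a symbol (leaf) or, if a product node, recurses on its two children and the total number of product/union nodes touched while assembling one word $y_i$ is proportional to $|y_i|$, since every symbol of $y_i$ comes from exactly one leaf and the parse tree witnessing $y_i \in \L_\D(v)$ is unambiguous, hence has $O(|y_i|)$ internal product nodes, each sitting above an $O(k)$-length union spine. Summing, the delay is $O(k \cdot |y_i|) = O(|y_i|)$ for fixed $k$. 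For the transitions between words — when an inner loop at a product node exhausts its right factor and must advance the left factor, or when a union node moves from one spine child to the next — one must check that no unbounded amount of bookkeeping accumulates; this is handled by keeping, for each active node on the recursion stack, a small constant-size iterator state, and the stack depth during the production of $y_i$ is itself $O(|y_i|)$.

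\medskip
\noindent\textbf{Main obstacle.} The subtle point, and the one I expect to require the most care, is the transition cost at product nodes when backtracking: after yielding $w_1 \cdot w_2$, if $w_2$ was the last word of $\L_\D(r(v))$ we must reset the right iterator and advance the left one to $w_1'$, and it is not a priori obvious that the cost of this reset-and-advance is absorbed by $|w_1' \cdot w_2'|$ rather than by some earlier, possibly much larger, output. The resolution is that recomputing $\textsf{first}(r(v))$ costs $O(|w_2'|)$ (distance to first symbol plus its length, bounded via $k$-boundedness and structural induction), and advancing $\textsf{enum}(\ell(v))$ by one step is likewise charged to $w_1'$; one proves by induction on the structure of $\D$ that the ``advance'' operation on any node has cost linear in the size of the word it next produces. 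Packaging this as a clean invariant on the iterator states is the crux of the argument; everything else is a routine DFS with constant-size per-node state.
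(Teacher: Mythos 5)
Your structural observations (union spines of length at most $k$, charging work to an unambiguous parse tree of size $O(k|y_i|)$) are the right starting point and essentially match the paper's setup, where the parse tree is formalized as a ``full output tree'' and its size is bounded by $2k\,|y_i|$. However, the resolution you sketch for your ``main obstacle'' does not hold. Your proposed invariant --- that the advance operation on any node costs time linear in the size of the word it \emph{next} produces --- is false, and the counterexample is precisely the scenario you worry about. When the iterator at a node $u$ has just emitted the last word of $\L_\D(u)$, say $w$, the advance call must first \emph{discover} exhaustion, which means unwinding the iterator/recursion stack that witnessed $w$; that stack has size $\Theta(|w|)$, not $\Theta(\text{next word})$. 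Concretely, let $v$ be a union node with $\L_\D(\ell(v)) = \{w\}$ for a large word $w$ of length $N$ (realized by a chain of product nodes) and $\L_\D(r(v)) = \{a\}$ a single letter. Advancing past $w$ to $a$ at $v$ costs $\Theta(N)$, not $O(1)$. At a product node the same issue appears: the cost of discovering that the right factor is exhausted is charged to the \emph{previous} $w_2$, which you leave unaccounted for. The tightest bound that your kind of structural induction actually yields is that the advance costs $O(|y_{i-1}| + |y_i|)$ --- this is exactly what the paper proves about \textsc{NextTree} --- and that is \emph{not} output-linear delay under the paper's definition, which requires the delay before the $(i{+}1)$-th separator to be bounded by a constant times $|y_i|$ alone.

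To close this gap one more ingredient is needed that is absent from your write-up: an amortization/buffering step. The paper first establishes the $O(|y_{i-1}| + |y_i|)$ bound and then wraps the enumerator in a driver that writes each $y_i$ to an auxiliary tape and delays flushing it to the output until $\Theta(|y_i|)$ further steps of work toward computing $y_{i+1}$ have been carried out. The cost of unwinding the stack from $y_i$ is thereby paid for by $y_i$'s own length before $y_i$ is emitted, leaving only $O(|y_{i+1}|)$ work to be done after $y_i$ appears on the output tape. Without this explicit buffering (or an equivalently explicit lazy-unwinding scheme whose correctness you would still have to argue), the claimed per-output delay bound does not follow from the analysis you give.
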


The enumeration algorithm above does not require any preprocessing over $\D$ and the main idea is to perform some sort of DFS traversal over the nodes. By this proposition, from now we assume that all \dsabbr{} are unambiguous and $k$-bounded for some fixed~$k$.


The next step is to provide a set of operations that allow extending an \dsabbr{} $\D$ while maintaining $k$-boundedness. Furthermore, we require these operations to be fully-persistent: a data structure is called \emph{fully-persistent} if every version can be both accessed and modified~\cite{driscoll1986making}. In other words, the previous version of the data structure is always available after each operation.
To satisfy the last requirement, the strategy will consist in extending $\D$ to $\D'$ for each operation, by always adding new nodes and maintaining the previous nodes untouched. Then $\cL_{\D'}(v) = \cL_{\D}(v)$ for each node $v \in V$, and so, the structure is fully-persistent. 

More precisely, fix an \dsabbr{} $\D = (\Sigma, V, I, \ell, r, \lambda)$. In the following, we say that $\D' = (\Sigma, V', I', \ell', r', \lambda')$ is an extension of $\D$ if, and only if, $\mathsf{obj} \subseteq \mathsf{obj}'$ for every $\mathsf{obj} \in \{V,I, \ell, r, \lambda\}$. Further, we write $\mathsf{op}(I) \rightarrow O$ to define the signature of an operation $\mathsf{op}$ where $I$ is the input and $O$ is the output. 
Then for any $a \in \Sigma$ and $v_1, \ldots, v_4 \in V$, we define the operations:
\[
\begin{array}{rclrclrcl}
\add(\D,a) & \!\!\!\!\rightarrow\!\!\!\! & (\D',v')   \ \ \ \ \ \	 & \prod(\D,v_1, v_2) & \!\!\!\! \rightarrow \!\!\!\! & (\D',v')   \ \ \ \ \ \ \ &  \union(\D,v_3, v_4) & \!\!\!\!\rightarrow\!\!\!\! & (\D',v')
\end{array}
\]
\noindent such that $\D'$ is an extension of $\D$  and $v' \in V' \setminus V$ is a fresh node such that $\cL_{\D'}(v') = \{a\}$, $\cL_{\D'}(v') = \cL_{\D}(v_1) \cdot \cL_{\D}(v_2)$, and $\cL_{\D'}(v') = \cL_{\D}(v_3) \cup \cL_{\D}(v_4)$, respectively.
We assume that the $\union$ and $\prod$ respect properties (1) and (2) of an unambiguous \dsabbr{}, that is, $\cL_{\D}(v_1)$ and $\cL_{\D}(v_2)$ are disjoint and, for every $w \in \L_{\D}(v_3) \cdot \L_{\D}(v_4)$, there exists a unique way to decompose $w = w_1 \cdot w_2$ such that $w_1 \in \L_{\D}(v_3)$ and $w_2 \in \L_{\D}(v_4)$. 

Next, we show how to implement each operation. In fact, the case of $\add$ and $\prod$ are straightforward. For $ \add(\D,a) \rightarrow (\D',v')$ define $V' := V \cup \{v'\}$, $I' := I$, and $\lambda'(v') = a$. One can easily check that $\L_{\D'}(v') = \{a\}$ as expected. For $\prod(\D,v_1, v_2) \rightarrow (\D',v')$ we proceed in a similar way: define $V' := V \cup\{v'\}$, $I' := I \cup\{v\}$, $\ell'(v') :=v_1$, $r'(v') = v_2$, and $\lambda'(v') = \odot$. 
Then $\L_{\D'}(v') = \L_{\D}(v_1)\cdot\L_{\D}(v_2)$.
Furthermore, one can check that each operation takes constant time, $\D'$ is a valid \dsabbr{} (i.e. unambiguous and $k$-bounded), and the operations are fully-persistent (i.e. the previous version $\D$ is available).

To define the union, we need to be a bit more careful to guarantee output-linear delay, specifically, the  $k$-bounded property.
For a node $v\in V$, we say that $v$ is {\em safe} if (1) $\odepth_{\D}(v) \leq 1$, and (2) if $\odepth_{\D}(v) = 1$, then $\odepth_{\D}(r(v)) \leq 1$. In other words, $v$ is safe if $v$ is an output node, or its left child is an output node, and the right child is either an output node or has output depth 1.
Note that a leaf or a product node are safe nodes by definition and, thus, the $\add$ and $\prod$ operations always produce safe nodes. The trick then is to show that, if $v_3$ and $v_4$ are safe nodes, then we can implement $\union(\D,v_3, v_4) \rightarrow (\D',v')$ and produce a safe node $v'$. For this define $(\D',v')$ as follows:
\begin{itemize}
	\item  If $v_3$ or $v_4$ are output nodes then $V' := V\cup\{v'\}$, $I' := I\cup\{v'\}$, and $\lambda(v') := \cup$. Moreover, if $v_3$ is the output node, then $\ell'(v') := v_3$ and $r'(v') := v_4$. Otherwise, we connect $\ell'(v') := v_4$ and~$r'(v') := v_3$.
	\item If $v_3$ and $v_4$ are not output nodes (i.e. both are union nodes), then $V' := V \cup\{v',u_1,u_2\}$, $I' := I\cup\{v',u_1,u_2\}$, $\ell'(v') := \ell(v_3)$, $r'(v') := u_1$, and $\lambda'(v') := \cup$; $\ell'(u_1) := \ell(v_4)$, $r'(u_1) := u_2$, and $\lambda'(u_1) := \cup$; $\ell'(u_2) := r(v_3)$, $r'(u_2) := r(v_4)$, and $\lambda'(u_2) := \cup$.
\end{itemize}
This gadget is depicted in Figure~\ref{fig-union-def} (note that a similar trick is used in~\cite{AmarilliBJM17} for computing an index over a circuit). 
This construction has several properties.
First, one can easily check that $\L_{\D'}(v') = \L_{\D}(v_1)\cup\L_{\D}(v_2)$ and so the semantics is well-defined. 
Second, $\union$ can be computed in constant time in $|\D|$ given that we only need to add three fresh nodes, and the operation is fully-persistent given that we connect them without modifying~$\D$. 
Furthermore, the produced node $v'$ is safe in $\D'$, although nodes $u_1$ and $u_2$ are not necessarily safe. 
Finally, $\D'$ is 2-bounded whenever $\D$ is $2$-bounded. This is straightforward to see for first case when $v_3$ or $v_4$ are output nodes. For the second case (i.e., Figure~\ref{fig-union-def}), we have to notice that $v_3$ and $v_4$ are safe, therefore $\ell(v_3)$ and $\ell(v_4)$ are output nodes, and then $\odepth_{\D'}(v') = \odepth_{\D'}(u_1) = 1$. Further, given that $v_3$ is safe, we know that $\odepth_{\D}(r(v_3)) \leq 1$, so $\odepth_{\D'}(u_2)\leq 2$. Given that the output depths of all fresh nodes in $\D'$ are bounded by $2$ and $\D$ is $2$-bounded, then we conclude that $\D'$ is $2$-bounded as well.

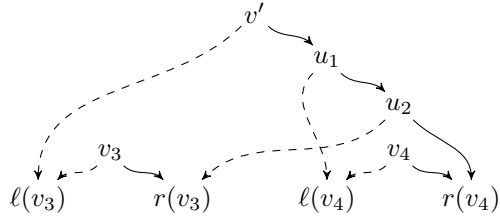
\begin{figure}[t]
	\centering
	\begin{tikzpicture}[->,>=stealth',roundnode/.style={circle,inner sep=1pt},squarednode/.style={rectangle,inner sep=2pt}, scale=0.95]
	\node[squarednode] (0) at (0, 0) {$\ell(v_3)$};
	\node[squarednode] (1) at (2, 0) {$r(v_3)$};
	\node[roundnode] (2) at (1, 0.65) {$v_3$};
	\node[squarednode] (3) at (4, 0) {$\ell(v_4)$};
	\node[squarednode] (4) at (6, 0) {$r(v_4)$};
	\node[roundnode] (5) at (5, 0.65) {$v_4$};
	\node[roundnode] (6) at (5, 1.3) {$u_2$};
	\node[roundnode] (7) at (4, 1.95) {$u_1$};
	\node[roundnode] (8) at (3, 2.6) {$v'$};
	\draw[dashed] (2) to[out=-135,in=45] (0);
	\draw (2) to[out=-45,in=135] (1);
	\draw[dashed] (5) to[out=-135,in=45] (3);
	\draw (5) to[out=-45,in=135] (4);
	\draw (6) to[out=-45,in=90] (4);
	\draw[dashed] (6) to[out=-135,in=45] (1);
	\draw (7) to[out=-45,in=135] (6);
	\draw[dashed] (7) to[out=-135,in=90] (3);
	\draw (8) to[out=-45,in=135] (7);
	\draw[dashed] (8) to[out=-135,in=90] (0);
	\end{tikzpicture}
	\caption{Gadget for $\union(\D,v_3,v_4)$. Nodes $v',u_1,u_2,v_3$ and $v_4$ are labeled as $\cup$. Dashed and solid lines denote the mappings in $\ell'$ and $r'$ respectively.}
	\label{fig-union-def}
	\vspace{-3mm}
\end{figure}

By the previous discussion, if we start with an \dsabbr\ $\D$ which is 2-bounded (or empty) and we apply the $\add$, $\prod$ and $\union$ operators between safe nodes (which also produce safe nodes), then the result is 2-bounded as well. Finally, by Proposition~\ref{ds:lindelay}, the result can be enumerated with output-linear delay.

\begin{theorem}\label{theo:data-structure}
	The operations $\add$, $\prod$, and $\union$ require constant time and are fully-persistent. Furthermore, if we start from an empty \dsabbr\ $\D$ and apply $\add$, $\prod$, and $\union$ over safe nodes, the partial results $(\D', v')$ satisfy that $v'$ is always a safe node and the set $\L_{\D'}(v)$ can be enumerated with output-linear delay for every node $v$.
\end{theorem}  

It is important to remark that restricting these operations only over safe nodes is a mild condition. Given that we will usually start from an empty \dsabbr\ and apply these operations over previously returned nodes, the whole algorithm will always use safe nodes during its computation, satisfying the conditions of Theorem~\ref{theo:data-structure}. 

For technical reasons, our algorithm of the next section needs a slight extension of \dsabbr{} by allowing leaves that produce the empty string $\eps$. Let $\eps\not\in\Sigma$ be a symbol representing the empty string (i.e. $w\cdot\eps = \eps\cdot w = w$). We define an \dsname{} with $\eps$ (called \dsepsabbr) as a tuple $\D = (\Sigma, V, I, \ell, r, \lambda)$ defined identically to an \dsabbr{} except that $\lambda:V\to\Sigma\cup\{\cup, \odot,\eps\}$ and $\lambda(v)\in\{\cup,\odot\}$ if, and only, if $v\in I$. Also, if $\lambda(v) = \eps$, then $\L_{\D}(v) = \{\epsilon\}$. The unambiguity restriction is the same for \dsepsabbr and one has to slightly extend $k$-boundedness to consider $\epsilon$-nodes. 
However, to support the $\prod$ and $\union$ operations in constant time and to maintain the $k$-boundedness invariant, we need to extend the notion of safe nodes (called $\epsilon$-safe) and the gadgets for $\prod$ and $\union$. 
Given space restrictions, we show these extensions in the appendix and state here the main result, that will be used in the next section.
 
\begin{theorem}\label{theo:data-structure-eps}
	The operations $\add$, $\prod$, and $\union$ over \dsepsabbr{} take constant time and are fully-persistent. Furthermore, if we start from an empty \dsepsabbr{} $\D$ and apply $\add$, $\prod$, and $\union$ over $\epsilon$-safe nodes, the partial results $(\D', v')$ satisfy that $v'$ is always an $\epsilon$-safe node and the set $\L_{\D'}(v)$ can be enumerated with output-linear delay for every node $v$.
\end{theorem}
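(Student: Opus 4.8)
The plan is to reduce Theorem~\ref{theo:data-structure-eps} to Theorem~\ref{theo:data-structure} by showing that an \dsepsabbr{} can be treated as a two-layer object: a ``plain'' \dsabbr{} layer (the one already handled) plus a bounded amount of bookkeeping for the $\epsilon$-leaves. First I would make precise the notion of an $\epsilon$-safe node. The natural definition is to keep the old safety condition (output depth at most $1$, and if it is exactly $1$ then the right child has output depth at most $1$), but additionally track whether $\L_{\D}(v)$ contains $\epsilon$. So I would say $v$ is $\epsilon$-safe if either (a) $v$ is an ordinary safe node none of whose reachable ``spine'' produces $\epsilon$, or (b) $v$ is a union node whose left child is an $\epsilon$-leaf (or an output node producing $\epsilon$) and whose right child is itself an ordinary safe node; intuitively every node should be ``one $\cup$ away'' from both a concrete output symbol and, if needed, from the $\epsilon$-leaf. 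Getting this definition right so that it is preserved by all three operations is where the real care is needed.

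Next I would check the three operations one by one. For $\add(\D,a)$ with $a\in\Sigma$ (or $a=\eps$) the situation is identical to before: a fresh leaf is always $\epsilon$-safe by case (a) (or trivially so for an $\epsilon$-leaf). For $\prod(\D,v_1,v_2)$ the subtlety is that $\epsilon$ can now appear in $\L_{\D}(v_1)$ or $\L_{\D}(v_2)$, so $\L_{\D'}(v')=\L_{\D}(v_1)\cdot\L_{\D}(v_2)$ may contain $\epsilon$ exactly when both factors do, and a pure product node has output depth $0$ regardless; the only thing to verify is that output-linear enumeration still reaches a symbol quickly, which it does because a product node is an output node. Hence $\prod$ stays constant-time and its output is $\epsilon$-safe. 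The genuinely delicate case is $\union(\D,v_3,v_4)$: I would split into the same cases as in the plain proof (at least one of $v_3,v_4$ is an output node vs.\ both are union nodes), but now I also have to route the possible $\epsilon$-leaf so that the resulting node is $\epsilon$-safe and the $2$-boundedness invariant survives. Concretely, when both are union nodes I would reuse the three-node Louis gadget $v',v'',v^*$ of Figure~\ref{fig-union-def}, but arrange the left spine so that an $\epsilon$-producing branch (if any) sits at depth $\le 1$; because each of $v_3,v_4$ is $\epsilon$-safe, each already has its $\epsilon$-leaf and its first real symbol within one $\cup$, so gluing at most a constant number of fresh $\cup$-nodes keeps every fresh node at output depth $\le 2$, exactly as in the plain argument.

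With the operations in hand, the enumeration claim follows almost for free: the extended structure is still unambiguous (the disjointness and unique-decomposition hypotheses are assumed on the inputs to $\union$ and $\prod$, and an $\epsilon$-leaf contributes a single word $\epsilon$, so it cannot create a second decomposition beyond what the hypotheses already forbid) and still $2$-bounded, so Proposition~\ref{ds:lindelay} applies verbatim to give output-linear delay; the only adaptation to the DFS enumeration procedure is that when it meets an $\epsilon$-leaf it emits nothing and returns, which changes neither correctness nor the delay bound since an $\epsilon$-leaf is an output node and the surrounding nodes still sit within a constant number of steps of a genuine symbol. The main obstacle, then, is purely definitional bookkeeping: finding the right strengthening of ``safe'' to ``$\epsilon$-safe'' that is simultaneously (i) preserved by $\add$, $\prod$, $\union$, (ii) strong enough to keep the gadgets $2$-bounded, and (iii) weak enough that the nodes one actually produces in the VPT algorithm of the next section satisfy it. Everything else is a routine replay of the plain-\dsabbr{} arguments, so in the paper I would state the extended definitions and gadgets in the appendix, verify these three closure properties, and invoke Proposition~\ref{ds:lindelay} to conclude.
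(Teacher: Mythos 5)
Your proposal is right that the heart of the proof is strengthening ``safe'' to ``$\eps$-safe'' and re-verifying closure under the three operations; your candidate definition of $\eps$-safe (either $\eps$-free, or the $\eps$-leaf itself, or a $\cup$-node whose left child is the $\eps$-leaf and whose right child is a safe $\eps$-free node) is essentially the paper's definition. But there is a genuine gap in how you handle $\prod$.

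You claim that $\prod$ is unproblematic because ``a pure product node has output depth $0$ regardless,'' so a single fresh $\odot$-node suffices. This is wrong, and in fact $\prod$ is where all the work in the appendix goes. Take $v_1$ that is $\eps$-free and $v_2$ in your case (b), i.e.\ $v_2=\cup(\eps\text{-leaf},\,r(v_2))$. If you simply create $v'=v_1\odot v_2$, then $v'$ is not $\eps$-safe by your own definition: $\eps$-leaves are now reachable from $v'$ (so $v'$ is not $\eps$-free), $v'$ is not the $\eps$-leaf, and $v'$ is a product rather than a union, so it cannot fall into case (b) either. Worse, the $\eps$-leaf is buried arbitrarily deep: iterating products of such nodes (e.g.\ a chain $w_1=v_1\odot v_2$, $w_2=w_1\odot v_3$, \ldots where every $v_i$ admits $\eps$) creates an output tree for the single output $\eps$ whose size grows with the chain length, so enumerating $\eps$ no longer takes constant time. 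The size bound behind Proposition~\ref{ds:lindelay} ($|T|\le 2k\,|\mathsf{print}(T)|$) is simply false once $\mathsf{print}(T)$ can be empty, so ``Proposition~\ref{ds:lindelay} applies verbatim'' does not hold.

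What the paper actually does is factor $\eps$ out at every $\prod$ step. When $v_2$ is in case (b), it uses the distributivity $\L(v_1)\cdot\L(v_2)=\L(v_1)\cdot(\{\eps\}\cup\L(r(v_2)))=\L(v_1)\cup\bigl(\L(v_1)\cdot\L(r(v_2))\bigr)$ and implements the right-hand side with a small union-plus-product gadget over the $\eps$-free nodes $v_1$ and $r(v_2)$, keeping the reachable sub-DAG of the fresh node entirely $\eps$-free (and hence in your case (a)). Symmetric and doubly-$\eps$ cases are handled by similar gadgets, yielding nine cases in total for $\prod$ and nine for $\union$. The enumeration then does not call Proposition~\ref{ds:lindelay} on the whole $\eps$-ECS: for a case-(a) root it calls it directly; for the $\eps$-leaf root it emits $\eps$ in constant time; for a case-(b) root it first emits $\eps$ and then calls Proposition~\ref{ds:lindelay} on $r(v)$. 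To repair your argument you must replace the ``bare product node'' step with these distributivity gadgets and re-derive $\eps$-safety of the fresh node case by case; the rest of your plan (definition of $\eps$-safe, the $\union$ Louis-trick variant, unambiguity, $2$-boundedness) is sound.
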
  


\section{Evaluating \vptnames with \lindelay}\label{sec:eval}

\newcommand{\aS}{S}
\newcommand{\aT}{T}
\newcommand{\dets}{{\operatorname{det}}}
\newcommand{\Runs}{\operatorname{Runs}}

The goal of this section is to describe an algorithm that takes an I/O-unambiguous \vpt $\cT$ plus a stream $\Stream$, and enumerates the set $\br{\cT}(\Stream[1,n])$ for an arbitrary $n\geq 0$ with $\cO(|Q|^2 |\Delta|)$-update-time and output-linear delay.
We divide the presentation of the algorithm into two parts. The first part explains the determinization of a VPA, which is instrumental in understanding our update phase. The second part gives the algorithm and proves its correctness. 
Given that a neutral symbol $a$ can be represented as a pair $\op{a} \cdot \cl{a}$, in this section we present the algorithm and definitions without neutral letters, that is, the structured alphabet is $\Sigma = (\opS, \clS)$. Thus, from now on we use $a$ for denoting any symbol in $\opS \cup \clS$. 

\smallskip
\noindent \textbf{Determinization of visibly pushdown automata.} 
A significant result in Alur and Madhusudan's paper~\cite{AlurM04} that introduces \vpa was that one can always determinize them. We provide here an alternative proof for this result that requires a somewhat more direct construction. This determinization process is behind our update algorithm and serves to give some crucial notions of how it works. We start by providing the determinization construction, introducing some useful notation, and then giving some intuition.

Given a VPA $\cA = (Q, \Sigma, \Gamma, \Delta, \qinit, F)$, we define the following deterministic \vpa $\cA^\dets = (Q^\dets, q_0^\dets, \Gamma^\dets, \delta^\dets, F^\dets)$ with state set $Q^\dets = 2^{Q\times Q}$ and stack symbol set $\Gamma^\dets = 2^{Q\times \Gamma \times Q}$. The initial state is $q_0^\dets = \{(q,q)\mid q\in I\}$ and the set of final states is $F^\dets = \{\aS \in Q^\dets \mid \aS \cap (I\times F) \neq \emptyset\}$. 
Finally, we define the transition function $\delta^\dets$ such that if $\op{a}\in\opS$, then $\delta^\dets(\aS,\op{a}) = (\aS',\aT')$ where
$\aS' =  \{(q,q) \mid \exists p, p', \gamma. \ (p,p')\in \aS \, \wedge \,  (p',\op{a},q,\gamma)\in\Delta\}$ and $\aT' = \{ (p,\gamma, q) \mid \exists p'. \ (p,p')\in \aS  \, \wedge \,  (p',\op{a},q,\gamma)\in\Delta\}$; if $\cl{a}\in\clS$, then $\delta^\dets(\aS,\aT,\cl{a}) = \aS'$ where
$\aS' = \{ (p,q) \mid \exists p',q', \gamma. \ (p,\gamma,p')\in \aT \, \wedge \, (p',q')\in \aS \text \, \wedge \, (q',\cl{a},\gamma, q)\in\Delta \}$.


\newcommand{\clevel}{{\sf currlevel}}
\newcommand{\llevel}{{\sf lowerlevel}}

To understand the purpose of this construction, first we need to introduce some notation.
Fix a well-nested word $w = a_1 a_2 \cdots a_n$. A span $s$ of $w$ is a pair $\spanc{i}{j}$ of natural numbers $i$ and~$j$ with $1 \leq i \leq j \leq n+1$. 
We denote by $w[i,j\rangle$ the subword $a_i\cdots a_{j-1}$ of~$w$ and, when $i = j$, we assume that $w[i,j\rangle = \eps$.
Intuitively, spans are indexing $w$ with intermediate positions, like $\underset{\tiny\texttt 1}{}a_1 \underset{\tiny\texttt 2}{}a_2 \underset{\tiny\texttt 3}{} \ldots \underset{\tiny\texttt n}{} a_n \underset{\tiny\texttt{n+1}}{}$, where $i$ is between symbols $a_{i-1}$ and $a_i$. Then $\spanc{i}{j}$ represents an interval $\{i, \ldots, j\}$ that captures the subword $a_i \ldots a_{j-1}$.

Now, we say that a span $\spanc{i}{j}$ of $w$ is well-nested if $w\spanc{i}{j}$ is well-nested. Note that $\eps$ is well-nested, so $\spanc{i}{i}$ is a well-nested span for every $i$. 
For a position $k\in[1,n+1]$, we define the {\em current-level span} of $k$, $\clevel(k)$, as the well-nested span $\spanc{j}{k}$ such that $j = \min\{j' \mid \spanc{j'}{k} \text{ is well-nested}\}$. Note that $\spanc{k}{k}$ is always well-nested and thus $\clevel(k)$ is well defined. 
We also identify the {\em lower-level span} of $k$, $\llevel(k)$, defined as $\llevel(k) = \clevel(j-1) = \spanc{i}{j-1}$ whenever $\clevel(k) = \spanc{j}{k}$ and $j > 1$. 
In contrast to $\clevel(k)$, $\llevel(k)$ is not always well-defined given that it is ``one level below'' than $\clevel(k)$ and this may not exist. More concretely, for $\clevel(k) = \spanc{j}{k}$ and  $\llevel(k) = \spanc{i}{j-1}$, 

\noindent these spans will look as follows:
$$
\underset{\tiny\texttt 1}{} a_1 
\underset{\tiny\texttt 2}{} a_2 
\underset{\tiny\texttt 3}{} \ldots 
\op{a}_{i-1} \underset{\tiny\texttt{i}}{}  
\overbrace{a_{i} \ldots a_{j-2}}^{\llevel(k)} 
\underset{\tiny\texttt{j-1}}{} \op{a}_{j-1} 
\underset{\tiny\texttt{j}}{} 
\overbrace{a_{j} \ldots  a_{k-1}}^{\clevel(k)} 
\overset{\text{\LARGE $\downarrow$}}{\underset{\tiny\texttt{k}}{}} a_{k} \ldots
\underset{\tiny\texttt n}{} a_n \underset{\tiny\texttt{n+1}}{}
$$
As an example, consider the word $ 
\underset{\tiny\texttt 1}{} {\texttt (} \, 
\underset{\tiny\texttt 2}{} {\texttt (} \, 
\underset{\tiny\texttt 3}{} {\texttt )} \, 
\underset{\tiny\texttt 4}{} {\texttt (} \, 
\underset{\tiny\texttt 5}{} {\texttt (} \, 
\underset{\tiny\texttt 6}{} {\texttt )} \, 
\underset{\tiny\texttt 7}{} {\texttt )} \, 
\underset{\tiny\texttt 8}{} {\texttt )} \,
\underset{\tiny\texttt 9}{}$. 
The only well-nested spans besides the ones of the form $\spanc{i}{i}$ are $\spanc{1}{9}$, $\spanc{2}{4}$, $\spanc{2}{8}$, $\spanc{4}{8}$ and $\spanc{5}{7}$, therefore $\clevel(8) = \spanc{2}{8}$, and $\llevel(7) = \spanc{2}{4}$.

We are ready to explain the purpose of the determinization above. Let $w = a_1 a_2 \cdots a_n$ be a well-nested word and $\rho^\dets = (\aS_1, \tau_1) \trans{a_1} \ldots \trans{a_{k-1}} (\aS_k, \tau_k)$ be the (partial) run of $\cA^\dets$ until some $k$.
Furthermore, assume $\tau_k = \aT_k \cdot \tau$ for some $\aT_k \in \Gamma^\dets$ and $\tau \in (\Gamma^\dets)^*$.  
The connection between $\rho^\dets$ and the runs of $\cA$ over $a_1\ldots a_{k-1}$ is given by the following invariants:
\begin{enumerate}
	\item[(a)] $(p,q) \in \aS_k$ if, and only if, there exists a run $(q_1, \sigma_1) \trans{a_1} \ldots \trans{a_{k-1}} (q_k, \sigma_k)$ of $\cA$ over $a_1 \ldots a_{k-1}$ such that $q_j = p$, $q_k=q$, and $\clevel(k) = \spanc{j}{k}$. 
	\item[(b)] $(p,\gamma, q) \in \aT_k$ if, and only if, there exists a run $(q_1, \sigma_1) \trans{a_1} \ldots \trans{a_{k-1}} (q_k, \sigma_k)$ of $\cA$ over $a_1 \ldots a_{k-1}$ such that $q_i = p$, $q_j=q$, $\sigma_k = \gamma \sigma$ for some $\sigma$, and $\llevel(k) = \spanc{i}{j-1}$. 
\end{enumerate}
On one hand, (a) says that each pair $(p,q)\in \aS_k$ represents some non-deterministic run of $\cA$ over $w$ for which $q$ is the $k$-th state, and $p$ was visited on the step when the current symbol at the top of the stack was pushed. On the other hand, (b) says that $(p,\gamma,q)\in \aT_k$ represents some run of $\cA$ over $w$ for which $\gamma$ is at the top of the stack, $q$ was visited on the step when $\gamma$ was pushed, and $p$ was visited on the step when the symbol below $\gamma$ was pushed (see Figure~\ref{fig:delta-schema} (left)). More importantly, these conditions are exhaustive, that is, every run of $\cA$ over $a_1 \ldots a_{k-1}$ is represented by $\rho^\dets$. 


\begin{figure}[t]
	\centering
\begin{tikzpicture}[->,>=stealth',roundnode/.style={circle,draw,inner sep=1.2pt},squarednode/.style={rectangle,inner sep=3pt}]
		\node [squarednode] (0) at (0.1, 0.2) {};
		\node [squarednode] (1) at (0.5, 1) {$p$};
		\node [squarednode] (2) at (1.9, 1) {$q$};
		\node [squarednode] (3a) at (1.2, 2.7) {\scalebox{0.8}{$(p,\gamma, p') \in T_k$}};
		\node [squarednode] (3) at (2.4, 2) {$p'$};
		\node [squarednode] (4a) at (3.2, 2.7) {\scalebox{0.8}{$(p'\!,q') \in  S_k$}};
		\node [squarednode, inner sep=1pt] (4) at (3.6, 2) {$q'$};
		
		\draw (0) to node[left] {{\footnotesize {\sf push}\! $\delta$}} (1);
		\draw (2) to node[left] {{\footnotesize {\sf push}\! $\gamma$}} (3);
		\tikzset{}
		\draw[-,decorate,decoration={snake,amplitude=.4mm,segment length=2mm,post length=0mm,pre length=0mm}] (1) to (2);
		\draw[-,decorate,decoration={snake,amplitude=.4mm,segment length=2mm,post length=0mm,pre length=0mm}] (3) to (4);
		
		\draw[-,gray] (3a) to (3);	
		\draw[-,gray] (4a) to (4);
		
		\draw[-,decorate,decoration={brace, amplitude=5}] (1.8,0.6) -- node[below=1mm] {\scalebox{0.7}{$\llevel(k)$}} (0.6,0.6) ;
		\draw[-,decorate,decoration={brace, amplitude=5}] (3.5,1.3) -- node[below=1mm] {\scalebox{0.7}{$\clevel(k)$}} (2.5,1.3) ;
		
		\draw[-,dashed] (4.2,0) -- (4.2,3);

	\end{tikzpicture} \ \
\begin{tikzpicture}[->,>=stealth',roundnode/.style={circle,draw,inner sep=1.2pt},squarednode/.style={rectangle,inner sep=3pt}]
	\node [squarednode] (title) at (0.2,2.5) {{\em Open}:};
	\node [squarednode] (0) at (0.1, 0.2) {};
	\node [squarednode] (1) at (0.5, 1) {$p$};
	\node [squarednode] (2) at (1.8, 1) {$p$};
	\node [squarednode] (2p) at (1.95, 1.05) {$'$};
	\node [squarednode] (3) at (2.3, 2) {$q$};
	\draw (0) to node[left] {{\footnotesize {\sf push}\! $\delta$}} (1);
	\draw (2) to node[left] {{\footnotesize {\sf push}\! $\gamma$}} (3);
	\tikzset{decoration={snake,amplitude=.4mm,segment length=2mm,
			post length=0mm,pre length=0mm}}
	\draw[-,decorate] (1) to (2);
\end{tikzpicture} \!\!\!\!\!\!
\begin{tikzpicture}[->,>=stealth',roundnode/.style={circle,draw,inner sep=1.2pt},squarednode/.style={rectangle,inner sep=3pt}]
	\node [squarednode] (title) at (0.2,2.5) {{\em Close}:};
	\node [squarednode] (0) at (0.1, 0.2) {};
	\node [squarednode] (1) at (0.5, 1) {$p$};
	\node [squarednode] (2) at (1.9, 1) {$\bigcirc$};
	\node [squarednode] (3) at (2.4, 2) {$p'$};
	\node [squarednode] (4) at (3.6, 2) {$q'$};
	\node [squarednode] (5) at (4.1, 1) {$q$};
	
	\draw (0) to node[left] {{\footnotesize {\sf push}\! $\delta$}} (1);
	\draw (2) to node[left] {{\footnotesize {\sf push}\! $\gamma$}} (3);
	\draw (4) to node[right] {{\footnotesize {\sf pop}\! $\gamma$}} (5);
	\tikzset{}
	\draw[-,decorate,decoration={snake,amplitude=.4mm,segment length=2mm,post length=0mm,pre length=0mm}] (1) to (2);
	\draw[-,decorate,decoration={snake,amplitude=.4mm,segment length=2mm,post length=0mm,pre length=0mm}] (3) to (4);
\end{tikzpicture}
	\caption{Left: An example run of some \vpa $\cA$ at step $k$. Right: Illustration of two nondeterministic runs for some \vpa $\cA$, as considered in the determinization process.}
	\label{fig:delta-schema}
	\vspace{-3mm}
\end{figure}
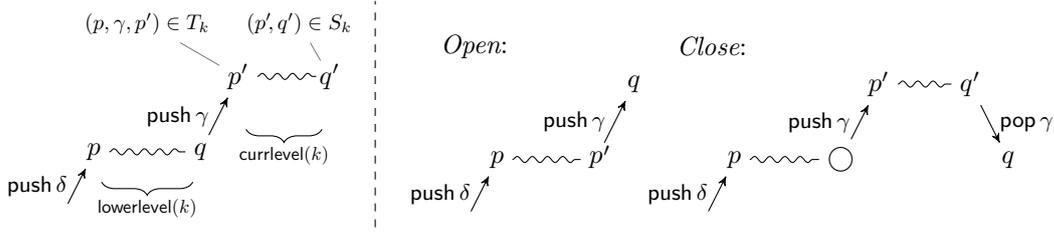



By these two invariants, the correctness of $\cA^\dets$ easily follows and the reader can get some intuition behind $\delta^\dets(\aS,\op{a})$ and $\delta^\dets(\aS,\aT,\cl{a})$ (see Figure~\ref{fig:delta-schema} (right) for a graphical description). 
Indeed, the most important consequence of these two invariants is that a tuple $(q_j,q_k) \in \aS_k$ represents the interval of some run over $w\spanc{j}{k}$ with $\clevel(k) = \spanc{j}{k}$ and the tuple $(q_i, \gamma, q_j) \in \aT_k$ represents the interval of some run over $w\spanc{i}{j-1}$ with $\llevel(k) = \spanc{i}{j-1}$, i.e., the level below. 
In other words, the configuration $(\aS_k, \tau_k)$ of $\cA^\dets$ forms a succinct representation of all the non-deterministic runs of $\cA$. This is the starting point of our update algorithm, that we discuss next.

\smallskip
\noindent \textbf{The streaming evaluation algorithm.} In Algorithm~\ref{alg:preprocessing} we present the update phase for solving the streaming version of $\enumvpt$. The main procedure is {\sc UpdatePhase}, that receives an I/O-unambiguous \vpt $\cT = (Q, \Sigma, \Gamma, \oalph, \Delta, \qinit, F)$ and a stream~$\Stream$, reads the next $k$-th symbol and computes the set of outputs $\br{\cT}(\Stream[1,k])$.
More specifically, it constructs an \dsepsabbr~$\cD$ and a vertex $v_{\text{out}}$ such that $\L_{\cD}(v_{\text{out}}) = \sem{\cT}(\Stream[1,k])$ if $\Stream[1,k]$ is well-nested and $\emptyset$ otherwise.
After the {\sc UpdatePhase} procedure is done, we can enumerate $\L_{\cD}(v_{\text{out}})$ with output-linear delay by calling the enumeration phase, that is, by applying Theorem~\ref{theo:data-structure-eps}. 

Towards this goal, in Algorithm~\ref{alg:preprocessing} we make use of the following data structures: First of all, we use an \dsepsabbr $\cD = (\Sigma, V, I, \ell, r, \lambda)$, nodes $v \in V$, and the functions $\add$, $\union$, and $\prod$ over $\cD$ and $v$ (see Section~\ref{sec:ds}). For the sake of simplification, we overload the notation of these operators slightly so that if $v = \emptyset$, then $\union(\cD,v,v') = \union(\cD,v',v) = (\cD,v')$. 
We use a hash table $\aS$ which indexes nodes $v$ in $\cD$ by pairs of states $(p,q) \in Q\times Q$. We denote the elements of $\aS$ as ``$(p,q) : v$'' where $(p,q)$ is the index and $v$ is the content. Furthermore, we write $\aS_{p,q}$ to access the node $v$. We also use a stack $\aT$ that stores hash tables: each element is a hash table which indexes vertices $v$ in $\cD$ by triples $(p, \gamma, q) \in Q \times \Gamma \times Q$. 
We assume that $\aT$ has the standard stack methods $\push$ and $\pop$ where if $\aT = t_k\cdots\,t_1$, then $\push(\aT,t) = t\, t_k\cdots\,t_1$ and $\pop(\aT) = t_{k-1}\cdots\,t_1$. We write $\emptyset$ for denoting the empty stack or for checking if $T$ is empty.
Similarly to $\aS$, we use the notation $\aT_{p,\gamma,q}$ to access the nodes in the topmost hash-table in $\aT$ (i.e. $\aT$ is a stack of hash tables). 
We assume that accessing a non-assigned index in these hash tables returns the empty set.
All variables (e.g., $S$, and $T$) are defined globally in Algorithm~\ref{alg:preprocessing} and they can be accessed by any of the subprocedures. 
given that we use the RAM model (see Section~\ref{sec:prelim}), each operation over hash tables or stacks takes constant time. 

Algorithm~\ref{alg:preprocessing} builds the \dsepsabbr $\cD$ incrementally, reading $\Stream$ one letter at a time by calling $\yield{\Stream}$ and keeping a counter $k$ for the position of the current letter. 
For every $k \in [1,n+1]$, {\sc UpdatePhase} builds the $k$-th iteration of table $\aS$ and stack $\aT$, which we note as $\aS^k$ and $\aT^k$ respectively. 
Before {\sc UpdatePhase} is called for the first time, it runs {\sc Intialize} (lines 1-4) to set the initial values of $k$, $\cD$, $\aS$, and $\aT$.
We consider the initial $\aS$ and $\aT$ as the $1$-st iteration, defined as $\aS^1 = \{(q,q): v_\eps \mid q \in I\}$ and $\aT^1 = \emptyset$ (i.e. the empty stack) where $v_\eps$ is a node in $\cD$ such that $\L_{\cD}(v_\eps) = \{\eps\}$ (lines 3-4).
In the $k$-th iteration, depending on whether the current letter is an open symbol or a close symbol, the {\sc OpenStep} or {\sc CloseStep} procedures are called, updating $\aS^{k-1}$ and $\aT^{k-1}$ to $\aS^{k}$ and $\aT^{k}$, respectively.  
More specifically, {\sc UpdatePhase} adds nodes to $\cD$ such that the nodes in $\aS^k$ represent the runs over $w\spanc{j}{k}$ where $\clevel(k)=\spanc{j}{k}$, and the nodes in the topmost table in $\aT^k$ represent the runs over $w\spanc{i}{j-1}$ where $\llevel(k) = \spanc{i}{j-1}$. 
Moreover, for a given pair $(p,q)$, the node $\aS^k_{p,q}$ represents all runs over $w\spanc{j}{k}$ with $\clevel(k)=\spanc{j}{k}$ that start on $p$ and end on~$q$. For a given triple $(p,\gamma, q)$ the node $\aT^k_{p,\gamma, q}$ represents all runs over $w\spanc{i}{j-1}$ with $\llevel(k) = \spanc{i}{j-1}$ that start on $p$, and end on~$q$ right after pushing $\gamma$ onto the stack. 
Here, the intuition gained in the determinization of VPA is crucial. Indeed, table $\aS^k$ and stack $\aT^k$ are the mirror of the configuration $(\aS_k, \tau_k)$ of~$\cA^\dets$ (recall invariants (a) and (b)).


\begin{algorithm}[t]
\caption{The update phase of the streaming evaluation algorithm for $\enumvpt$ given an I/O-unambiguous VPT $\cT = (Q, \Sigma, \Gamma, \oalph, \Delta, \qinit, F)$ and a stream $\Stream$.}\label{alg:preprocessing}
\begin{varwidth}[t]{0.537\textwidth}
\begin{algorithmic}[1]
\Procedure{{Initialize}}{$\cT, \Stream$}
\State $k\gets 1$, $\D\gets\emptyset$
\State $(\D,v_\epsilon)\gets\add(\D,\eps)$
\State $\aS\gets \{(q,q): v_\epsilon \mid q \in I\}$, $\aT \gets \emptyset$
\EndProcedure	

\State
	
\Procedure{{UpdatePhase}}{$\cT, \Stream$}
\State $a \gets \yield{\Stream}$
 \If{$a \in\opS$}
  \State $\D \gets $ {\sc OpenStep}$(\D, a,k)$  
 \ElsIf{$a \in\clS$}
  \State $\D \gets $ {\sc CloseStep}$(\D, a,k)$
 \EndIf 
 \State $k\gets k+1$
 \State $v_{\text{out}} \gets \emptyset$
 \If{$T = \emptyset$}
	 \ForEach{$p\in I, q\in F$ s.t.\,$\aS_{p,q}\!\neq\emptyset$}
	 \State $(\D,v_{\text{out}}) \gets \union(\D,v_{\text{out}},\aS_{p,q})$
 \EndFor
 \EndIf 
 \State {\sc EnumerationPhase}$(\cD, v_{\text{out}})$
\EndProcedure

\State

\Procedure{{IfProd}}{$\cD, v, \oout, k$}
\If{$\oout \neq \eps$}
\State $(\D',v') \gets \add(\D,(\oout, k))$
\State $(\D',v') \gets \prod(\D',v,v')$
\Else
\State $(\D',v') \gets (\cD, v)$
\EndIf
\State \Return $(\D',v')$
\EndProcedure
\algstore{myalg}
\end{algorithmic}	
\end{varwidth}
\begin{varwidth}[t]{0.6\textwidth}
\begin{algorithmic}[1]
\algrestore{myalg}
\Procedure{{OpenStep}}{$\D, \op{a},k$}
\State $\aS'\gets\emptyset$, $\aT\gets\push(\aT,\emptyset)$
\For{$p\in Q \text{ and } (p',\op{a},\oout,q,\gamma)\in\Delta$}
\If{$\aS_{p,p'}\neq\emptyset$}
\If{$\aS'_{q,q} = \emptyset$}
\State $(\D,v_{\eps}) \gets \add(\D,\eps)$
\State $\aS'_{q,q}\gets v_{\eps}$
\EndIf

\State $v\gets \aS_{p,p'}$
\State $(\D,v) \gets $ {\sc IfProd}$(\cD, v, \oout, k)$
\State $(\D,v) \gets \union(\D,v,\aT_{p,\gamma,q})$
\State $\aT_{p,\gamma,q}\gets v$
\EndIf
\EndFor
\State $\aS \gets \aS'$
\State \Return $\D$
\EndProcedure

\State

\Procedure{{CloseStep}}{$\D, \cl{a},k$}
\State $\aS'\gets\emptyset$
\For{$p,p'\in Q \text{ and } (q',\cl{a},\oout,\gamma,q)\in\Delta$}
\If{$\aS_{p',q'}\!\neq\emptyset \text{ and }\aT_{p,\gamma,p'}\!\neq \emptyset$}
\State $(\D,v)\gets\prod(\D,\aT_{p,\gamma,p'},\aS_{p',q'})$ \ \ \ \ \ \ \ \ \ \ \ \ \ 
\State $(\D,v) \gets ${\sc IfProd}$(\cD, v, \oout, k)$
\State $(\D,v)\gets\union(\D,v,\aS'_{p,q})$
\State $\aS'_{p,q}\gets v$
\EndIf
\EndFor
\State $\aT\gets\pop(\aT)$
\State $\aS \gets \aS'$
\State \Return $\D$
\EndProcedure

\end{algorithmic}
\end{varwidth}
\end{algorithm}


Before formalizing these notions, we will describe in more detail what the procedures {\sc OpenStep} and {\sc CloseStep} exactly do. Recall that the operation $\add(\D, a)$ simply creates a node in $\D$ labeled as $a$; the operation $\prod(\cD,v_1,v_2)$ returns a pair $(\cD',v')$ such that $\L_{\cD'}(v') = \L_{\cD}(v_1) \cdot \L_{\cD}(v_2)$; and the operation $\union(\cD,v_3,v_4)$ returns a pair $(\cD',v')$ such that $\L_{\cD'}(v') = \L_{\cD}(v_3)\cup \L_{\cD}(v_4)$. To improve the presentation of the algorithm, we include a simple procedure called \textsc{IfProd} (lines 19-25). Basically, this procedure receives a node $v$, an output symbol $\oout$, and a position $k$, and computes $(\cD', v')$ such that $\L_{\cD'}(v') = \L_{\cD}(v) \cdot \{(\oout, k)\}$ if $\oout \neq \eps$, and $\L_{\cD'}(v') = \L_{\cD}(v)$ otherwise.

In {\sc OpenStep}, $\aS^k$ is created (i.e. $S'$), and an empty table is pushed onto $\aT^{k-1}$ to form~$\aT^k$~(line~27). 
Then, all nodes in $\aS^{k-1}$ (i.e. $S$) are checked to see if the runs they represent can be extended with a transition in $\Delta$ (lines 28-29). 
If this is the case (lines 30 onwards), a node $v_\eps$  with the $\eps$-output is added in $\aS^k$ to start a new level (lines 30-32). 
Then, if the transition had a non-empty output, the node $\aS^k_{p,p'}$ is connected with a new label node to form the node $v$ (lines 33-34). 
This node is stored in $\aT^k_{p,\gamma,q}$, or united with the node that was already present there (lines 35-36).

In {\sc CloseStep}, $\aS^k$ is initialized as empty (line 41). 
Then, the procedure looks for all of the valid ways to join a node in $\aT^{k-1}$, a node in $\aS^{k-1}$, and a transition in $\Delta$ to form a new node in $\aS^k$. 
More precisely, it looks for quadruples $(p,\gamma,p',q')$ for which $\aT^{k-1}_{p,\gamma,p'}$ and $\aS^{k-1}_{p',q'}$ are defined, and there is a close transition that starts on $q'$ that reads $\gamma$ (lines 42-43). 
These nodes are joined and connected with a new label node if it corresponds (lines 44-45), and stored in~$\aS^k_{p,q}$ or united with the node that was already present there (lines 46-47). Finally, the top of the stack $T$ is popped after all tuples $(p,\gamma,p',q')$ are checked (line 48).

As it was already mentioned, in each step the construction of $\cD$ follows the ideas of the determinization of a visibly pushdown automata.
As such, Figure~\ref{fig:delta-schema} also aids to illustrate how the table $\aS^k$ and the top of the stack $\aT^k$ are constructed.

The way how the table $\aS^k$ and the stack $\aT^k$ are constructed is formalized in the following result. Recall that a run of $\cT$ over a well-nested word $w = a_1\cdots a_n$ is a sequence of the form
$
\rho = (q_1, \sigma_1) \xrightarrow{a_1/\!\ooutscr_1} \ldots  \xrightarrow{a_n/\!\ooutscr_n} (q_{n+1}, \sigma_{n+1})
$.
Given a span $\spanc{i}{j}$, define a subrun of $\rho$ as a subsequence
$
\rho\spanc{i}{j} = (q_i, \sigma_i) \xrightarrow{a_{i}/\!\ooutscr_{i}}  \ldots  \xrightarrow{a_{j-1}/\!\ooutscr_{j-1}} (q_j, \sigma_j).
$
We also extend the function $\out$ to receive a subrun $\rho\spanc{i}{j}$ in the following way:
$
\out(\rho\spanc{i}{j}) \ = \  \out(\oout_{i},{i}) \cdot \ldots \cdot \out(\oout_{j-1}, j-1)
$.
Finally, define $\Runs(\cT,w)$ as the set of all runs of $\cT$ over $w$. 

\begin{lemma}\label{vpt:steps}
	Let $\cT$ be a \vpt and $w = a_1\cdots a_n$ be a well-nested word. While running the procedure {\sc UpdatePhase} of Algorithm~\ref{alg:preprocessing}, for every $k\in[1,n+1]$, every pair of states $p,q$ and stack symbol $\gamma$ the following hold:
	\begin{enumerate}
		\item $\L_{\D}(\aS^k_{p,q})$ has exactly all sequences $\out(\rho\spanc{j}{k})$ such that $\rho\in\Runs(\cT,w\spanc{1}{k})$, $\clevel(k) = \spanc{j}{k}$, and $\rho\spanc{j}{k}$ starts on $p$ and ends on $q$.
		\item If $\llevel(k)$ is defined,  then $\L_{\D}(\aT^k_{p,\gamma,q})$ has exactly all sequences $\out(\rho\spanc{i}{j})$ such that $\rho\in\Runs(\cT,w\spanc{1}{j})$, $\llevel(k) = \spanc{i}{j-1}$, and $\rho\spanc{i}{j}$ starts on $p$, ends on $q$, and the last symbol pushed onto the stack was $\gamma$.
	\end{enumerate}
\end{lemma}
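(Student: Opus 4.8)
The natural approach is induction on $k$, proving statements (1) and (2) simultaneously, since the two tables $\aS^k$ and $\aT^k$ are updated in tandem by the two procedures. First I would fix the base case $k=1$: here $\clevel(1) = \spanc{1}{1}$, $w\spanc{1}{1} = \eps$, so the only subruns are trivial one-configuration runs $(q_1,\eps)$ with $q_1 \in I$, whose output is $\eps$; this matches $\aS^1 = \{(q,q): v_\eps \mid q \in I\}$ with $\L_{\D}(v_\eps) = \{\eps\}$. For (2), $\llevel(1)$ is undefined, so there is nothing to prove. I would also note as a running observation that the invariants, combined with I/O-unambiguity of $\cT$, guarantee that each $\union$ and $\prod$ call in Algorithm~\ref{alg:preprocessing} is applied to sets whose represented languages are disjoint (resp.\ uniquely decomposable), so the \dsepsabbr{} stays unambiguous and the operations are legitimate; this should be folded in as part of the induction rather than proved separately afterwards.

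For the inductive step there are two cases according to whether $a_k \in \opS$ or $a_k \in \clS$, handled by {\sc OpenStep} and {\sc CloseStep} respectively. In the open case, reading $\op{a}_k$ starts a fresh level, so $\clevel(k+1) = \spanc{k}{k+1}$ and the relevant subruns over $w\spanc{k}{k+1}$ are single transitions $(p', \op{a}_k, \oout, q, \texttt{x})$; the procedure seeds $\aS'_{q,q}$ with $v_\eps$ (the "empty so far" at the new level) — wait, more carefully: after the open step $\clevel(k+1)=\spanc{k}{k+1}$ captures the single letter $a_k$, and the subrun at this level from $q$ to $q$ is empty while the transition's output contribution must be recorded; I would check that lines 31--37 correctly place $\aS^k_{p,p'} \cdot \{(\oout,k)\}$ (via {\sc IfProd}) into the new top table $\aT^{k+1}_{p,\texttt{x},q}$, which by the induction hypothesis for (1) at step $k$ is exactly $\{\out(\rho\spanc{i}{k}) \cdot \out(\oout,k)\}$ ranging over runs with $\clevel(k)=\spanc{i}{k}$ starting at $p$ ending at $p'$ — and this is precisely the characterization of $\llevel(k+1)$ demanded by (2), since $\llevel(k+1) = \clevel(k) = \spanc{i}{k} = \spanc{i}{(k+1)-1}$. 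The bookkeeping to watch is the index shift ($j-1$ versus $k$) and the fact that the output of the open transition itself belongs to the lower-level subrun, not the current one.

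In the close case, reading $\cl{a}_k$ closes the current level: if $\clevel(k) = \spanc{j}{k}$ then $\clevel(k+1) = \spanc{j'}{k+1}$ where $\spanc{j'}{j-1}$ was $\llevel(k)$, i.e.\ we merge the just-finished block into the level below. The procedure iterates over quadruples $(p,\texttt{x},p',q')$ and a close transition $(q',\cl{a}_k,\oout,\texttt{x},q)$; by the induction hypotheses, $\aT^k_{p,\texttt{x},p'}$ holds all $\out(\rho\spanc{i}{j})$ for lower-level subruns ending at $p'$ having pushed $\texttt{x}$, and $\aS^k_{p',q'}$ holds all $\out(\rho\spanc{j}{k})$ for current-level subruns from $p'$ to $q'$; the product concatenates them, {\sc IfProd} appends $(\oout,k)$, and the union over all valid $\texttt{x},p',q'$ accumulates into $\aS^{k+1}_{p,q}$ — I would verify this equals $\bigcup \out(\rho\spanc{i}{k+1})$ over all runs with $\clevel(k+1)=\spanc{i}{k+1}$ from $p$ to $q$, using that any such subrun decomposes uniquely as lower-level part, open transition (implicit in the $\texttt{x}$ match), middle part, close transition. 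Statement (2) at step $k+1$ after a close follows because $\aT$ is popped (line 31), exposing the table that was the lower level at step $j-1$, and one invokes the induction hypothesis at the earlier step; the subtle point is that this table has not been touched since it was pushed, so its contents still describe $\llevel(k+1)$ correctly — a monotonicity/persistence remark. The main obstacle I anticipate is exactly this last point: carefully arguing that popping $\aT$ restores the right invariant requires either a secondary induction on nesting depth or an explicit lemma that $\aT^k$'s non-top entries are "frozen" descriptions of spans strictly below $\clevel(k)$, and getting the span arithmetic ($\llevel$, $\clevel$, the $j$-$j'$ relationship across a matched bracket pair) exactly right is where the bulk of the careful (if routine) work lies.
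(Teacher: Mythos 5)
Your proposal follows essentially the same induction-on-$k$ argument as the paper, including the key observation that matched push/pop operations leave the non-top entries of $\aT$ frozen, so the earlier invariant can be invoked after a pop. The only slip is the claim that $\clevel(k+1) = \spanc{k}{k+1}$ after an open symbol — a lone open is not well-nested, so the correct value is $\spanc{k+1}{k+1}$ — but your subsequent reasoning (the empty subrun at the new level, the open transition's output belonging to the lower level, and $\llevel(k+1)=\clevel(k)$) already uses the correct value, so the argument is unaffected.
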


Since $w$ is well nested, then  $\clevel(|w|+1) = \spanc{1}{|w|+1}$, and so, the lemma implies that the nodes in $\aS^{|w|+1}$ represent all runs of $\cT$ over $w$. 
Then, whenever $\Stream[1, k]$ is well-nested, the stack $T$ is empty (i.e., $T = \emptyset$) and there may be something to enumerate (line 14). 
By taking the union of all pairs
in $\aS^{k+1}$ that represent accepting runs (as is done in lines 15-16), we can conclude the following result:


\begin{theorem}\label{eval:prep}
	Given a \vpt $\cT$ and a stream $\Stream$, {\sc UpdatePhase}$(\cT, \Stream)$ fulfils the conditions of a streaming evaluation algorithm and, after reading the $k$-th symbol, produces a pair $(\cD,v_{\operatorname{out}})$ such that $\L_{\cD}(v_{\operatorname{out}}) = \sem{\cT}(\Stream[1,k])$.
\end{theorem}

At this point we address the fact that $\cD$ needs to be unambiguous in order to enumerate all the outputs from $(\cD,v_{\text{out}})$ without repetitions.
This is guaranteed, essentially, by the fact that $\cT$ is I/O-unambiguous as well.
Indeed, the previous result holds even if $\cT$ is not I/O-unambiguous.
The next result guarantees that the output can be enumerated efficiently.

\begin{lemma}\label{eval:unambiguous}
	Let $\cT$ be an I/O-unambiguous \vpt. While running {\sc UpdatePhase} procedure of Algorithm~\ref{alg:preprocessing}, the \dsepsabbr $\cD$ is unambiguous at every step.
\end{lemma}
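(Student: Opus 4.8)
\textbf{Proof plan for Lemma~\ref{eval:unambiguous}.}

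The plan is to prove that $\cD$ stays unambiguous throughout the run of {\sc Preprocessing} by induction on the number of operations applied to $\cD$, maintaining a strengthened invariant that couples the contents of $\aS$ and $\aT$ with their semantics from Lemma~\ref{vpt:steps}. Recall that unambiguity of an $\eps$-ECS has two parts: (1) for every union node the languages of its two children are disjoint, and (2) for every product node every word in its language decomposes uniquely across the two children. Since $\add$ always produces a fresh leaf (trivially fine), the only nodes that can violate unambiguity are those created by $\union$ and $\prod$ (including the $\prod$ inside \textsc{IfProd} and the three union nodes hidden in the $\union$ gadget of Figure~\ref{fig-union-def}). So I would go through exactly these call sites in {\sc OpenStep}, {\sc CloseStep}, and lines 11--13 and check the two conditions, using the characterizations of $\L_\D(\aS^k_{p,q})$ and $\L_\D(\aT^k_{p,{\tt x},q})$ as sets of subrun-outputs.

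First, for the $\prod$ calls: in \textsc{IfProd} the right child is a brand-new leaf labelled $(\oout,k)$, so every word in the product has the form $u\cdot(\oout,k)$ and the split point is forced — condition (2) is immediate. For the $\prod(\D,\aT_{p,{\tt x},p'},\aS_{p',q'})$ in {\sc CloseStep}, I would argue that a word $\out(\rho'\spanc{i}{j})\cdot\out(\rho''\spanc{j}{k})$ cannot be re-split: the first factor is an output of a subrun over the lower level $w\spanc{i}{j-1}$ and the second of a subrun over $w\spanc{j}{k}$, and the position index $k'$ appearing in each output element $(\oout,k')$ unambiguously tells us which span a symbol came from (positions in $\spanc{i}{j-1}$ versus positions in $\spanc{j}{k}$ are disjoint). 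Here the key is precisely that our VPT emits \emph{positional} output elements $(\oout,k)$ rather than bare symbols; this is why the lemma needs I/O-unambiguity of $\cT$ but the positional encoding carries the decomposition uniqueness essentially for free. If I want to avoid relying on the position index, the fallback is to use I/O-unambiguity of $\cT$ directly: if a word admitted two decompositions, both halves would be realized by subruns, and gluing them would yield two distinct accepting runs of $\cT$ producing the same output, contradicting I/O-unambiguity — but one has to be careful that the glued runs are genuinely accepting and distinct, which needs the state-boundary bookkeeping from Lemma~\ref{vpt:steps}.

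Second, for the $\union$ calls: every union performed in the algorithm combines a freshly-built node $v$ (coming from the current transition $(q',\cl{a},\oout,{\tt x},q)\in\Delta$ or $(p',\op{a},\oout,q,{\tt x})\in\Delta$) with the node already stored at the same index $\aS'_{p,q}$ or $\aT_{p,{\tt x},q}$, which accumulates contributions from previously-processed transitions into the same index. I would show the two child-languages are disjoint by observing that a word in $\L_\D(v)$ determines the transition that produced it: since $\cT$ is I/O-deterministic along the fixed state-pair endpoints — here the relevant statement from Lemma~\ref{vpawo:det}/I/O-unambiguity is that there is at most one accepting run per output, hence the last transition read on position $k$ with fixed source and target states is determined by the output word — two distinct transitions contributing to the same index $(p,q)$ (resp. $(p,{\tt x},q)$) produce disjoint sets of subrun-outputs. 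For the gadget of Figure~\ref{fig-union-def} I would note that it is already shown in Section~\ref{sec:ds} to preserve semantics and $2$-boundedness; I need only check it preserves unambiguity, which follows because the three new union nodes partition the combined language exactly as $\L_\D(\ell(v_3))\uplus\L_\D(\ell(v_4))\uplus\bigl(\L_\D(r(v_3))\uplus\L_\D(r(v_4))\bigr)$, and each of the four pieces is pairwise disjoint by the inductive hypothesis applied to $v_3$ and $v_4$ (which were unambiguous union nodes, so $\ell$ and $r$ languages were disjoint) together with the disjointness of $\L_\D(v_3)$ and $\L_\D(v_4)$ just established. Finally I would note the two nodes $v_\eps$ introduced in {\sc OpenStep} at a fresh index $\aS'_{q,q}$ are leaves and never merged with anything non-disjoint, since no run-output starting and ending at $q$ over an empty current level other than $\eps$ exists at that point.

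The main obstacle I expect is the disjointness argument for the $\union$ nodes: it forces me to pin down, via Lemma~\ref{vpt:steps}, exactly which run a given output word in a merged node corresponds to, and to rule out two different transitions (or two different lower-level/current-level splits) yielding the same output — i.e. to transfer I/O-unambiguity of $\cT$ through the span-decomposition bookkeeping. The cleanest route is probably to strengthen the induction to: ``for every $k$, every $(p,q)$, and every $y\in\L_\D(\aS^k_{p,q})$ there is exactly one subrun $\rho\spanc{j}{k}$ with $\clevel(k)=\spanc{j}{k}$, source $p$, target $q$, and $\out(\rho\spanc{j}{k})=y$,'' and symmetrically for $\aT$; unambiguity of $\cD$ then drops out as a corollary of this bijection-style invariant together with the structural checks above.
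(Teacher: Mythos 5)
Your plan is essentially the same approach the paper takes. Both proofs rest on the same two core observations: (i) for product nodes created in \textsc{CloseStep}, the positional tags $(\oout,\ell)$ come from disjoint position ranges ($\llevel$ versus $\clevel$), so the split point in any concatenation is forced; and (ii) for union nodes created in \textsc{OpenStep}/\textsc{CloseStep}, two overlapping words would yield two distinct subruns (differing at least in the intermediate state $p'$ or the transition used) with identical outputs, contradicting I/O-unambiguity of~$\cT$. The paper phrases this as a proof by contradiction and frontloads a simplifying assumption that $\cT$ is ``I/O-unambiguous on subruns as well''; your strengthened inductive invariant (a bijection between subruns and words of $\L_\D(\aS^k_{p,q})$) is the same device stated as a positive invariant rather than as an assumption, and if anything is the more careful formulation. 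You also explicitly handle the union nodes hidden inside the gadgets of the $\union$/$\prod$ operations, which the paper dismisses in one line (``by construction''); your treatment there is a small improvement. The only substantive item your sketch leaves implicit, which you would want to spell out in a full write-up, is the trimming argument: a subrun that contributes to $\aS^k_{p,q}$ or $\aT^k_{p,{\tt x},q}$ must be extendable (via the well-nested structure and identical stack content at the span endpoints) to a full accepting run so that the contradiction against I/O-unambiguity of $\cT$ over complete words actually fires — this is exactly the bookkeeping you flag at the end as the main obstacle, and it is where the real work of the lemma lies.
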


The complexity of this algorithm can be easily deduced from the fact that the \dsepsabbr operations we use take constant time (Theorem~\ref{theo:data-structure-eps}). For a \vpt $\cT = (Q, \Sigma, \Gamma, \oalph, \Delta, \qinit, F)$, in each of the calls to {\sc OpenStep}, lines 29-36 perform a constant number of instructions, and they are visited at most $\vert Q\vert\vert\Delta\vert$ times. In each of the calls to {\sc CloseStep}, lines 43-47 perform a constant number of instructions, and they are visited at most $\vert Q\vert^2\vert\Delta\vert$ times. 
Combined with Theorem~\ref{eval:prep}, Lemma~\ref{eval:unambiguous}, and Theorem~\ref{theo:data-structure-eps}, this proves our main result (i.e. Theorem~\ref{theo:main}).


%
%
%
%
%
%
%


\section{Future work}\label{sec:conclusions}

This paper offers several directions for future work. One direction is to find a streaming evaluation algorithm with polynomial update-time for non-deterministic VPT (i.e., in the size of the VPT). In~\cite{AmarilliBMN19}, the authors provided a polynomial-time offline algorithm for non-deterministic word transducers (called vset automata). They extended this result to trees in~\cite{AmarilliBMN19pods}. One could use these techniques in Algorithm~\ref{alg:preprocessing}; however, it is unclear how to extend ECS to deal with ambiguity in a natural way. Regarding space resources, another direction is to find an ``instance optimal'' streaming evaluation algorithm for VPT. As we mentioned, this problem generalizes the weak evaluation problem stated in~\cite{SegoufinV02}, given that it also considers the space to represent the output compactly. Finally, it would be interesting to explore practical implementations. Our view is that the data structure and algorithm presentation aid in reaching this goal, and it leaves space for suitable optimizations.


\bibliography{biblio}

\newpage
\onecolumn
\appendix

\newcommand{\atitle}{\ref{sec:vpawo}}
\section{Proofs from Section~\atitle}\label{sec:appendixvpawo}

\subsection{Proof of Proposition \ref{prop:msovpt}}

We encode words as logical structures as stated in the paper.
In the following, we will define the semantics of $\msom$ in a somewhat different, and more precise way so we can provide a clearer, and more formal statement of the proposition. To this end, we will also show how to encode the output sets to establish an exact equivalence between the logic and our transducer model.

Let $\varphi$ be a $\msom$ formula. We write $\varphi(\bar{x}, \bar{X})$ where $\bar{x}$ and $\bar{X}$ are the sets of free first-order and monadic second-order variables of $\varphi$, respectively. 
An assignment $\sigma$ for $w$ is a function $\sigma\colon \bar{x}\cup \bar{X}\to 2^{[1,n]}$ such that $|\sigma(x)| = 1$ for every $x \in \bar{x}$ (note that we treat first-order variables as a special case of monadic second-order variables). 
As usual, we denote by $\textsf{dom}(\sigma) = \bar{x}\cup \bar{X}$ the domain of the function $\sigma$. 
Then we write $(w, \sigma) \models \varphi(\bar{x}, \bar{X})$ when $\sigma$ is an assignment over $w$, $\textsf{dom}(\sigma) = \bar{x}\cup \bar{X}$, and $w$ satisfies $\varphi(\bar{x}, \bar{X})$ when each variable in $\bar{x}\cup \bar{X}$ is instantiated by $\sigma$. 
Given a formula $\varphi(\bar{x},\bar{X})$, we define $\sem{\varphi}(w) = \{\sigma \mid (w, \sigma) \models \varphi(\bar{x},\bar{X})\}$. 
For the sake of simplification, from now on we will only use $\bar{X}$ to denote the free variables of $\varphi(\bar{X})$ and use $X \in \bar{X}$ for an first-order or monadic second-order variable.

For any assignment $\sigma$ over $w$, we define the support of $\sigma$, denoted by $\textsf{supp}(\sigma)$, as the set of positions mentioned in $\sigma$; formally, $\textsf{supp}(\sigma) = \{i \mid \exists v \in  \textsf{dom}(\sigma)\text{ s.t. } i \in  \sigma(v)\}$. 
Furthermore, we encode assignments as sequences over the support as follows:
Let $\textsf{supp}(\sigma) = \{i_1,\ldots, i_m\}$ such that $i_j < i_{j+1}$ for every $j < m$. 
Then, we define the (word) encoding of $\sigma$ as:
$$
\textsf{enc}(\sigma) = (\bar{X}_1, i_1)(\bar{X}_2, i_2) \ldots (\bar{X}_m, i_m)
$$
such that $\bar{X}_j = \{X \in \textsf{dom}(\sigma) \mid i_j \in \sigma(X)\}$ for every $j \leq m$. 
That is, we represent $\sigma$ as an increasing sequence of positions, where each position is labeled with the variables of $\sigma$ where it belongs.

The statement of the proposition can be formulated as follows:
\begin{proposition}[Proposition~\ref{prop:msovpt}]
	Fix a structured alphabet $\Sigma$. Let $\bar{X}$ be a set of MSO variables and ${\pazocal X} = 2^{\bar{X}}$. 
	\begin{enumerate}
		\item For any $\msom$ formula $\varphi(\bar{X})$ there exists a VPT $\cT$ with output alphabet ${\pazocal X}$ such that for every $w\in \wnS$:
		$$
		\sem{\cT}(w) \ \ = \ \ \{\textsf{enc}(\sigma)\mid \sigma\in\sem{\varphi}(w)\}.
		$$
		\item For any VPT $\cT$ with output alphabet ${\pazocal X}$ there exists a  $\msom$ formula $\varphi(\bar{X})$ such that for every $w\in \wnS$:
		$$
		\{\textsf{enc}(\sigma)\mid \sigma\in\sem{\varphi}(w)\} \ \ = \ \  \sem{\cT}(w).
		$$
	\end{enumerate}
\end{proposition}

The proof of this proposition is largely based on the proof of Theorem 4 in~\cite{AlurM04}.
To prove (1) we can follow the exact same argument as the {\em if} direction of the proof and be left with a VPA $\cA$ over the input alphabet $\Sigma^{\bar{X}} = \Sigma \times {\pazocal X}$ whose language is the set of words which encode a valuation $\sigma$ of $\bar{X}$ along with a word $w$ for which $(w, \sigma)\models\varphi(\bar{X})$. 
We define a straighforward transformation of transitions from this VPA to VPT as follows: $f(t) = t'$ iff $t$ has input symbol $(a, V)$ and $t'$ has input symbol $a$ and output symbol $V$.
We obtain the desired VPT $\cT$ by replacing solely the transition relation $\Delta$ in $\cA$ by $\{f(t)\mid t\in\Delta\}$.

To prove (2) we convert $\cT$ into a VPA $\cA$ with input alphabet $\Sigma^{\bar{X}}$ in the opposite way as in (1) and use the result of~\cite{AlurM04} itself to obtain a $\msom$ formula with no free variables $\varphi'$ over the same input alphabet. We replace any instance of $P_{(a, V)}(x)$ in $\varphi$ by the expression $P_a(x) \wedge \bigwedge_{X\in V}x\in X\wedge \bigwedge_{X\in\bar{X}\setminus V}x\not\in X$ to obtain a formula $\varphi(\bar{X})$ over $\Sigma$ which proves the statement.

\subsection{XPath query examples}
In this section we show two examples of XPath queries and their translations into VPT. 
The type of XPath query we focus on here are {\em full-fledged evaluation} queries, where the expected output set contains the nodes selected by the query. 
The way we translate an XPath query $\cQ$ into a VPT $\cT$ is as follows:  
Let $\tau$ be a function which encodes unranked trees as nested strings by a depth-first traversal. 
In our setting, a node labeled $a$ is encoded as the pair of open/close symbols $\op{a}$, $\cl{a}$. 
Let $\Sigma_{\cQ}$ be the set of labels on trees mentioned in $\cQ$, and let $\opS$ and $\clS$ be the sets of open and close symbols that encode the labels in $\Sigma_{\cQ}$. 
Let $\cQ(D)$ be the set of nodes in $D$ that match $\cQ$. 
Furthermore, consider the set $\cI_{\cQ, D}$ of positions in $\tau(D)$ that correspond to nodes in $\cQ(D)$.
A VPT $\cT$ is a translation of a query $\cQ$ if and only if its input alphabet is $(\opS, \clS)$, and for a given tree $D$, the set $\br{\cT}(\tau(D))$ contains exactly the strings $({\sf L}, i)$ for which $i\in\cI_{\cQ, D}$. 
This notion of translation into VPT is quite natural since the output set of the VPT can be used straightforwardly to reconstruct $\cQ(D)$ with a single pass over $D$.

The VPT in this section are shown graphically using the following notation: An open transition $(p,\op{s},\oout,q,\gamma)$ is represented by an edge from $p$ to $q$ with the label $\op{s} / \gamma$ if $\oout = \eps$ and with the label $\op{s} / \gamma : \oout$ if $\oout \neq \eps$. A close transition $(p,\cl{s},\eps,\gamma,q)$ is represented with the label $\cl{s}, \gamma$. As is customary, we extend this notation by representing multiple transitions that differ only by their input symbol as a single transition over the set of these symbols. We also use the symbol $|$ to group transitions that start and end in the same states.

As a first example, consider the XPath query $\cQ_1 = {\tt //}a{\tt /}b$. This query can be translated into the VPT shown in Figure~\ref{fig-xpath-vpt-small}. 


\begin{figure}[h]
	\centering
	
	\begin{tikzpicture}[scale=0.7,->,>=stealth',shorten >=1pt,auto,node distance=2cm,thick,state/.style={circle,draw}, color=black]
		\node[state,draw=none,scale=0.1] (in) at (-1,0) {};
		\node[state, accepting] (q0) at (0,0) {$q_0$};
		\node[state] (q2) at (4,0) {$q_2$};
		\node[state] (q3) at (4,3) {$q_3$};
		\node[state] (q4) at (8,0) {$q_4$};
		\node[state] (q5) at (8,3) {$q_5$};
		\node[state] (q6) at (12,0) {$q_6$};
		\node[state] (q7) at (12,3) {$q_7$};
		\node[state, accepting] (q8) at (16,0) {$q_8$};
		\draw (in) to (q0);
		\draw (q0) to[loop above] node {$\opS / \gamma_{\textsf{desc}}\mid \clS\!,\! \gamma_{\textsf{desc}}$} (q0);
		\draw (q0) to node [above] {$\op{a} / \gamma_a$} (q2);
		\draw (q2) to[out=98,in=-98] node [left] {$\opS / \gamma$} (q3);
		\draw (q3) to[loop above] node {$\opS / \gamma'\mid \clS\!,\! \gamma'$} (q3);
		\draw (q3) to[out=-82,in=82] node [right] {$\clS\!,\! \gamma$} (q2);
		\draw (q2) to node [above] {$\op{b} / \gamma_b : {\sf L}$} (q4);
		\draw (q4) to[out=98,in=-98] node [left] {$\opS / \gamma$} (q5);
		\draw (q5) to[loop above] node {$\opS / \gamma' \mid \clS\!,\! \gamma'$} (q5);
		\draw (q5) to[out=-82,in=82] node [right] {$\clS\!,\! \gamma$} (q4);
		\draw (q4) to node [above] {$\cl{b} , \gamma_b$} (q6);
		\draw (q6) to[out=98,in=-98] node [left] {$\opS / \gamma$} (q7);
		\draw (q7) to[loop above] node {$\opS / \gamma' \mid \clS\!,\! \gamma'$} (q7);
		\draw (q7) to[out=-82,in=82] node [right] {$\clS\!,\! \gamma$} (q6);
		\draw (q6) to node [above] {$\cl{a}, \gamma_a$} (q8);
		\draw (q8) to[loop above] node {$\opS / \gamma_{\textsf{desc}}\mid \clS\!,\! \gamma_{\textsf{desc}}$} (q8);
	\end{tikzpicture}
	
	\caption{A VPT that translates the XPath query $\cQ_1 = {\tt //}a{\tt /}b$. Its input alphabet consists of the sets $\opS = \{\op{a}, \op{b}\}$ and $\clS = \{\cl{a}, \cl{b}\}$.}
	
	\label{fig-xpath-vpt-small}
\end{figure}

As a more involved example, consider the following XPath query over the tree alphabet $\{a, b, c\}$: 
$$
\cQ_2 = \texttt{child:}a\texttt{/descendant:}b[\texttt{following-sibling:}c]
$$

A VPT that translates this query is shown in Figure~\ref{fig-xpath-vpt}.
	

\begin{figure}[h]
	\centering
	
	\begin{tikzpicture}[scale=0.85,->,>=stealth',shorten >=1pt,auto,node distance=2cm,thick,state/.style={circle,draw},color=black]
		\node[state,draw=none,scale=0.1] (in) at (-1,10) {};
		\node[state, accepting] (q0) at (0,10) {$q_0$};
		\node[state] (q1) at (4.5,10) {$q_1$};
		\node[state] (q2) at (3,8) {$q_2$};
		\node[state] (q3) at (6,6) {$q_3$};
		\node[state] (q4) at (3,4) {$q_4$};
		\node[state] (q5) at (8,4) {$q_5$};
		\node[state] (q6) at (6,2) {$q_6$};
		\node[state] (q7) at (3,0) {$q_7$};
		\node[state, accepting] (q8) at (0,4) {$q_8$};
		\draw (in) to (q0);
		\draw (q0) to[out=20,in=160] node [above] {$\opS / \gamma_0$} (q1);
		\draw (q1) to[out=-160,in=-20] node [above] {$\clS , \gamma_0$} (q0);
		\draw (q1) to[loop right] node [right] {$\opS / \gamma'_0$ | $\clS, \gamma'_0$} (q1);
		\draw (q0) to node [right=5pt, pos=0.45] {$\op{a} / \gamma_a$} (q2);
		\draw (q2) to[loop right] node [right] {$\opS / \gamma_{\textsf{desc}}$ | $\clS, \gamma_{\textsf{desc}}$} (q2);
		\draw (q2) to node [right=5pt, pos=0.45] {$\op{b} / \gamma_b : {\sf L}$} (q3); 
		\draw (q2) to node [left] {$\cl{a}, \gamma_a$} (q8);
		\draw (q3) to[loop right] node [right] {$\opS / \gamma'_b$ | $\clS, \gamma'_b$} (q3);
		\draw (q3) to node [above=1.5pt, pos=0.65, yshift=3pt] {$\cl{b}, \gamma_b$} (q4);
		\draw (q4) to[out=15,in=165] node [above=-1pt] {$\{\op{a}, \op{b}\} / \gamma_{\textsf{sib}}$} (q5);
		\draw (q5) to[out=-165,in=-15] node [above] {$\{\op{a}, \op{b}\}, \gamma_{\textsf{sib}}$} (q4);
		\draw (q5) to[loop right] node [right] {$\opS / \gamma'_{\textsf{sib}}$ | $\clS, \gamma'_{\textsf{sib}}$} (q5);
		\draw (q4) to node [above, pos=0.7, yshift=3pt] {$\op{c} / \gamma_c$} (q6);
		\draw (q6) to[loop right] node [right] {$\opS / \gamma'_c$ | $\clS, \gamma'_c$} (q6);
		\draw (q6) to node [above, pos=0.6, yshift=4pt] {$\cl{c} / \gamma_c$} (q7);
		\draw (q7) to[loop below] node [below] {$\opS / \gamma_{\textsf{desc}}$ | $\clS, \gamma_{\textsf{desc}}$} (q3);
		\draw (q7) to node [left=1pt] {$\cl{a}, \gamma_a$} (q8);
		\draw (q8) to[loop left] node [left] {$\opS / \gamma'$ | $\clS, \gamma'$} (q8);
	\end{tikzpicture}

\caption{The VPT that translates the XPath query $\cQ_2$.  Its input alphabet consists of the sets $\opS = \{\op{a}, \op{b}, \op{c}\}$ and $\clS = \{\cl{a}, \cl{b}, \cl{c}\}$.}
	
	\label{fig-xpath-vpt}
\end{figure}

\subsection{Proof of Lemma \ref{vpawo:det}}

Let $\cT = (Q, \Sigma, \Gamma, \oalph, \Delta, \qinit, F)$.
We will construct an input-output deterministic \vpt $\cT' = ( Q', \Sigma, \Gamma', \oalph, \delta^\dets, S_{I}, F')$ as follows:
Let $Q' = 2^{Q\times Q}$ and $\Gamma' = 2^{Q\times\Gamma\times Q}$. 
Let $S_I = \{(q,q)\mid q\in \qinit\}$ and let $F' = \{S\mid (p,q)\in S\text{ for some }p\in I \text{ and } q\in F \}$. 
Let $\delta$ be defined as follows:
\begin{itemize}
	\item For $\op{a}\in \opS$ and $\oout\in\Omega$, $\delta(S,\op{a},\oout) = (S',T)$, where:
	\begin{align*}
		T &= \{(p,\gamma,q)\mid (p,p')\in S \text{ and } (p',\op{a},\oout,\gamma,q)\in\Delta\text{ for some }q\in Q \},\\
		S'&= \{(q,q)\mid(p,p')\in S\text{ and }(p',\op{a},\oout,\gamma,q)\in\Delta\text{ for some }p,p'\in Q \text{ and } \gamma\in\Gamma\}
	\end{align*}
	\item For $\cl{a}\in\clS$ and $\oout\in\Omega$, $\delta(S,\cl{a},\oout,T) = S'$ where, if $T\subseteq Q\times\Gamma\times Q$, then: 
	\begin{align*}
		S' &= \{(p,q)\mid(p,\gamma,p')\in T\text{ and }(p',q')\in S\text{ and }(q',\cl{a},\oout,\gamma,q)\in\Delta\ \ \ \ \ \ \ \ \ \ \ \\ &\ \ \ \ \ \ \ \ \ \ \ \ \ \ \ \ \ \ \ \ \ \ \ \ \ \ \ \ \ \ \ \ \ \ \ \ \ \ \ \ \ \ \ \ \ \ \ \ \ \ \ \ \ \text{ for some }p',q'\in Q,\gamma\in\Gamma\},
	\end{align*}
	\item For $a\in\noS$ and $\oout\in\Omega$, $\delta(S,a) = S'$ where:
	\begin{align*}
		S' &= \{(q,q'')\mid (q,q')\in S\text{ and }(q',a,\oout,q'')\in\Delta\text{ for some }q'\in Q\}.
	\end{align*}
\end{itemize}
One can immediately check that this automaton is input-output determinstic since the transition relation is modelled as a partial function.

We will prove that $\cT$ and $\cT'$ are equivalent by induction on well-nested words. To aid our proof, we will introduce a couple of ideas. First, we extend the definition of a run to include sequences that start on an arbitary configuration. Also, given a run 

$$
\rho = (q_1, \sigma_1) \xrightarrow{s_1/\!\ooutscr_1} (q_2, \sigma_2) \xrightarrow{s_2/\!\ooutscr_2} \cdots  \xrightarrow{s_n/\!\ooutscr_n} (q_{n+1}, \sigma_{n+1}),
$$

\noindent and a span $\spanc{i}{j}$, define a subrun of $\rho$ as the subsequence

$$
\rho\spanc{i}{j} = (q_i, \sigma_i) \xrightarrow{s_{i}/\!\ooutscr_{i}} (q_{i+1}, \sigma_{i+1}) \xrightarrow{s_{i+1}/\!\ooutscr_{i+1}} \cdots  \xrightarrow{s_{j-1}/\!\ooutscr_{j-1}} (q_j, \sigma_j).
$$

In this proof, we only consider subruns such that $w\spanc{i}{j} = s_{i}s_{i+1}\cdots s_{j-1}$ is a well-nested word. 
A second definition we will use is that of a \vpt with arbitrary initial states. 
Formally, let $S \subseteq Q$. 
We define $\cT_{q}$ as the \vpt that simulates $\cT$ by starting on the configuration $(q,\eps)$. 
Note that for a run $\rho = (q_1, \sigma_1) \xrightarrow{s_1/\!\ooutscr_1} \cdots  \xrightarrow{s_n/\!\ooutscr_n} (q_{n+1}, \sigma_{n+1})$ of $\cT$ over $w = s_1\cdots s_n$ and a well-nested span $\spanc{i}{j}$, the subrun $\rho\spanc{i}{j}$ is one of the runs of $\cT_{q}$ over $w\spanc{i}{j}$ modulo $\sigma_i$, which is present in all of the stacks in $\rho$ as a common suffix.

We shall prove first that $\br{\cT}(w) \subseteq \br{\cT'}(w)$ for every well-nested word $w$. This is done with the aid of the following result:

\begin{claim}
	For a well-nested word $w$, output sequence $\mu$, states $p,q\subseteq Q$, and a set $S$ that contains $(p,q)$, if there is a run of $\cT_q$ over $w$ and $\mu$ such that its last state is $q'$, the (only) run of $\cT'_{S}$ over $w$ and $\mu$ ends in a state $S'$ which contains $(p,q')$.
\end{claim}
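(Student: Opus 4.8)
The statement is an inductive invariant connecting runs of the non-deterministic transducer $\cT_q$ with the unique run of the determinized transducer $\cT'_S$. The natural approach is induction on the structure of the well-nested word $w$, following exactly the three clauses in the definition of $\wnS$: the base cases $w = \eps$ and $w \in \noS$, and the two inductive cases $w = w_1 \cdot w_2$ with $w_1, w_2 \in \wnS \setminus \{\eps\}$, and $w = \op{a} \cdot w' \cdot \cl{b}$. I would actually prove a slightly stronger biconditional — that $(p, q') \in S'$ \emph{if and only if} there is a run of $\cT_q$ over $w$ with output $\mu$ ending in $q'$ — since the forward-only version tends not to carry through the induction cleanly (we need the converse direction on subwords to handle the concatenation and nesting cases). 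Here $S'$ denotes the state reached by the unique run of $\cT'_S$ over $w$ and $\mu$; note this run is well-defined precisely because $\cT'$ is I/O-deterministic, and on a well-nested input it returns to an empty stack, so $S'$ is genuinely a state of $Q'$.

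\textbf{Key steps.} First, the base cases: for $w = \eps$, the run of $\cT'_S$ is trivial, $S' = S$, and indeed $(p, q') \in S$ with $q' = q$ matches the empty run of $\cT_q$; the neutral-letter case $w = a \in \noS$ is a direct unfolding of the definition of $\delta(S, a)$ (with the appropriate $\oout$), essentially identical to the subset-construction argument for ordinary NFAs. Second, the concatenation case $w = w_1 w_2$: apply the induction hypothesis to $w_1$ from $S$ to get an intermediate state $S_1$ (with the appropriate splitting $\mu = \mu_1 \mu_2$ of the output, which is forced once we fix where the boundary falls), then apply it again to $w_2$ from $S_1$; one checks that a run of $\cT_q$ over $w_1 w_2$ decomposes uniquely as a run over $w_1$ ending in some $q_1$ followed by a run of $\cT_{q_1}$ over $w_2$, and this bijection is mirrored by composing the two determinized runs. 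Third, and this is where the real content lies, the nesting case $w = \op{a}\, w'\, \cl{b}$: here one reads $\op{a}$ from $S$, which by the definition of $\delta(S, \op{a}, \oout)$ pushes the stack symbol $T = \{(p, {\tt x}, q'') \mid (p,p') \in S, (p', \op{a}, \oout, {\tt x}, q'') \in \Delta\}$ and moves to $S_{\text{mid}} = \{(q'', q'') \mid \ldots\}$; then one runs $\cT'_{S_{\text{mid}}}$ over $w'$ to reach $S_{\text{mid}}'$ (by induction hypothesis); then one reads $\cl{b}$ using the pop-transition $\delta(S_{\text{mid}}', \cl{b}, \oout, T)$. The bookkeeping is to verify that the pairs surviving this push–process–pop cycle are exactly those $(p, q)$ for which $\cT_q$ has a matching run over $\op{a}\, w'\, \cl{b}$ — i.e., that the stack symbol $T$ correctly "remembers" the pre-$\op{a}$ pairing so that the pop step can reconstitute it.

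\textbf{Main obstacle.} The hardest part is the nesting case, specifically getting the quantifier structure in the pop-transition right. One must track four states at once (the $p$ and $q'$ from the outer level recorded in $T$, and the $p'$ and $q''$ arising inside $w'$) and check that $\delta(S_{\text{mid}}', \cl{b}, \oout, T)$ pairs them up exactly as a genuine run of $\cT$ would — that is, that $(p, q) \in S'$ after the pop iff there exist ${\tt x}$, an intermediate state, a run of $\cT$ from $q$ reading $\op{a}$ (pushing ${\tt x}$), a run over $w'$, and a matching pop reading $\cl{b}$. This requires carefully invoking the induction hypothesis on $w'$ with the right initial set $S_{\text{mid}}$ and matching the "diagonal" structure of $S_{\text{mid}}$ (pairs of the form $(q'', q'')$) against the fact that subruns over $w'$ start fresh at the level just opened. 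The output sequence $\mu$ must be threaded through all of this: the symbols emitted on $\op{a}$ and $\cl{b}$ are the first and last of $\mu$, with the middle portion consumed by $w'$, and the I/O-determinism of $\cT'$ means each of these choices is forced once $\mu$ is fixed. Everything else is routine unfolding of the definitions, but this case demands care to avoid conflating the many states in play.
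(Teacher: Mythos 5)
Your proposal follows the same plan as the paper's proof: structural induction on $w$ over the three clauses defining $\wnS$, with the concatenation case handled by two applications of the hypothesis and the nesting case by a push--process--pop analysis (including the observation that the determinized run over the inner word carries the pushed stack symbol $T$ along unchanged and can be reasoned about modulo that suffix). One correction is worth making, though. You assert that the forward-only statement ``tends not to carry through the induction cleanly'' and that the converse is needed on subwords, but that is not the case, and the paper's proof of this claim is entirely one-directional. In the concatenation case, the first application of the hypothesis yields $(p,q^w)\in S'$ for the intermediate determinized state $S'$, and this pair is exactly the premise required to apply the same forward hypothesis again to the run of $\cT_{q^w}$ over the second factor. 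In the nesting case, reading $\op{a}$ from $S$ using $(p,q)\in S$ and the push transition $(q,\op{a},\oout_1,q_2,{\tt x})$ produces $(q_2,q_2)$ in the new current-level set and $(p,{\tt x},q_2)$ in $T$; the pair $(q_2,q_2)$ is the premise for the hypothesis on the inner subrun, and the pop step combines the resulting pair with the triple stored in $T$ and the pop transition of $\cT$. No converse is invoked anywhere. The paper proves the reverse inclusion $\br{\cT'}(w)\subseteq\br{\cT}(w)$ via a separate, symmetric claim; bundling both into an ``iff'' as you suggest is harmless but strictly more than this claim requires.
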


\begin{proof}
	We will prove the claim by induction on $w$.
	
	If $w = \eps$, the proof is trivial since $q = q'$. If $w = a\in\noS$ the proof follows straightforwardly from the construction of $\delta$.
	
	If $w,v\in\wnS$, and $\mu, \kappa \in \oalph^*$, let $p,q\in Q$, let $S$ be a set that contains $(p,q)$, and let $\rho$ be a run of $\cT_q$ over $wv$ and $\mu\kappa$, which ends in a state $q'$. 
	Our goal is to prove that the run $\rho'$ of $\cT'_S$ over $wv$ and $\mu\kappa$ ends in a state that contains $(p,q')$. 
	Let $n = \vert w\vert$, $m = \vert v \vert$, and let $q^w$ be the last state of the subrun $\rho[1,n+1\rangle$. 
	Consider as well $\rho[n+1,n+m+1\rangle$, which is a run of $\cT_{q^w}$ over $v$ and $\varkappa$ that ends in $q'$. 
	From the hypothesis two conditions follow: 
	(1) In the run of $\cT'_S$ over $w$ and $\mu$ the last state $S'$ contains $(p,q^w)$, and 
	(2) in the run of $\cT'_{S'}$ over $v$ and $\kappa$ the last state contains $(p,q')$. 
	It can be seen that $\rho'$ is the concatenation of these two runs, so this proves the claim.
	
	If $w\in\wnS$, $\op{a}\in\opS$, $\cl{b}\in\clS$, $\mu\in\oalph^*$, and $\oout_1,\oout_2\in\oalph$, let $p,q\in Q$, let $S$ be a set that contains $(p,q)$, and let $\rho$ be a run of $\cT_q$ over $\op{a}w\cl{b}$ and $\oout_1\!\mu\oout_2$. 
	Let $n = \vert w\vert$, and let $q, q_2, \ldots, q_{n+2}, q_{n+3}$ be the states of $\rho$ in order. 
	Our goal is to prove that the run $\rho'$ of $\cT'_S$ over $\op{a}w\cl{b}$ and $\oout_1\!\mu\oout_2$ ends in a state that contains $(p,q_{n+3})$. 
	Let $(q_2,\mathtt{x})$ be the second configuration of $\rho$. 
	This implies that $(q,\op{a},\oout_1,q_2,\gamma)\in\Delta$ and $(q_{n+2},\cl{b},\oout_2,\gamma,q_{n+3})\in\Delta$.
	Let $S'$ and $T$ be such that $\delta(S,\op{a},\oout_1) = (S',T)$.
	Therefore, $(q_2,q_2)\in S'$ and $(p,\gamma,q_2)\in T$.
	Consider the subrun $\rho[2,n+2\rangle$, which is a run of $\cT_{q_2}$ over $w$ and $\mu$ that ends in $q_{n+2}$ modulo the stack suffix $\gamma$.
	Since $(q_2,q_2)\in S'$, from the hypothesis it follows that the run of $\cT'_{S'}$ over $w$ and $\mu$ ends in a state $S''$ that contains $(q_2,q_{n+2})$.
	This run starts on the configuration $(S',\eps)$ and ends in $(S'',\eps)$, so a run on the same automaton that starts on $(S',T)$ and reads the same symbols will end in $(S'', T)$, which is the case for the subrun $\rho'[2,n+2\rangle$.
	Therefore, the construction of $\delta$ implies that $(p,q_{n+3})$ is contained in the last state of $\rho'$, which proves the claim.
\end{proof}

Let now $w$ be a well-nested word and $\mu$ be an output sequence such that $\cT$ accepts $(w, \mu)$. Let $\rho$ be an accepting run of $\cT$ over $(w,\mu)$ which starts on a state $p\in I$ and ends in a state $q\in F$. Note that $\cT_p$ also accepts $(w,\mu)$. Note that $\cT = \cT'_{S_{I}}$, and since $(p,p)\in S_{I}$ the claim implies that the run of $\cT$ over $(w,\mu)$ ends in a state which contains $(p,q)$, and so this run is accepting. This proves that $\br{\cT}(w) \subseteq \br{\cT'}(w)$.

To prove that $\br{\cT'}(w) \subseteq \br{\cT}(w)$ we use a similar result:

\begin{claim}
	For a well-nested word $w$, output sequence $\mu$, states $q,p,q'\subseteq Q$, and a set $S$ that contains $(p,q)$, if the run of $\cT'_S$ over $w$ and $\mu$ ends on a state $S'$ that contains $(p,q')$, then there is a run of $\cT_q$ over $w$ and $\mu$ such that its last state is $q'$.
\end{claim}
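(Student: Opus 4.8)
This statement is the converse of the preceding claim, and the plan is to prove it by induction on the structure of the well-nested word $w$ (using the four shapes $\eps$, a neutral letter, a concatenation of two nonempty well-nested words, and $\op{a}\,v\,\cl{b}$), in exact parallel with the proof just given, but now reading the defining equations of $\delta$ (equivalently $\delta^\dets$) \emph{backwards}: starting from a pair that survives in the final deterministic state $S'$, we want to reconstruct a concrete run of $\cT$. It will be cleaner to carry a slightly strengthened induction hypothesis, namely: if the run of $\cT'_S$ over $w$ and $\mu$ ends in $S'$ and $(p,q')\in S'$, then there is \emph{some} $\tilde q$ with $(p,\tilde q)\in S$ and a run of $\cT_{\tilde q}$ over $w$ and $\mu$ ending in $q'$. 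This implies the stated claim whenever the pairs in $S$ have pairwise distinct first components, which is the only instance actually needed below, since the initial state $S_I=\{(q,q)\mid q\in I\}$ has that property.

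The base cases are immediate: for $w=\eps$ we have $S'=S\ni(p,q')$, so $\tilde q=q'$ and the empty run work; for $w=a\in\noS$ the neutral equation for $\delta$ shows that $(p,q')\in S'$ forces some $\tilde q$ with $(p,\tilde q)\in S$ and $(\tilde q,a,\oout,q')\in\Delta$, which is exactly a one-step run. For a concatenation $w=w_1w_2$ (both pieces in $\wnS\setminus\{\eps\}$, with $\mu=\mu_1\mu_2$ split along the boundary), I would split the deterministic run of $\cT'_S$ into the run over $w_1$ ending in some $S''$ followed by the run of $\cT'_{S''}$ over $w_2$ ending in $S'$; using the elementary observation that along a well-nested word the set of first components of the pairs can only shrink (the close equation restores first components out of the stack triple, which was built from first components of the pre-open state), $(p,q')\in S'$ gives a $q''$ with $(p,q'')\in S''$, so the induction hypothesis on $w_2$ from $S''$ yields a run of $\cT_{q''}$ over $w_2$ ending in $q'$, and the induction hypothesis on $w_1$ from $S$ (target $q''$) yields a run of $\cT_{\tilde q}$ over $w_1$ ending in $q''$ with $(p,\tilde q)\in S$; concatenating gives the desired run of $\cT_{\tilde q}$. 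For $w=\op{a}\,v\,\cl{b}$ with output $\oout_1\mu\oout_2$, write $\delta(S,\op{a},\oout_1)=(S_1,T_1)$ and $\delta(S_2,\cl{b},\oout_2,T_1)=S'$, where $S_2$ is the state reached by $\cT'_{S_1}$ after reading $v$; from $(p,q')\in S'$ and the close equation I read off $p',q_3,{\tt x}$ with $(p,{\tt x},p')\in T_1$, $(p',q_3)\in S_2$, and $(q_3,\cl{b},\oout_2,{\tt x},q')\in\Delta$, and from the open equation the triple $(p,{\tt x},p')\in T_1$ unpacks to some $\tilde q$ with $(p,\tilde q)\in S$ and $(\tilde q,\op{a},\oout_1,{\tt x},p')\in\Delta$; since the open equation also puts $(p',p')$ into $S_1$, the induction hypothesis on $v$ from $S_1$ gives a run of $\cT_{p'}$ over $v$ ending in $q_3$, and prefixing the $\op{a}$-push and suffixing the $\cl{b}$-pop assembles a run of $\cT_{\tilde q}$ over $\op{a}\,v\,\cl{b}$ ending in $q'$.

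The main obstacle is precisely this backward traversal of $\delta^\dets$ in the open/close case: one has to track which pair of $S$ (resp.\ which triple of $T_1$) is responsible for $(p,q')$ surviving in $S'$, and check that the symbol ${\tt x}$ pushed on reading $\op{a}$ and the symbol popped on reading $\cl{b}$ are the \emph{same}; this is exactly where the three-component stack alphabet $\Gamma^\dets\subseteq 2^{Q\times\Gamma\times Q}$, which remembers both endpoints of the matching push, does the work. Once the claim is proved, instantiating it (or the strengthened version) at $S=S_I$ shows $\br{\cT'}(w)\subseteq\br{\cT}(w)$, and together with the preceding claim this gives $\br{\cT}(w)=\br{\cT'}(w)$ for all $w\in\wnS$; since $\cT'$ is I/O-deterministic by construction, this completes the proof of Lemma~\ref{vpawo:det}.
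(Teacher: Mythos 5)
Your proof follows the paper's route exactly in its architecture: structural induction on the well-nested word (empty word, neutral letter, concatenation, bracketed $\op{a}\,v\,\cl{b}$), reading the defining equations of $\delta^\dets$ backwards in each case, and the low-level details you give are correct — in particular using that $S_1$ after an open consists only of diagonal pairs $(q,q)$ to pin down the inner initial state $p'$, and matching the pushed and popped symbol via the triple $(p,{\tt x},q)\in T_1$.

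What you add, and this is a genuine repair rather than a stylistic choice, is the strengthened induction hypothesis: there exists \emph{some} $\tilde q$ with $(p,\tilde q)\in S$ and a run of $\cT_{\tilde q}$ to $q'$. The claim as the paper states it asserts the run for the $q$ that is \emph{given} with $(p,q)\in S$, and that is not true in general: if $S$ contains two pairs $(p,q_1),(p,q_2)$ with the same first component, the pair $(p,q')$ that survives in $S'$ may owe its presence to $q_1$ alone, and no run from $q_2$ need exist. This situation does arise at the intermediate states the paper's own induction passes through — in the concatenation case the paper writes ``Let $(p,q^w)\in S^w$'' and then invokes the hypothesis for $\cT_q$ with target $q^w$, but nothing guarantees that $(p,q^w)$ descends from $(p,q)$ rather than from some other $(p,q'')\in S$; in the bracketed case the paper passes silently from $(p,{\tt x},q_2)\in T$ to the transition $(q,\op{a},\oout_1,q_2,{\tt x})\in\Delta$, again implicitly assuming that the $p'$ witnessing membership in $T$ must equal $q$. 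Your version lets the induction produce the correct antecedent $\tilde q$ at each step and still yields the lemma at $S=S_I$, where first components are pairwise distinct, which is the only instance actually used. One small redundancy: once you carry the strengthened hypothesis, the ``first components can only shrink'' observation is unnecessary in the concatenation case, since the inductive call on $w_2$ from $S''$ already \emph{produces} the intermediate $q''$ with $(p,q'')\in S''$ rather than requiring you to exhibit it in advance.
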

\begin{proof}
	We will prove the claim by induction on $w$.
	
	If $w = \eps$, the proof is trivial since $q = q'$. If $w = a\in\noS$ the proof follows straightforwardly from the construction of $\delta$.
	
	If $w,v\in\wnS$, and $\mu, \kappa \in \oalph^*$, let $p,q,q'\in Q$, let $S$ be a set that contains $(p,q)$, and let $\rho$ be the run of $\cT'_S$ over $wv$ and $\mu\kappa$, which ends in a state $S'$ that contains $(p,q')$. 
	Our goal is to prove that there is a run $\rho'$ of $\cT_q$ over $wv$ and $\mu\kappa$ such that its last state is $q'$.
	Let $n = \vert w\vert$, $m = \vert v \vert$, and let $S^w$ be the last state of the subrun $\rho[1,n+1\rangle$. 
	Consider as well $\rho[n+1,n+m+1\rangle$, which is a run of $\cT'_{S^w}$ over $v$ and $\kappa$ that ends in $S'$.
	From the construction of $\delta$, it is clear that if a non-empty state $S'$ follows from $S$ in a run of $\cT'$, then $S$ is not empty.
	Let $(p,q^w)\in S^w$.
	From the hypothesis two conditions follow: 
	(1) There is a run $\rho_1$ of $\cT_q$ over $w$ and $\mu$ such that its last state is $q^w$
	(2) There is a run $\rho_2$ of $\cT_{q^w}$ over $v$ and $\kappa$ such that its last state is $q'$. 
	We then construct $\rho'$ by concatenating $\rho_1$ and $\rho_2$ which ends in $q'$, and this proves the claim.
	
	If $w\in\wnS$, $\op{a}\in\opS$, $\cl{b}\in\clS$, $\mu\in\oalph^*$, and $\oout_1,\oout_2\in\oalph$, let $p,q,q'\in Q$, let $S$ be a set that contains $(p,q)$, and let $\rho$ be the run of $\cT'_S$ over $\op{a}w\cl{b}$ and $\oout_1\!\mu\!\oout_2$. 
	Let $n = \vert w\vert$, let $S, S_2, \ldots, S_{n+2}, S_{n+3}$ be the states of $\rho$ in order, and suppose there is a pair $(p,q')\in S_{n+3}$.
	Our goal is to prove that there is a run $\rho'$ of $\cT_q$ over $\op{a}w\cl{b}$ and $\oout_1\!\mu\!\oout_2$ that ends in $q'$. 
	Let $(S_2,T)$ be the second configuration of $\rho$.
	From the construction of $\delta$, there exist $q_2,q_{n+2}\in Q$ and $x\in\Gamma$ such that $(q_{n+2},\cl{b},\oout_2,\gamma,q_{n+3})\in\Delta$, $(p,\gamma,q_2)\in T$ and $(q_2,q_{n+2})\in S_{n+2}$.
	Since $w$ is well-nested, this $T$ could only have been pushed after reading $\op{a}/\!\oout_1$, which implies that $(q,\op{a},\oout_1,q_2,\gamma)\in\Delta$. This, in turn, means that $(q_2,q_2)\in S_2$.
	Let us consider the subrun $\rho[2,n+2\rangle$, which is a run of $\cT'_{S_2}$ over $w$ and $\mu$ that ends in $S_{n+2}$ modulo the common stack suffix $T$.
	We now have that $(q_2,q_2)\in S_2$ and $(q_2,q_{n+2})\in S_{n+2}$, and so, from the hypothesis it follows that there is a run $\rho''$ of $\cT_{q_2}$ over $w$ and $\mu$ such that its last state is $q_{n+2}$.
	In a similar fashion as in the previous claim, we modify the run slightly to obtain one that starts and ends on the stack $\gamma$.
	This new run can be easily extended with the transitions $(q,\op{a},\oout_1,q_2,\gamma),(q_{n+2},\cl{b},\oout_2,\gamma,q_{n+3})\in\Delta$, and as a result, we obtain a run $\rho'$ of $\cT_q$ that fulfils the conditions of the claim.
\end{proof}

Let now $w$ be a well nested word and let $\mu$ be an output sequence such that $\cT'$ accepts $(w, \mu)$. Since $\cT' = \cT_{S_I}$ and the run of $\cT'$ over $(w, \mu)$ ends in a state $S\in F'$, we have that $S$ contains an element $(p,q)$ such that $p\in I$ and $q\in F$. Moreover, $(p,p)\in S_I$. From the prevous claim, it follows that there is an accepting run of $\cT_p$ over $(w, \mu)$ such that its last state is $q$. Therefore, $\cT$ accepts $(w, \mu)$. This proves that $\br{\cT'}(w) \subseteq \br{\cT}(w)$.

We conclude that $\br{\cT}(w) = \br{\cT'}(w)$ for every well-nested word $w$.\qed

\subsection{Proof of Theorem~\ref{vpawo:deltamain}}

The proof of the theorem is a consequence of the following lemma.

\begin{lemma}\label{vpawo:delta}
	For every I/O-unambiguous \vpt $\cT$ there exists an I/O-unambiguous \vpt $\cT'$ such that $\br{\cT'}(w) = \br{\cT}(w) \setminus \bigcup_{i < |w|} \br{\cT}(	w[1, i])$ for every $w\in\wnS$. Furthermore, the size of $\cT'$ is linear on the size of $\cT$.
\end{lemma}

Let $\cT = (Q, \Sigma, \Gamma, \oalph, \Delta, \qinit, F)$ be an I/O-unambiguous \vpt. We construct a VPT $\cT' = (Q', \Sigma, \Gamma, \oalph, \Delta', \qinit, F')$ such that $Q' = Q \times \{1, 2\}$, $\qinit' = \qinit \times\{1\}$, $F' = F \times\{1\}$ and $\Delta'$ is as follows: 
\begin{align*}
	\Delta' =\ & \{((p,1),\op{a},\oout,(q,1),\gamma)\mid \op{a}\in\opS\text{ and }(p,\op{a},\oout,q,\gamma)\in\Delta\text{ where either }\!\oout\in\oalph\text{ or }p\not\in F\}\ \cup\\
	& \{((p,1),\op{a},\eps,(q,2),\gamma)\mid \op{a}\in\opS\text{ and }(p,\op{a},\eps,q,\gamma)\in\Delta\text{ where } p\in F\}\ \cup\\
	& \{((p,2),\op{a},\oout,(q,1),\gamma)\mid \op{a}\in\opS\text{ and }(p,\op{a},\oout,q,\gamma)\in\Delta\text{ where }\!\oout\in\oalph\}\ \cup\\
	& \{((p,2),\op{a},\eps,(q,2),\gamma)\mid \op{a}\in\opS\text{ and }(p,\op{a},\eps,q,\gamma)\in\Delta\}.
\end{align*}
This construction was only shown for symbols in $\opS$, but it should include analogous constructions for symbols in $\clS$ and $\noS$, which are omitted for convenience.
The idea behind this construction is to separate the \vpt in two halves. Each run starts on the first half (marked 1) and once it reaches a final state, it changes into the second half (marked 2). The run then stays on the second half until it sees an output symbol, with which it returns to the first half.

To show that $\br{\cT'}(w) = \br{\cT}(w) \setminus \bigcup_{i < |w|} \br{\cT}(w[1, i\rangle)$ consider a $w\in \wnS$. Let $\mu$ be an output in $\br{\cT'}(w)$ and consider an accepting run $\rho'$ such that $\out(\rho') = \mu$. We can construct an accepting run $\rho$ of $\cT$ over $w$ by starting from $\rho'$ and replacing any appearance of a state $(q, k)$ by $q$. From this it follows that $\mu \in \br{\cT}(w)$. Assume now that $\mu \in \br{\cT}(w[1, i\rangle)$ for some $i < |w|$. From the construction of $\Delta'$ it can be seen that the $i$-th and following states in $\rho'$ are of the form $(q, 2)$, as all of the following transitions in $\rho'$ have $\eps$ as their output symbols. Therefore, $\rho'$ cannot be an accepting run, and we reach a contradiction, from which we conclude that $\mu \in \br{\cT}(w) \setminus \bigcup_{i < |w|} \br{\cT}(w[1, i\rangle)$. Let $\mu$ now be an output in $\mu \in \br{\cT}(w) \setminus \bigcup_{i < |w|} \br{\cT}(w[1, i\rangle)$ and let $\rho$ be the accepting run of $\cT$ over $w$ such that $\out(\rho) = \mu$. It can be seen from the construction of $\Delta'$ that the run of $\cT'$ over $w$ is identical to $\rho$ except each state $q$ in $\rho$ appears as $(q, k)$ in $\rho'$. We will show that the last state in $\rho'$ is of the form $(q,2)$. Towards a contradiction, assume that it is not. Therefore, in $\rho'$ there is a transition where the first state is of the form $(p,1)$ and the second is of the form $(q,2)$, and furthermore, every transition following this one has $\eps$ as its output symbol. Let $i$ be the step where this happens. From the construction of $\Delta$ we see that the $i$-th state is in $F$, from which it follows that the run $\rho'_i$ built from the first $i$ steps in $\rho'$ is an accepting run of $\cT'$ over $w[1,i\rangle$ and that $\out(\rho'_i) = \mu$. We can do a similar process as a above and construct an accepting run of $\cT$ over $w[1,i\rangle$ that renders the same output $\mu$, which contradicts our assumption that $\mu\not\in\bigcup_{i < |w|} \br{\cT}(w[1, i\rangle)$. We conclude that $\br{\cT'}(w) = \br{\cT}(w) \setminus \bigcup_{i < |w|} \br{\cT}(w[1, i\rangle)$.

To show that $\cT'$ is unambiguous, consider a $w\in \wnS$. Let $\mu \in \br{\cT'}(w)$ and consider two accepting runs $\rho_1$ and $\rho_2$ such that $\out(\rho_1) = \out(\rho_2) = \mu$. Let us build a run $\rho$ of $\cT$ over $w$ as in the previous part of the proof, which is the same for $\rho_1$ and $\rho_2$ since $\cT$ is unambiguous. This implies that both $\rho_1$ and $\rho_2$ contain the same sequence of states in $Q$. Suppose now that the runs are different, which is only possible if at some step $i$, the $i$-th state in $\rho_1$ and $\rho_2$ are the same, and in the $(i+1)$-th state in $\rho_1$ and $\rho_2$ are different. This cannot the case since from the construction of $\cT'$, for a given transition $t\in \Delta$ that starts in a state $p$, and some $k\in\{1, 2\}$, there exists exactly one transition $t'\in\Delta'$ that starts in $(p,k)$. This is a contradiction, so we prove that $\cT'$ is unambiguous.

\subsection{Proof of Proposition  \ref{alg:spacebound}}

{\bf Part 1.} This proof is a corollary of Theorem 4.5 in~\cite{BarYossefFJ07}. The proof of this result implies that for the XPath query $Q = {\tt / / a[b\ and\ c]}$, any streaming algorithm that verifies if an XML document matches $Q$ (the problem $\textsc{booleval}_Q$) and any integer $r \geq 1$, there exists a document of depth at most $r + C$, where $C$ is a constant value, on which the algorithm requires $\Omega(r)$ bits of space.

Our proof will show a \vpa $\cA$ which can simulate the query $Q$ for a direct mapping $\nu$ of the documents that are constructed in~\cite{BarYossefFJ07}, where $\nu({\tt \langle a \rangle}) = \op{a}$, $\nu({\tt \langle / a \rangle}) = \cl{a}$, $\nu({\tt \langle / b \rangle}) = b$, and $\nu({\tt \langle / c \rangle}) = c$. The \vpa is shown in Figure~\ref{fig-vpt-depthlowerbound}. We convert this \vpa into a \vpt $\cT$ by adding an $\eps$ output symbol on each transition, so the problem of deciding if $\cA$ accepts $w$ is equivalent to deciding if $\br{\cT}(w)$ is empty, or the set $\{\eps\}$. The theorem follows by taking this $\cT$ as the one in the statement, considering an arbitrary streaming evaluation algorithm $\enumE$ that solves \enumvpt{} with input $\cT$, and using this algorithm along with the mapping $\nu$ to solve $\textsc{booleval}_Q$.

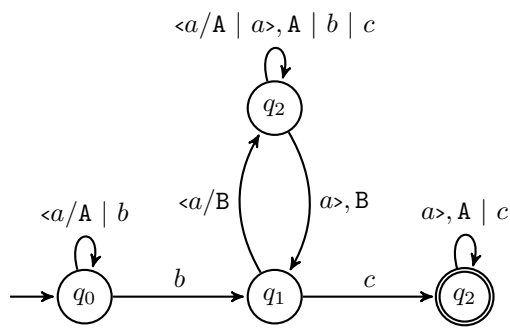
\begin{figure}[t]
	\centering
	
	\begin{tikzpicture}[scale=1,->,>=stealth',shorten >=1pt,auto,node distance=2cm,thick,state/.style={circle,draw}]
		\node[state,draw=none,scale=0.1] (in) at (-1,0) {};
		\node[state] (q0) at (0,0) {$q_0$};
		\node[state] (q1) at (2.5,0) {$q_1$};
		\node[state] (q2) at (2.5,2.5) {$q_2$};
		\node[state, accepting] (q3) at (5,0) {$q_2$};		
		\draw (in) to (q0);
		\draw (q0) to[loop above] node [above, align=center] {$\op{a} / {\tt A}$ | $b$} (q0);
		\draw (q0) to node [above] {$b$} (q1);
		\draw (q1) to[out=120,in=-120] node [left] {$\op{a} / {\tt B}$} (q2);
		\draw (q2) to[loop above] node [above, align=center] {$\op{a}/ {\tt A}$ | $\cl{a}, {\tt A}$ | $b$ | $c$} (q2);
		\draw (q2) to[out=-60,in=60] node [right] {$\cl{a}, {\tt B}$} (q1);
		\draw (q1) to node [above] {$c$} (q3);
		\draw (q3) to[loop above] node [above, align=center] {$\cl{a}, {\tt A}$ | $c$} (q3);
	\end{tikzpicture}
	
	\caption{\vpa $\cA$ used in the proof. An open transition $(p,\op{s},q,\gamma)$ is represented by an edge from $p$ to $q$ with the label $\op{s} / \gamma$. A close transition $(p,\cl{s},\gamma,q)$ is represented with the label $\cl{s}, \gamma$. An neutral transition $(p,s,q)$ is represented with the label $s$.}
	\label{fig-vpt-depthlowerbound}
\end{figure}

{\bf Part 2.} This proof uses the main ideas of the proof of Theorem 1 in~\cite{BarYossefFJ05}. Here, the authors describe a set-computing communication complexity problem. In the problem $\pazocal P$, Alice and Bob compute a two-argument function $p(\cdot, \cdot)$, defined as follows. Alice's input is a subset $A\subseteq\{1,\ldots,k\}$, Bob's input is a bit $b \in \{0,1\}$, and $p(A, b)$ is defined to be $A$, if $b = 1$, and $\emptyset$ otherwise. Proposition 1 in~\cite{BarYossefFJ05} proves that the one-way communication complexity of $\pazocal P$ is at least $k$.

Let $\cT = (Q, \Sigma, \Gamma, \oalph, \Delta, \qinit, F)$ is defined over the alphabets $\noS = \{a, b, \$\}$, and $\oalph = \{{\sf x}\}$ and have its sets $Q$ $\Delta$, $\qinit$, $F$ be as presented in Figure~\ref{fig-vpt-lowerbound}. It can be seen that it satisfies
$$
\sem{\cT}(w) = \begin{cases}
	\{({\sf x}, i) \mid w[i] = b\}	 &\text{if } w \text{ ends in } \$\\
	\emptyset &\text{otherwise}.
\end{cases}	
$$	
\begin{figure}[t]
	\centering
	
	\begin{tikzpicture}[scale=1,->,>=stealth',shorten >=1pt,auto,node distance=2cm,thick,state/.style={circle,draw}]
		\node[state,draw=none,scale=0.1] (in) at (-1,0) {};
		\node[state] (q0) at (0,0) {$q_0$};
		\node[state, accepting] (q1) at (3,0) {$q_1$};		
		\draw (in) to (q0);
		\draw (q0) to[loop above] node [above, align=center] {$a, \eps$ | $b, {\sf x}$} (q0);
		\draw (q0) to node [above] {$\$,\eps$} (q1);
	\end{tikzpicture}
	
	\caption{\vpt $\cT$ used in the proof. A neutral transition $(p, s, \oout, q)$ is represented by an edge from $p$ to $q$ labeled with $s, \oout$.}
	\label{fig-vpt-lowerbound}
\end{figure}
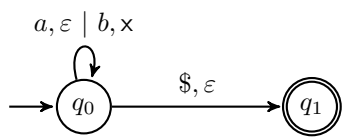

Consider an arbitrary algorithm $\enumE$ that solves $\enumvpt$ with input $\cT$. We will now present a reduction that creates a protocol for ${\pazocal P}$ which makes use of the algorithm $\enumE$. Here, Alice receives the set $A$ and generates a word $w$ of size $k$ such that $w[i] = b$ if $i\in A$ and $w[i] = a$ otherwise. Alice then executes $\enumE$ on input $\cT$ and $w$ as the first $k$ characters of a stream. She sends the state of the algorithm to Bob, who receives the bit $b$, and does the following: If $b = 1$ he continues running $\enumE$ as if the last character of the input was $\$$. If $b = 0$, he stops executing $\enumE$ immediately. In either case, the output given by $\enumE$ contains all the information necessary to compute the set $p(A,b)$, so the reduction is correct.  This proves that $\enumE$ requires at least $k$ bits for an input of size less than $k$, and so $\enumE$ for any $n \geq 1$, requires at least $n$ bits of space in a worst-case stream $\Stream$, which is in $\Omega(\outgap(\cT, S[1,n]))$.

\subsection{Proof of Proposition  \ref{prop:space}}

The time bounds are implied by Theorem~\ref{theo:main}, so we will prove the space bounds. The algorithm has a update phase and an enumeration phase, and the enumeration phase only processes the data structure that was built on the update phase, using at most linear extra space, as is explained in Section~\ref{sec:ds}. As such, we will prove that Algorithm~\ref{alg:preprocessing} on input $(\cT, w)$ uses $\cO((\depth(w) + \outgap(\cT,w))\times|Q|^2|\Delta|)$ space at every point in its execution, which implies the statement of the proposition, where $w = \Stream[1,n]$ for some stream $\Stream$ and $n$.

As it is explained in Section~\ref{sec:eval}, Algorithm~\ref{alg:preprocessing} uses a hash table $S$, and a stack $T$ that stores hash tables. The size of the stack at each point is bounded $\depth(w)$, and the size of each hash table is bounded by $|Q|^2|\Gamma|$, so the size of $S$ and $T$ combined is in $\cO(\depth(w)|Q|^2|\Delta|)$. The rest of the space used is related to the \dsepsabbr $\cD$, which we will now bound by $\cO(\outgap(\cT,w[1,k])|Q|^2|\Delta|)$ at each step $k$.

For every step $k$ of the algorithm, consider an \dsepsabbr $\cD^{\textsf{trim}}_k$ which is composed solely of the nodes that are reachable from of the ones stored in $S^k$, or the ones stored in some hash table in $T^k$ (borrowing the notation from Section~\ref{sec:eval}). A simple induction argument on $k$ shows that the rest of the nodes in $\cD$ can be discarded with no effect over the correctness of the algorithm, so they are not considered in the memory used by it. Therefore, proving that at each step $| \cD^{\textsf{trim}}_k | \in \cO(\outgap(\cT,w[1,k])|Q|^2|\Delta|)$ is enough to complete the proof. 

Let $\cI$ be the set of positions less than $k$ that appear in some output of $\sem{\cT}(w[1,k] \cdot w')$ for some $w \cdot w'\in \pwnS$. We now refer to Lemma~\ref{vpt:steps} since it implies that for each node $v$ stored in $S^k$ or the topmost hash table in $T^k$, each sequence in $\L_{\D}(v)$ corresponds to at least one valid run of $\cT$ over $w[1,k]$, and since $\cT$ is trimmed, each one of these runs is part of an accepting run of $\cT$ over $w[1,k] \cdot w'$, for some word $w'$. Therefore, each of the positions that appear in some of these sets is in $\cI$. Furthermore, we can use this lemma to characterize the positions in the rest of the hash tables in $T^k$, since appending any close symbol $\cl{a}$ to $w[1,k]$ will make the algorithm pop an element from $T$, which will make the next hash table the topmost. This argument can be extended to any of the hash tables in $T^k$, so in all, Lemma~\ref{vpt:steps} implies that all of the positions that appear in some non-empty leaf in $\cD^{\textsf{trim}}_k $ are in $\cI$. Theorem~\ref{theo:main} implies that the set of these positions corresponds exactly to $\cI$, since if there was any position in $\cI$ missing from the leaves in $\cD$, the algorithm would not be correct.

Lastly, we will show that $| \cD^{\textsf{trim}}_k | \leq | \cI | \times |Q|^2|\Delta| \times d$, where $d$ is a constant. Towards this goal, we will bound the number of $\eps$-leaves, non-empty leaves, and product nodes by $\cO(| \cI | \times |Q|^2|\Delta|)$ independently. Union nodes can be bounded by counting the other types of nodes: The only cases where a union node is created are (1) in line 35, only after a product node had been created, (2) during the creation of a product node (as described in Theorem~\ref{theo:data-structure-eps}), (3) in line 46, but only whenever one of the previous lines had created either a product node or a non-empty leaf node, and (4) in line 15, which only happens once at the end of the update phase, and iterates by nodes in $S$, so the number of union nodes created at this for loop at most $| \cD^{\textsf{trim}}_k |$. The number of $\eps$-nodes is at most one, owing to Theorem~\ref{theo:data-structure-eps}, since its proof shows that at the end of step $k$, each of the nodes in $\cD^{\textsf{trim}}_k$ is $\eps$-safe. The number of non-empty leaves can be straightforwardly shown to be $\cO(| \cI | \times |Q|^2|\Delta|)$ since each of these leaves was introduced in some step in $\cI$, and in each one of these steps, the number of operations that the algorithm does is in $\cO(|Q|^2|\Delta|)$. 

To show a bound over the number of product nodes, consider a slight modification of Algorithm~\ref{alg:preprocessing}: 
product nodes that are created in line 44 are labeled with the step $k$ in which the algorithm is at the moment. 
Now, for a set of nodes $A$ let $\cD^{\textsf{trim}}_A$ be the \dsepsabbr that is obtained by removing all of the nodes that are not reachable from some node in $A$ from $\cD$.
Let $\cI_A$ be the set of positions that appear in some non-empty leaf node in $\cD^{\textsf{trim}}_A$, and let ${\pazocal P}_A$ be the set of step labels that appear in some product node in $\cD^{\textsf{trim}}_A$ excluding the steps in $\cI_A$. Also, let $V_k$ be the the set of nodes in $\cD^{\textsf{trim}}_k$. We will show by induction on $k$ that $| {\pazocal P}_A | \leq | \cI_A | - 1$ for any $A\subseteq V_k$ which contains at least one node that is not an $\eps$-node. Consider any set $A \subseteq V_k$. The first observation we make here is that we can partition the nodes in $A$ to a collection $\{A_H\}$ of sets of nodes depending on the hash table $H$ they are reachable from, given that they are in $\cD^{\textsf{trim}}_k$. Let ${\pazocal Q}_A = {\pazocal P}_A \cup \cI_A$. From Lemma~\ref{vpt:steps} we get that for two different sets $A_{H_1}$ and $A_{H_2}$ in the collection, the sets ${\pazocal Q}_{H_1}$ and ${\pazocal Q}_{H_2}$ are disjoint. Therefore, in step $k$, if the algorithm enters $\textsc{CloseStep}$, we only need to focus on the set $A_{S}$, and if the algorithm enters $\textsc{OpenStep}$ on the set and $A_{T^k}$ (note that in this case, $S^k$ is composed only of $\eps$-nodes). The rest of the hash tables were reachable  on a previous step, so the inequality can be reached by adding up the inequalities that held in those steps. First, note that if none of the product nodes in $A$ were created in step $k$, then we can consider the set $B$ of nodes reachable from $A$ that were created in a previous step and notice that ${\pazocal P}_A = {\pazocal P}_B$ and $\cI_B \subseteq \cI_A$, so the statement follows since $B \subseteq V_{k-1}$. Also, note that if the algorithm in step $k$ enters $\textsc{OpenStep}$, all of the product nodes created in this step are directly connected to a non-$\eps$ leaf created in this same step, so the statement also follows. From this point on, we can assume that the algorithm enters $\textsc{CloseStep}$ on step $k$, and all of the nodes in $A$ are reachable from some node in $S^k$, and there is at least one product node in $A$ that was created in step $k$. Let $P$ be the set of product nodes in $A$ that were created on step $k$. Consider the span $\clevel(k) = \spanc{j}{k}$. The \textsf{prod} operation in line 44 either creates a new product node, or makes $v$ reference a node that already existed in $S^{k-1}$ or the topmost table in $T^j$. Furthermore, if a product node is created in line 44, then Theorem~\ref{theo:data-structure-eps} tells us that it must be connected to a node in $S^{k-1}$ that is not an $\eps$-node, and to a node in the topmost table in $T^j$ that is also not an $\eps$-node. Consider now the set of nodes $B$ that is made up of (1) nodes in $A$ that are reachable from $S^{k-1}$ and (2) nodes in $S^{k-1}$ that are connected to a product node in $P$. Consider also the set of nodes $C$ that is made up of (1) nodes in $A$ that are reachable from the topmost table in $T^j$, and nodes in the topmost table in $T^j$ that are connected to a node in $P$. Note that both sets $B$ and $C$ contain a non-$\eps$ node, and are composed of nodes created in a previous step, so assume that $|{\pazocal P}_B| \leq |\cI_B| -1$ and that $|{\pazocal P}_C| \leq |\cI_C| -1$. It can be seen that every node in $\cD^{\textsf{trim}}_A$ is either in $B$, $C$, or was created on step $k$, so we get that ${\pazocal P}_A = {\pazocal P}_B \cup {\pazocal P}_C \cup \{k\}$ and $\cI_A \supseteq \cI_B \cup \cI_C$. From Lemma~\ref{vpt:steps} we get that ${\pazocal Q}_B$ and ${\pazocal Q}_C$ are disjoint, and putting these facts to together gives us that $| {\pazocal P}_A | =  |{\pazocal P}_B| + |{\pazocal P}_C| + 1\leq |\cI_B| + |\cI_C| - 1 \leq |\cI_A|-1$.

Having proven this statement, we can deduce that the number of product nodes in $\cD^{\textsf{trim}}_k$ is in $\cO(| \cI | \times |Q|^2|\Delta|)$ since the number of steps where they are created is bounded by $|\cI|$. Therefore, $| \cD^{\textsf{trim}}_k | \leq | \cI | \times |Q|^2|\Delta| \times d$, for some constant $d$. This concludes the proof.

\subsection{Counterexample that the algorithm is not instance optimal}
In this section, we show a \vpt for which only logarithmic space in $\outgap(\cT,w)$ is enough for any stream $\Stream$. Let $\oout$ be any output symbol and consider a \vpt $\cT$ for which the output set is $\sem{\cT}(w)  = \{\{(\oout, i)\}\mid 1\leq i\leq \vert w \vert \}$ if the last symbol in $w$ is $\$$ and the empty set otherwise. Clearly, the \ogapname of any $w$ with respect to $\cT$ is linear in $\vert w\vert$. However, one could design a streaming evaluation algorithm that has only a counter that stores the length of the input so far, and produces the correct output set after reading the last symbol in $w$. The enumeration phase can easily be done with output-linear delay (i.e., by counting from $1$ to $\vert w\vert$). This completes the example.

\renewcommand{\atitle}{\ref{sec:ds}}
\section{Proofs from Section~\atitle}\label{sec:appendixds}

\subsection{Proof of Proposition \ref{ds:lindelay}}

Let $\D = (\Sigma, V, I, \ell, r, \lambda)$ be a $k$-bounded \dsabbr\ and $v\in V$. We will show that the set $\L_{\D}(v)$ can be enumerated with output-linear delay. To show that this is possible we use a data structure we call an {\em output tree}. This is a dynamic binary tree $T$ which appends itself to an \dsabbr \ $\cD$. We define it as follows: If $v$ is a leaf node in $\cD$, then $v$ is an output tree of $\cD$. If $T$ is an output tree and $v$ is a union node, then $T' = v(T)$ is an output tree of $\cD$. If $T_1$ and $T_2$ are output trees and $v$ is a product node, then $v(T_1,T_2)$ is an output tree of $\cD$. In either case, we say that $T$ is rooted in $v$, and we notate it by ${\sf root}(T) = v$. For an output tree $T$ we define the functions ${\sf child}_T, {\sf lchild}_T$ and ${\sf rchild}_T$ as follows: If $v(T')$ is a subtree of $T$, then ${\sf child}_T(v) = T'$. If $v(T_1,T_2)$ is a subtree of $T$, then ${\sf lchild}_T(v) = T_1$ and ${\sf rchild}_T(v) = T_2$. These functions are not defined in any other case.

\begin{definition}
	Let $\D = (\Sigma, V, I, \ell, r, \lambda)$ be an \dsabbr. An output tree $T$ of $\cD$ is full if for each node $v$ in $T$ the following hold: If $v$ is an union node in $\cD$, then ${\sf child}_T(v)$ is either rooted in $\ell(v)$ or in $r(v)$. If $v$ is a product node in $\cD$, then ${\sf lchild}_T(v)$ is rooted in $\ell(v)$ and ${\sf rchild}_T(v)$ is rooted in $r(v)$.
\end{definition}

We define the function ${\sf print}(T)$ as follows: If $v$ is a leaf node $v$, then ${\sf print}(T) = \lambda(v)$. If $T = v(T')$ then ${\sf print}(T) = {\sf print}(T')$. If $T = v(T_1, T_2)$ then ${\sf print}(T) = {\sf print}(T_1)\cdot {\sf print}(T_2)$.

\begin{lemma}\label{appendix:output-tree-print}
	Let $\cD$ be an \dsabbr and let $v$ be a node in $\cD$. For a full output tree $T$ of $\cD$ rooted on $v$ it holds that ${\sf print}(T) \in \L_{\cD}(v)$.
\end{lemma}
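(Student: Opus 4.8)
The plan is to prove Lemma~\ref{appendix:output-tree-print} by structural induction on the output tree $T$, mirroring the mutually recursive definitions of output trees, of fullness, of ${\sf print}$, and of $\L_{\cD}$. The observation that makes the induction go through cleanly is that fullness is a hereditary property: since the definition of \emph{full} quantifies over \emph{every} node occurring in $T$, every subtree of a full output tree is again a full output tree, and moreover it is rooted in exactly the node that the construction of $T$ dictates. So in each inductive case the induction hypothesis applies to the relevant children with the correct roots.

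For the base case, suppose $T = v$ is a single leaf node of $\cD$. Then $\lambda(v) = a$ for some $a \in \Sigma$, so ${\sf print}(T) = a$ and $\L_{\cD}(v) = \{a\}$, whence ${\sf print}(T) \in \L_{\cD}(v)$.

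For the inductive step there are two cases according to the shape of $T$. If $T = v(T')$ with $v$ a union node, then by fullness $T'$ is a full output tree rooted in $\ell(v)$ or in $r(v)$; assume the former, the other case being symmetric. By the induction hypothesis ${\sf print}(T') \in \L_{\cD}(\ell(v))$, and since ${\sf print}(T) = {\sf print}(T')$ and $\L_{\cD}(v) = \L_{\cD}(\ell(v)) \cup \L_{\cD}(r(v))$, we conclude ${\sf print}(T) \in \L_{\cD}(v)$. If $T = v(T_1, T_2)$ with $v$ a product node, then by fullness $T_1$ is a full output tree rooted in $\ell(v)$ and $T_2$ is a full output tree rooted in $r(v)$; by the induction hypothesis ${\sf print}(T_1) \in \L_{\cD}(\ell(v))$ and ${\sf print}(T_2) \in \L_{\cD}(r(v))$, hence ${\sf print}(T) = {\sf print}(T_1)\cdot{\sf print}(T_2) \in \L_{\cD}(\ell(v)) \cdot \L_{\cD}(r(v)) = \L_{\cD}(v)$.

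There is no genuine obstacle: the statement is a direct unfolding of the recursive definitions, and the only thing to be mildly careful about is noting that ``full'' propagates to subtrees and that the root of each subtree is pinned down by the construction — both immediate from the definitions. One could equivalently phrase the whole argument as an induction on the structure of $\cD$ restricted to the nodes appearing in $T$, but the induction on $T$ itself is the most direct route.
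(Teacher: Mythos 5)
Your proof is correct and takes essentially the same route as the paper's: a direct induction on the tree $T$ (the paper says ``on the size of $T$'', you say structural, but these coincide here), unfolding the definitions of fullness, ${\sf print}$, and $\L_{\cD}$ in each of the three cases. The only difference is that you spell out why the induction hypothesis applies (fullness is hereditary and pins down the roots of subtrees), which the paper leaves implicit.
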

\begin{proof}
	We prove this by induction on the size of $T$. The case $T = v$ where $v$ is a leaf node is trivial. If $T = v(T')$, $v$ is an union node, so the proof follows since ${\sf print}(T)$ is equal to ${\sf print}(T')$ which is either in $\L(\ell(v))$ or $\L(r(v))$, and therefore in $L(v)$. If $T = v(T_1,T_2)$ then $v$ is a product node. We have that ${\sf print}(T_1)\in\L(\ell(v))$ and ${\sf print}(T_2)\in\L(r(v))$, from which it follows that ${\sf print}(T)\in\L(v)$.
\end{proof}

\begin{lemma}\label{appendix:output-tree-unique}
	Let $\cD$ be an unambiguous \dsabbr and let $v$ be a node in $\cD$. For each $\mu\in\L_{\cD}(v)$ there exists exactly one full output tree $T_{\mu}$ of $\cD$ rooted in $v$ such that ${\sf print}(T) = \mu$.
\end{lemma}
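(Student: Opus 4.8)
The plan is to prove existence and uniqueness simultaneously, by well-founded induction on the node $v$, using the acyclicity of the directed graph $(V,\{(v,\ell(v)),(v,r(v))\mid v\in V\})$ to justify the induction order: for an inner node $v$ we assume the statement already holds for both $\ell(v)$ and $r(v)$. Throughout, Lemma~\ref{appendix:output-tree-print} plays a companion role — it certifies that the ${\sf print}$ of any full output subtree rooted at a child of $v$ lies in the language of that child, which is precisely the hook needed to apply the two unambiguity conditions on $\cD$.

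In the base case $v$ is a leaf, so $\L_{\cD}(v)=\{\lambda(v)\}$, the unique output tree rooted at $v$ is $v$ itself, ${\sf print}(v)=\lambda(v)$, and there is nothing more to check. For the inductive step with $v$ a union node, write $\mu\in\L_{\cD}(v)=\L_{\cD}(\ell(v))\cup\L_{\cD}(r(v))$; by unambiguity the two languages are disjoint, so $\mu$ belongs to exactly one of them, say the left one. The inductive hypothesis gives the unique full output tree $T'$ rooted at $\ell(v)$ with ${\sf print}(T')=\mu$, and $v(T')$ is then a full output tree rooted at $v$ with the correct print. For uniqueness, any full output tree $T$ rooted at $v$ has ${\sf child}_T(v)$ rooted at $\ell(v)$ or at $r(v)$; the latter is impossible, since by Lemma~\ref{appendix:output-tree-print} it would force ${\sf print}(T)\in\L_{\cD}(r(v))$, contradicting $\mu\notin\L_{\cD}(r(v))$; hence ${\sf child}_T(v)$ is rooted at $\ell(v)$ and the inductive hypothesis pins it down to $T'$.

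For the inductive step with $v$ a product node, write $\mu\in\L_{\cD}(\ell(v))\cdot\L_{\cD}(r(v))$; unambiguity supplies the unique factorization $\mu=\mu_1\mu_2$ with $\mu_1\in\L_{\cD}(\ell(v))$ and $\mu_2\in\L_{\cD}(r(v))$. Applying the inductive hypothesis to $(\ell(v),\mu_1)$ and $(r(v),\mu_2)$ yields unique full output trees $T_1,T_2$, and $v(T_1,T_2)$ is a full output tree rooted at $v$ whose print is $\mu_1\mu_2=\mu$. For uniqueness, a full output tree $T$ rooted at $v$ has ${\sf lchild}_T(v)$ and ${\sf rchild}_T(v)$ rooted at $\ell(v)$ and $r(v)$ by definition of ``full''; by Lemma~\ref{appendix:output-tree-print} their prints lie in $\L_{\cD}(\ell(v))$ and $\L_{\cD}(r(v))$ and concatenate to $\mu$, so the uniqueness of the factorization forces them to equal $\mu_1$ and $\mu_2$, and the inductive hypothesis then forces ${\sf lchild}_T(v)=T_1$ and ${\sf rchild}_T(v)=T_2$.

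I expect the only subtle point to be setting up the induction correctly: it must run along the DAG (so that $\ell(v)$ and $r(v)$ are strictly below $v$ in the well-founded order granted by acyclicity), rather than on some ``structure of $\mu$'', which is not meaningful here. Everything else is a direct unwinding of the definitions of unambiguous \dsabbr, full output tree, and ${\sf print}$, with Lemma~\ref{appendix:output-tree-print} doing the bookkeeping of membership in the child languages.
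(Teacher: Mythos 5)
Your proof is correct and follows essentially the same structure as the paper's: induction along the acyclic DAG (the paper concretizes the well-founded order via $\mathsf{reach}_{\cD}(v)$, the number of reachable nodes, but this is the same ordering), with the union-node case ruling out the right child via Lemma~\ref{appendix:output-tree-print} and disjointness, and the product-node case invoking the unique factorization. The only cosmetic difference is your choice to phrase the induction directly as well-founded induction on the DAG rather than on an explicit numeric measure.
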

\begin{proof}
	Let ${\sf reach_{\cD}}(v)$ be the number of nodes reachable from $v$ in $\cD$, including itself. We will prove this lemma by induction in ${\sf reach_{\cD}}(v)$. If ${\sf reach_{\cD}}(v) = 1$, then $v$ is a leaf node and the proof follows directly since the only output tree rooted in $v$ is $v$ itself. Assume that it holds for every node $v$ such that ${\sf reach_{\cD}}(v)< s$. Let $v$ be a node such that ${\sf reach_{\cD}}(v) = s$ and let $\mu\in\L(v)$. If $v$ is a union node suppose without loss of generality that $\mu\in\L(\ell(v))$. Note that since $\cD$ is unambiguous we have that $\mu\not\in\L(r(v))$. If $T_{\mu} = v(T')$ and $T'$ was rooted in $r(v)$, Lemma~\ref{appendix:output-tree-print} would imply that ${\sf print}(T_{\mu}) = {\sf print}(T') \in \L(r(v))$ which leads to a contradiction. Therefore, $T_{\mu}$ could only be of the form $v(T')$ where $T'$ is rooted in $\ell(v)$. From our hypothesis, there exists only one full output tree $T_{\mu}'$ such that ${\sf print}(T_{\mu}') = \mu$, so the proof follows from taking $T_{\mu} = v(T_{\mu}')$. If $v$ is a product node note that any full output tree $T$ rooted in $v$ is of the form $v(T_1,T_2)$, where $T_1$ and $T_2$ are rooted in $\ell(v)$ and $r(v)$ respectively. Since $\cD$ is unambiguous, there exists only two strings $\mu_1$ and $\mu_2$ such that $\mu = \mu_1\cdot\mu_2$ and $\mu_1\in\L(\ell(v))$ and $\mu_2\in\L(r(v))$. Let $T_{\mu_1}$ and $T_{\mu_2}$ be the only full output trees that are rooted in $\ell(v)$ and $r(v)$ respectively for which the hypothesis hold. The proof follows by taking $T_{\mu} = v(T_{\mu_1},T_{\mu_2})$.
\end{proof} 

For an \dsabbr $\cD$ and node $v$ we define a total order over the full output trees rooted in $v$ recursively: If $v$ is a leaf node there exists only one tree rooted in $v$ so the order is trivial. If $v$ is a union node then let $T_1 = v(T_1')$ and $T_2 = v(T_2')$ be full output trees. We have that $T_1 < T_2$ if and only if ${\sf root}(T_1') = \ell(v)$ and ${\sf root}(T_2') = r(v)$, or $T_1' < T_2'$. If $v$ is a product node then let $T = v(T_1,T_2)$ and $T' = v(T_1',T_2')$. We have that $T < T'$ if and only if $T_1 < T_1'$, or $T_1 = T_1'$ and $T_2 < T_2'$.

For an \dsabbr $\cD$ and an output tree $T$ on $\cD$ we define the operation ${\sf tilt}(T)$ as follows: If $T = v$, then ${\sf tilt}(T) = v$. If $T = v(T')$ where ${\sf root}(T') = \ell(v)$, then ${\sf tilt}(T) = v({\sf tilt}(T))$. If $T = v(T')$ where ${\sf root}(T') = r(v)$, then ${\sf tilt}(T) = {\sf tilt}(T')$. If $T = v(T_1,T_2)$, then ${\sf tilt}(T) = v({\sf tilt}(T_1),{\sf tilt}(T_2))$. Intuitively, what this operation does is to bypass any union node in $T$ whose child is a right child in $\cD$.

\begin{definition}
	For an \dsabbr $\cD$, an output tree $T$ of $\cD$ is left-tilted if it can be obtained as $T = {\sf tilt}(T')$ where $T'$ is a full output tree.
\end{definition}

Two left-tilted output trees can be seen in Figure~\ref{fig-output-trees}. The first tree in the figure is also full. Note that since the root could be a union node whose child is a right child, the root of ${\sf tilt}(T)$ could be a different node than the root of $T$. We also notice the following result.

\begin{lemma}\label{appendix:enum-first}
	Let $\cD$ \dsabbr with a node $v$. The first tree $T$ in the ordered sequence of full output trees rooted in $v$ is also left-tilted. In other words, ${\sf tilt}(T) = T$.
\end{lemma}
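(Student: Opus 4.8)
The plan is to prove the identity ${\sf tilt}(T) = T$ directly, by induction on the number of nodes reachable from $v$ in $\cD$, unwinding the recursive definitions of the total order on full output trees and of the operation ${\sf tilt}$ in parallel. Observe first that ${\sf tilt}(T) = T$ already witnesses that $T$ is left-tilted (take $T$ itself as the preimage in the definition of left-tilted), so this fixed-point identity is the whole content of the statement. Write $T_{\min}$ for the minimum full output tree rooted in $v$; since $\cD$ is finite and acyclic, full output trees rooted in $v$ have bounded size and there are finitely many of them, so $T_{\min}$ is well defined. The base case is $v$ a leaf: the only full output tree rooted in $v$ is $v$ itself, and ${\sf tilt}(v) = v$ by definition.

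For the inductive step I would split on the label of $v$. If $v$ is a union node, every full output tree rooted in $v$ has the form $v(T')$ with $T'$ a full output tree rooted in $\ell(v)$ or in $r(v)$; by the first clause of the order, every tree whose child is rooted in $\ell(v)$ precedes every tree whose child is rooted in $r(v)$, and within the first class the order is precisely the order on full output trees rooted in $\ell(v)$. Hence $T_{\min} = v(T'_{\min})$, where $T'_{\min}$ is the minimum full output tree rooted in $\ell(v)$. By the induction hypothesis ${\sf tilt}(T'_{\min}) = T'_{\min}$, and since ${\sf root}(T'_{\min}) = \ell(v)$, the clause of ${\sf tilt}$ for a child equal to the left child gives ${\sf tilt}(v(T'_{\min})) = v({\sf tilt}(T'_{\min})) = v(T'_{\min}) = T_{\min}$. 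If $v$ is a product node, every full output tree rooted in $v$ has the form $v(T_1,T_2)$ with $T_1$ rooted in $\ell(v)$ and $T_2$ rooted in $r(v)$, and the order is the lexicographic order on the pair $(T_1,T_2)$; hence $T_{\min} = v(T_1^{\min},T_2^{\min})$ where $T_1^{\min}, T_2^{\min}$ are the respective minima, which by the induction hypothesis are fixed by ${\sf tilt}$, so ${\sf tilt}(T_{\min}) = v({\sf tilt}(T_1^{\min}),{\sf tilt}(T_2^{\min})) = v(T_1^{\min},T_2^{\min}) = T_{\min}$.

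The one place that requires a little care is the reduction to subproblems: one must check that the minimum full output tree rooted in $v$ is obtained exactly by plugging the minimum full output trees rooted in $\ell(v)$ and $r(v)$ into the corresponding slots, which is immediate once the total order is read as defined by recursion on those same subtrees, and one must check that the induction is well founded, which holds because $\ell(v)$ and $r(v)$ reach strictly fewer nodes than $v$ in the acyclic graph underlying $\cD$. Neither unambiguity nor $k$-boundedness is needed for this lemma. I do not expect a genuine obstacle here; the whole argument is a routine structural induction once ${\sf tilt}$ and the order are unwound together.
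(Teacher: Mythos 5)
Your proof is correct and takes essentially the same approach as the paper: the paper names the auxiliary construction $\mathsf{build}(v)$ that always descends into $\ell(v)$ at union nodes and then shows by induction that it is the minimum, whereas you argue directly that the minimum full output tree has this shape and is therefore fixed by $\mathsf{tilt}$; the inductive content is identical.
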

\begin{proof}
	We define the operation ${\sf build}(v)$ as follows. If $v$ is a leaf node, then ${\sf build}(v) = v$. If $v$ is a union node then ${\sf build}(v) = v({\sf build}(\ell(v))$. If $v$ is a product node then ${\sf build}(v) = v({\sf build}(\ell(v)),{\sf build}(r(v)))$. Let $T'$ be a different full output tree rooted in $v$. A straightforward induction shows that $T < T'$.
\end{proof}

\begin{lemma}
	Let $\D$ be an \dsabbr with an output tree $T$. We have that ${\sf print}({\sf tilt}(T)) = {\sf print}(T)$.
\end{lemma}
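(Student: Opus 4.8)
The plan is to prove this by structural induction on the output tree $T$, reading off the recursive definitions of ${\sf tilt}$ and ${\sf print}$ directly. The guiding observation, which makes the statement almost immediate, is that ${\sf print}$ of a one‑child (union) node coincides with ${\sf print}$ of its child, so bypassing such a node or keeping it never affects the printed word; and ${\sf print}$ of a two‑child (product) node is the concatenation of the prints of its children, so tilting the two subtrees independently and concatenating yields the same string by the two inductive hypotheses.

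Concretely, I would split into the four cases of the definition of ${\sf tilt}$. If $T = v$ is a leaf, then ${\sf tilt}(T) = v = T$ and there is nothing to show. If $T = v(T')$ with ${\sf root}(T') = \ell(v)$, then ${\sf tilt}(T) = v({\sf tilt}(T'))$, so the print rule for a one‑child node gives ${\sf print}({\sf tilt}(T)) = {\sf print}({\sf tilt}(T'))$, which equals ${\sf print}(T') = {\sf print}(T)$ by the induction hypothesis together with the same print rule. If $T = v(T')$ with ${\sf root}(T') = r(v)$, then ${\sf tilt}(T) = {\sf tilt}(T')$, hence ${\sf print}({\sf tilt}(T)) = {\sf print}({\sf tilt}(T')) = {\sf print}(T')$ by induction, and ${\sf print}(T') = {\sf print}(T)$ again by the one‑child print rule. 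Finally, if $T = v(T_1,T_2)$ is rooted at a product node, then ${\sf tilt}(T) = v({\sf tilt}(T_1),{\sf tilt}(T_2))$, so ${\sf print}({\sf tilt}(T)) = {\sf print}({\sf tilt}(T_1))\cdot{\sf print}({\sf tilt}(T_2))$, which by the two induction hypotheses equals ${\sf print}(T_1)\cdot{\sf print}(T_2) = {\sf print}(T)$.

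There is no genuine obstacle; the only point deserving a moment of care is that ${\sf tilt}$ may change the root of the tree — precisely in the third case, where a union node whose child is a right child in $\D$ is bypassed — so one should note that ${\sf print}$ is defined on an arbitrary output tree of $\D$ irrespective of where it is rooted, which keeps every step of the computation well formed. One also uses implicitly that ${\sf tilt}(T)$ is again an output tree of $\D$, which follows from the same case analysis. Altogether this is a short, routine induction.
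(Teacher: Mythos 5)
Your proof is correct and is exactly the "straightforward induction on the tree" that the paper states without elaboration, working case‑by‑case through the definition of ${\sf tilt}$ and applying the defining equations of ${\sf print}$. The remark about the root possibly changing in the bypass case is a sensible sanity check but does not depart from the paper's approach.
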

\begin{proof}
	The proof follows by a straightforward induction on the tree. 
\end{proof}


\begin{algorithm}[t]
	\caption{Enumeration of the set $\L_{\D}(v)$ for a CE $\D$ and a node $v$.}\label{alg:dsenum2}
	\begin{varwidth}[t]{0.6\textwidth}
		\begin{algorithmic}[1]
			\Procedure{{Enumerate}}{$\D, v$}
			\State $T\gets\textsc{BuildTree}(\D, v)$
			\State ${\tt Output}\ \#$
			\While{$T\neq\emptyset$}
			\State ${\tt Output}$ ${\sf print}(T)$
			\State ${\tt Output}\ \#$
			\State $T \gets \textsc{NextTree}(D,T)$
			\EndWhile
			\EndProcedure
			\Procedure{{BuildTree}}{$\D, v$}
			\If{$\lambda(v)\in\Sigma$}
			\State {\bf return} $v$ 
			\ElsIf{$\lambda(v) = \odot$}
			\State $T_1\gets \textsc{BuildTree}(\D,\ell(v))$
			\State $T_2\gets \textsc{BuildTree}(\D,r(v))$
			\State {\bf Return} $v(T_1,T_2)$
			\ElsIf{$\lambda(v) = \cup$}
			\State $T \gets \textsc{BuildTree}(\D,\ell(v))$
			\State {\bf Return} $v(T)$
			\EndIf
			\EndProcedure
			\algstore{myalg}
		\end{algorithmic}	
	\end{varwidth}
	\hspace{1em}
	\begin{varwidth}[t]{0.6\textwidth}
		\begin{algorithmic}[1]
			\algrestore{myalg}
			\Procedure{{NextTree}}{$\D,T$}
			\If{$T = v$}
			\State {\bf return} $\emptyset$ 
			\ElsIf{$T = v(T_1,T_2)$}
			\State $T_2\gets\textsc{NextTree}(\D,T_2)$
			\If{$T_2$ is empty}
			\State $T_1\gets\textsc{NextTree}(\D,T_1)$
			\If{$T_1$ is empty}
			\State {\bf return} $\emptyset$
			\EndIf
			\State $T_2 \gets\textsc{BuildTree}(\D,r(v))$
			\EndIf
			\State {\bf return} $T$
			\ElsIf{$T = v(T')$}
			\State $T'\gets\textsc{NextTree}(\D,T')$
			\If{$T' = \emptyset$}
			\State $T\gets\textsc{BuildTree}(\D,r(v))$
			\EndIf
			\State {\bf return} $T$
			\EndIf
			\EndProcedure
		\end{algorithmic}
	\end{varwidth}
\end{algorithm}

\begin{figure}[t]
	\centering
	\begin{tikzpicture}[->,>=stealth',roundnode/.style={circle,draw,inner sep=1.2pt},squarednode/.style={rectangle,inner sep=3pt}]
		\node [squarednode] (0) at (0, 0) {$a_1$};
		\node [squarednode] (1) at (2, 0) {$a_2$};
		\node [squarednode] (2) at (4, 1) {$a_3$};
		\node [squarednode] (3) at (1, 1) {$\cup$};
		\node [squarednode] (4) at (2, 2) {$\odot$};
		\node [squarednode] (5) at (3, 3) {$\cup$};
		\node [roundnode] (6) at (0, 0.3) {};
		\node [roundnode] (8) at (4, 1.3) {};
		\node [roundnode] (9) at (1, 1.3) {};
		\node [roundnode] (10) at (2, 2.3) {};
		\node [roundnode] (11) at (3, 3.3) {};
		\draw (3.south west) to (0);
		\draw (3.south east) to (1);
		\draw (4.south west) to (3);
		\draw (4.south east) to (2);
		\draw (5.south west) to (4);
		\draw (5.south east) to (2);
		\draw [dashed, -] [out=200,in=70] (9) to (6);
		\draw [dashed, -] [out=200,in=70] (10) to (9);
		\draw [dashed, -] [out=-20,in=130] (10) to (8);
		\draw [dashed, -] [out=200,in=70] (11) to (10);
	\end{tikzpicture}
	\begin{tikzpicture}[->,>=stealth',roundnode/.style={circle,draw,inner sep=1.2pt},squarednode/.style={rectangle,inner sep=3pt}]
		\node [squarednode] (0) at (0, 0) {$a_1$};
		\node [squarednode] (1) at (2, 0) {$a_2$};
		\node [squarednode] (2) at (4, 1) {$a_3$};
		\node [squarednode] (3) at (1, 1) {$\cup$};
		\node [squarednode] (4) at (2, 2) {$\odot$};
		\node [squarednode] (5) at (3, 3) {$\cup$};
		\node [roundnode] (7) at (2, 0.3) {};
		\node [roundnode] (8) at (4, 1.3) {};
		\node [roundnode] (10) at (2, 2.3) {};
		\node [roundnode] (11) at (3, 3.3) {};
		\draw (3.south west) to (0);
		\draw (3.south east) to (1);
		\draw (4.south west) to (3);
		\draw (4.south east) to (2);
		\draw (5.south west) to (4);
		\draw (5.south east) to (2);
		\draw [dashed, -] [out=-150,in=110] (10) to (7);
		\draw [dashed, -] [out=-20,in=130] (10) to (8);
		\draw [dashed, -] [out=200,in=70] (11) to (10);
	\end{tikzpicture}
	\caption{An example iteration of an output tree. The subjacent \dsabbr\ $\D$ is represented by solid edges, and the output tree with curve dashed lines. The next tree would be the single node $v$ for which $\lambda(v) = a_3$.}
	\label{fig-output-trees}
\end{figure}
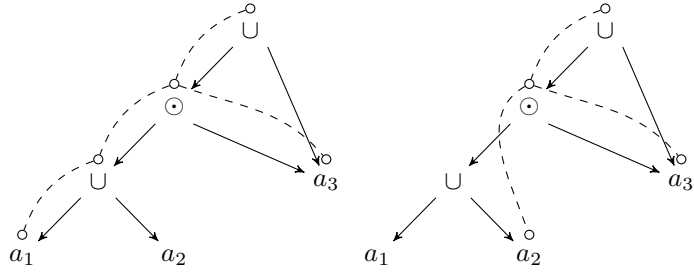

We are ready to discuss the enumeration algorithm. Our algorithm receives an unambiguous $k$-bounded \dsabbr $\cD$ along with one of its nodes $v$ and prints the elements in $\L_{\cD}(v)$ one by one. The way this is done is by generating the sequence of left-tilted output trees ${\sf tilt}(T_1),\ldots,{\sf tilt}(T_m)$ for which $T_1 < \cdots < T_m$ is the complete sequence of full output trees rooted in $v$. After generating each tree $T$, the procedure outputs the string ${\sf print}(T)$ which can be easily done with a depth-first traversal on the tree. The procedure is detailed in Algorithm~\ref{alg:dsenum2}.

The procedure {\sc BuildTree} builds a completely embedded output tree rooted in $u$. The procedure {\sc NextTree} receives a tree rooted in $u$ and recursively builds the next tree in the sequence ${\sf tilt}(T_1),\ldots,{\sf tilt}(T_m)$ for which $T_1 < \cdots < T_m$ is the sequence of full output trees rooted in $u$. 

We can deduce the following from Lemma~\ref{appendix:enum-first}:
\begin{corollary}
	Let $\D$ be an \dsabbr and let $v$ be one of its nodes. {\sc BuildTree}$(\cD,v)$ builds a full output tree $T$ that is the first in the ordered sequence of full output trees rooted in $v$.
\end{corollary}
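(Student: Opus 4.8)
The plan is to observe that the procedure \textsc{BuildTree}$(\D,v)$ computes precisely the tree ${\sf build}(v)$ introduced inside the proof of Lemma~\ref{appendix:enum-first}, and then to reuse — and slightly expand — the argument given there. Concretely, I would first argue by induction on the number of nodes reachable from $v$ in $\D$ (which is finite by acyclicity) that \textsc{BuildTree}$(\D,v)$ terminates and that its output agrees with ${\sf build}$: the leaf clause returns $v$, the product clause returns $v(\textsc{BuildTree}(\D,\ell(v)),\textsc{BuildTree}(\D,r(v)))$, and the union clause returns $v(\textsc{BuildTree}(\D,\ell(v)))$, which match the three defining clauses of ${\sf build}$ verbatim once the inductive hypothesis is applied to $\ell(v)$ and $r(v)$.

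Next I would check that $T := {\sf build}(v)$ is a full output tree rooted in $v$, by a routine structural induction following the same case split: in the leaf case $T=v$ is a full output tree rooted in $v$ by definition; in the product case ${\sf lchild}_T(v)={\sf build}(\ell(v))$ is rooted in $\ell(v)$ and ${\sf rchild}_T(v)={\sf build}(r(v))$ is rooted in $r(v)$, and both are full by the inductive hypothesis; in the union case ${\sf child}_T(v)={\sf build}(\ell(v))$ is rooted in $\ell(v)$, which is one of the two options permitted by the definition of a full output tree, and it is full by the inductive hypothesis.

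The heart of the argument is minimality: $T \le T'$ for every full output tree $T'$ rooted in $v$, proved by induction on the number of nodes reachable from $v$, directly unfolding the definition of the total order on full output trees rooted in $v$. If $v$ is a leaf there is nothing to prove. If $v$ is a union node, write $T'=v(T'')$: either $T''$ is rooted in $r(v)$, in which case $T=v({\sf build}(\ell(v))) < T'$ by the first disjunct of the order's definition; or $T''$ is rooted in $\ell(v)$, in which case the inductive hypothesis gives ${\sf build}(\ell(v)) \le T''$ and hence $T \le T'$ by the second disjunct. If $v$ is a product node, write $T'=v(T_1',T_2')$ with $T_1'$ rooted in $\ell(v)$ and $T_2'$ rooted in $r(v)$: applying the inductive hypothesis to $\ell(v)$ gives ${\sf build}(\ell(v)) \le T_1'$, and if this is strict we are done, while if ${\sf build}(\ell(v)) = T_1'$ we apply the inductive hypothesis to $r(v)$ to get ${\sf build}(r(v)) \le T_2'$ and conclude $T \le T'$. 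Together with the previous paragraph, this shows $T$ is the minimum, i.e., the first element of the ordered sequence.

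I do not expect a serious obstacle here; the only points needing care are (i) making the identification of \textsc{BuildTree} with ${\sf build}$ rigorous despite $\D$ being a DAG rather than a tree — termination and well-definedness follow from acyclicity, and the recursion unfolds the DAG into a finite output tree, so the ordered sequence of full output trees rooted in $v$ is indeed a finite list with a well-defined first element — and (ii) being precise about the induction measure, for which the count of nodes reachable from $v$ (as in the proof of Lemma~\ref{appendix:output-tree-unique}) makes all three inductions go through uniformly. Note that unambiguity of $\D$ is not needed for this statement; it is only relevant downstream, via Lemma~\ref{appendix:output-tree-unique}, for relating the ordered sequence of trees to a repetition-free enumeration of $\L_\D(v)$.
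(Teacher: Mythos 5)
Correct, and essentially the same approach as the paper: you identify \textsc{BuildTree} with the ${\sf build}$ operator introduced in the proof of Lemma~\ref{appendix:enum-first} and then carry out, in detail, the ``straightforward induction'' on reachable-node count that the paper leaves implicit. Your extra remarks (termination from acyclicity; unambiguity of $\D$ not needed here) are accurate but do not change the route.
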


We prove the correctness of the algorithm in the following results.

\begin{lemma}\label{appendix:enum-correctlemma}
	Let $\D$ be an \dsabbr and let $v$ be one of its nodes. Let $T_1<\ldots<T_m$ be the sequence of full output trees rooted in $v$. If the procedure {\sc NextTree} receives $(\cD,{\sf tilt}(T_i))$ it returns ${\sf tilt}(T_{i+1})$, or $\emptyset$ if $i = m$.
\end{lemma}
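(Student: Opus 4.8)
The plan is to prove Lemma~\ref{appendix:enum-correctlemma} by structural induction on the left-tilted output tree ${\sf tilt}(T_i)$, following the recursive structure of the \textsc{NextTree} procedure. The key observation is that for a node $v$ of $\cD$, the sequence ${\sf tilt}(T_1) < \cdots < {\sf tilt}(T_m)$ of left-tilted trees is in one-to-one order-preserving correspondence with the full output trees $T_1 < \cdots < T_m$ rooted at $v$ (this uses the earlier lemmas stating that ${\sf tilt}$ preserves ${\sf print}$ and that the order on full output trees is total). So it suffices to track how \textsc{NextTree} transforms the tilted representation, and check that it mirrors the successor operation on full output trees. I would first record the auxiliary fact that \textsc{BuildTree}$(\cD,u)$ returns precisely ${\sf tilt}$ of the first full output tree rooted at $u$ (immediate from the already-stated Corollary and Lemma~\ref{appendix:enum-first}, since that first tree is already left-tilted), since this is invoked as the ``reset'' operation inside \textsc{NextTree}.

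Then I would split into the three cases of the \textsc{NextTree} code, matching the three shapes a left-tilted tree can take. \textbf{Leaf case} ($T = v$ with $\lambda(v)\in\Sigma$): there is only one full output tree rooted at $v$, so $m=1$, $i=m$, and returning $\emptyset$ is correct. \textbf{Product case} ($T = v(T_1,T_2)$ with $\lambda(v)=\odot$): the ordering on product trees is lexicographic on $(T_1,T_2)$, so the successor either advances $T_2$ to its next tilted tree (if one exists), or, failing that, advances $T_1$ and resets $T_2$ via \textsc{BuildTree}$(\cD,r(v))$ to the first tree at $r(v)$; if $T_1$ also has no successor then $T$ was maximal and we return $\emptyset$. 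I would invoke the induction hypothesis on $T_2$ (and on $T_1$) and note that the lexicographic successor of a maximal second component is exactly ``increment first, reset second''. \textbf{Union case} ($T = v(T')$ with $\lambda(v)=\cup$, and since $T$ is tilted, ${\sf root}(T')=\ell(v)$): the full output trees rooted at $v$ split into those with child rooted at $\ell(v)$ (ordered first) followed by those with child rooted at $r(v)$; but after tilting, the latter group loses the top $v$ node entirely. Hence in the tilted representation, the trees with child at $\ell(v)$ appear as $v(\cdot)$ and are immediately followed by the tilted trees rooted at $r(v)$ with no enclosing $v$. So \textsc{NextTree} advances $T'$ by the induction hypothesis; if $T'$ is exhausted, it switches to \textsc{BuildTree}$(\cD,r(v))$, which returns the first tilted tree rooted at $r(v)$ — exactly the next element of the global sequence — and drops the $v$ wrapper, which is what the code does by returning that tree directly. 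If $r(v)$'s sequence were also exhausted this would be handled, but \textsc{BuildTree} always succeeds, so the only $\emptyset$ output here comes through from the child.

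The main obstacle I anticipate is the bookkeeping in the union case: one must be careful that the left-tilted representation genuinely ``forgets'' the top union node for the right branch, so that the successor of the last $\ell(v)$-tree in the tilted sequence is a tree rooted at $r(v)$ rather than at $v$ — this is precisely why \textsc{NextTree} returns \textsc{BuildTree}$(\cD,r(v))$ and not $v(\textsc{BuildTree}(\cD,r(v)))$. Making this rigorous requires spelling out the bijection between $\{{\sf tilt}(T_j)\}_j$ and $\{T_j\}_j$ carefully, in particular that it is a bijection (two distinct full output trees could a priori tilt to the same tree, but unambiguity plus the structure of ${\sf tilt}$ rules this out), and confirming it respects the order defined earlier. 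Once that correspondence is nailed down, each of the three cases is a short verification. Finally, a trivial induction on the tree depth confirms that \textsc{NextTree} runs in time linear in $|T_i|$, hence the overall \textsc{Enumerate} procedure achieves output-linear delay, which is what Proposition~\ref{ds:lindelay} ultimately asserts.
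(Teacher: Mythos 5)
Your decomposition into the algorithm's three cases and the use of \textsc{BuildTree} as the ``reset'' operation match the paper's approach, but your choice of induction measure has a real gap. You propose structural induction on the tree ${\sf tilt}(T_i)$. The paper instead inducts on ${\sf reach}_{\D}(v)$, the number of nodes reachable from $v$. The difference matters precisely in the union case, and your discussion of that case is incomplete in a way your measure cannot repair.

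When $v$ is a union node, let $e$ be the split index so that $T_1,\dots,T_e$ have their child rooted at $\ell(v)$ and $T_{e+1},\dots,T_m$ at $r(v)$. Your union-case argument only analyzes inputs of the form $T = v(T')$ with ${\sf root}(T')=\ell(v)$, which covers $i\le e$. For $i>e$, however, ${\sf tilt}(T_i)$ does \emph{not} have $v$ at its root: tilting bypasses the union node entirely, so ${\sf tilt}(T_i)={\sf tilt}(T_i')$ where $T_i'$ is the corresponding full output tree rooted at $r(v)$. The statement for $v$ at index $i>e$ thus reduces to the statement for $r(v)$ at index $i-e$ on the \emph{same tree}, of the \emph{same size}. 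Your structural induction on the tree cannot discharge this: there is no size decrease. The paper's measure does decrease here since ${\sf reach}_{\D}(r(v)) < {\sf reach}_{\D}(v)$, which is exactly why ``If $i<e$ or $i>e$, the proof follows by induction'' works in the paper. A related symptom appears in your leaf case: you write ``there is only one full output tree rooted at $v$, so $m=1$,'' implicitly identifying the lemma's $v$ with ${\sf root}({\sf tilt}(T_i))$. When $v$ is a union node and $T_i$ is the maximal tree (all right children down to a leaf), ${\sf tilt}(T_i)$ is that single leaf even though $m$ can be large; the conclusion ($i=m$, return $\emptyset$) still holds, but not for the reason you give.

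To fix this, either switch to induction on ${\sf reach}_{\D}(v)$ (the paper's route), or keep a tree-based argument but add an explicit outer layer: observe that for union $v$ and $i>e$ the claim is literally equal to the claim for $r(v)$ at index $i-e$, iterate this until the node coincides with the root of the input tree, and only then apply the structural induction. The remainder of your proposal --- the bijectivity of ${\sf tilt}$ (via unambiguity and ${\sf print}$), the lexicographic-successor reading of the product case, and the observation that \textsc{BuildTree} returns the first (already-tilted) tree --- is correct and aligns with the paper.
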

\begin{proof}
	We prove this by induction in ${\sf reach}_{\cD}(v)$. If $v$ is a leaf node, the sequence consists only of the tree $v$, so the proof follows directly. 
	Assume it holds for nodes $v'$ such that ${\sf reach}_{\cD}(v') < s$ and let $v$ be such that ${\sf reach}_{\cD}(v) = s$. 
	If $v$ is a union node notice that there exists an $e$ such that the sequence of full output trees rooted in $v$ is $T_1<\ldots<T_e<T_{e+1}<\ldots<T_m$ where $T_e = v(T_e')$ and $T_{e+1} = v(T_{e+1}')$, and ${\sf root}(T_e') = \ell(v)$ and ${\sf root}(T_{e+1}') = r(v)$. 
	If $i < e$ or $i > e$, then the proof follows by induction. Otherwise, if $i = e$, note that the procedure {\sc BuildTree}$(\cD,r(v))$ builds the first full output tree rooted in $r(v)$, which is $T_{e+1}'$, and is equal to ${\sf tilt}(T_{e+1})$. 
	If $v$ is a product node the proof follows straightforwardly by induction over the algorithm.
\end{proof}

From the previous results, correctness of the algorithm follows:

\begin{claim}
	{\sc Enumerate} receives an \dsabbr $\D$ and one of its nodes $v$ and outputs all of the elements in $\L_{\cD}(v)$ one by one without repetition.
\end{claim}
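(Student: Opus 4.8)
The plan is to reduce the claim entirely to the scaffolding already in place — the output‑tree formalism, the tilting operation, the total order on full output trees, and Lemmas~\ref{appendix:output-tree-print}--\ref{appendix:enum-correctlemma} — by pinning down the exact sequence of trees that \textsc{Enumerate} walks through. Concretely, I would fix the ordered sequence $T_1 < T_2 < \cdots < T_m$ of \emph{all} full output trees rooted at $v$ (finite, since the graph underlying $\D$ is acyclic), and argue that \textsc{Enumerate} successively holds, in its variable $T$ at the top of the \textbf{while} loop, the trees ${\sf tilt}(T_1), \ldots, {\sf tilt}(T_m)$, printing each one in turn and then halting.

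First I would establish the base case of this loop invariant: the initial call $T \gets \textsc{BuildTree}(\D, v)$ returns the $<$‑minimal full output tree rooted at $v$ (by the corollary following Lemma~\ref{appendix:enum-first}), which by Lemma~\ref{appendix:enum-first} is already left‑tilted, i.e.\ equals ${\sf tilt}(T_1)$. For the inductive step I would invoke Lemma~\ref{appendix:enum-correctlemma}: whenever the variable holds $T = {\sf tilt}(T_i)$ with $i \le m$, the call $\textsc{NextTree}(\D, T)$ returns ${\sf tilt}(T_{i+1})$ if $i < m$ and $\emptyset$ if $i = m$. Hence the loop body executes exactly $m$ times; on the $i$‑th pass it outputs ${\sf print}({\sf tilt}(T_i))$, which equals ${\sf print}(T_i)$ by the (unlabelled) lemma stating ${\sf print} \circ {\sf tilt} = {\sf print}$, interleaved with the $\#$ separators exactly as the enumeration‑algorithm specification demands, and the loop terminates the instant $\textsc{NextTree}$ returns $\emptyset$, so nothing is emitted after the final $\#$.

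It then remains to check that ${\sf print}(T_1), \ldots, {\sf print}(T_m)$, listed in this order, is precisely $\L_\D(v)$ with no repetitions. Legitimacy of each output is Lemma~\ref{appendix:output-tree-print} (${\sf print}(T_i) \in \L_\D(v)$). Surjectivity and injectivity both come from Lemma~\ref{appendix:output-tree-unique}: using that $\D$ is unambiguous, $\mu \mapsto T_\mu$ is a bijection between $\L_\D(v)$ and the set of full output trees rooted at $v$, with inverse $T_i \mapsto {\sf print}(T_i)$; since $T_1, \ldots, T_m$ lists every full output tree rooted at $v$ exactly once, ${\sf print}(T_1), \ldots, {\sf print}(T_m)$ lists every element of $\L_\D(v)$ exactly once. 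This gives the claim.

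I expect no genuine obstacle at this stage: with the output‑tree machinery in hand the argument is pure bookkeeping about the \textbf{while} loop, and the one delicate point — that $\textsc{NextTree}$ advances through the \emph{tilted} sequence in lockstep with the un‑tilted one, the transition at a union node from the last child‑$\ell(v)$ tree to the first child‑$r(v)$ tree being matched by a $\textsc{BuildTree}(\D, r(v))$ call — is already discharged inside Lemma~\ref{appendix:enum-correctlemma}. For the stronger conclusion of Proposition~\ref{ds:lindelay} one would additionally bound the delay: $\textsc{print}(T)$ runs in time $O(|{\sf print}(T)|)$ by a depth‑first traversal, and, crucially using $k$‑boundedness, each call to $\textsc{NextTree}$ together with the $\textsc{BuildTree}$ calls it triggers visits only $O(k \cdot |{\sf print}(T_{i+1})|)$ nodes, yielding output‑linear delay.
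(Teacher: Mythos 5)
Your proof is correct and follows essentially the same route as the paper's: you fix the ordered sequence $T_1 < \cdots < T_m$ of full output trees rooted at $v$, use the corollary following Lemma~\ref{appendix:enum-first} to seed the loop with ${\sf tilt}(T_1)$, Lemma~\ref{appendix:enum-correctlemma} to advance it, and Lemmas~\ref{appendix:output-tree-print} and~\ref{appendix:output-tree-unique} for legitimacy and exhaustiveness-without-repetition. The only difference is presentational — you make the loop invariant and the use of ${\sf print} \circ {\sf tilt} = {\sf print}$ explicit, which the paper leaves implicit.
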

\begin{proof}
	Let $T_1 <\cdots <T_m$ be the sequence of full output trees rooted in $v$. The algorithm starts by generating $T_1 = {\sf tilt}(T_1)$ as proven by Corollary~\ref{appendix:enum-first}. Then on each step $i$, the algorithm iterates $T$ as ${\sf tilt}(T_i)$ to transform it into ${\sf tilt}(T_i)$, as proven by Lemma~\ref{appendix:enum-correctlemma}. In each step, an element in $\L_{\cD}(v)$ is given as output as proven by Lemma~\ref{appendix:output-tree-print}. Moreover, the sequence $T_1 <\cdots <T_m$ allows the set $\L_{\cD}(v)$ to be produced exhaustively without repetitions, as proven by Lemma~\ref{appendix:output-tree-unique}.
\end{proof}

The following results ensure that each tree in the sequence can be generated efficiently.

\begin{lemma}\label{appendix:next-tree-almost-linear-delay}
	Let $\cD$ be an \dsabbr, let $v$ one of its nodes, and let $T_1<\cdots<T_m$ be the sequence of full output trees rooted in $v$. If the procedure {\sc NextTree} receives $(\cD,T)$ it returns the tree $T'$ in at most $c(\vert T\vert + \vert T'\vert)$ time, for some constant $c$.
\end{lemma}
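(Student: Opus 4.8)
The plan is to prove the bound by structural induction on the input tree $T$, after first establishing an auxiliary linear-time bound on {\sc BuildTree}. First I would record that {\sc BuildTree}$(\D,u)$ always terminates: its recursive calls follow edges of the acyclic DAG underlying $\D$, and the resulting (binary) call tree is therefore finite. Moreover it runs in time at most $\beta\,|S|$ for a universal constant $\beta$, where $S$ is the returned output tree, since each recursive call emits exactly one node of $S$ and performs only $O(1)$ further work. This is the bound we will invoke whenever {\sc NextTree} calls {\sc BuildTree}$(\D,r(v))$ to reset a subtree.

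Next I would run the induction. Let $\alpha$ bound the $O(1)$ work of a single non-recursive step of {\sc NextTree}, and set $c := \max(\alpha,\beta)$ (any sufficiently large constant works); write $T'$ for whatever {\sc NextTree}$(\D,T)$ returns, with $|T'| = 0$ when it is $\emptyset$. The base case $T = v$ a leaf is immediate: $T' = \emptyset$ and the cost is $\alpha \le c \le c(|T|+|T'|)$. For a union node $T = v(T_c)$, {\sc NextTree} recurses on $T_c$, obtaining $T_c'$. If $T_c' \neq \emptyset$ then $T' = v(T_c')$, so using $|T| = 1+|T_c|$, $|T'| = 1+|T_c'|$ and the induction hypothesis the total is at most $\alpha + c(|T_c|+|T_c'|) = \alpha + c(|T|+|T'|) - 2c \le c(|T|+|T'|)$. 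If $T_c' = \emptyset$, the recursive call costs at most $c|T_c|$ by the hypothesis applied with empty output, the call {\sc BuildTree}$(\D,r(v))$ costs at most $\beta|T'| \le c|T'|$, and with the $\alpha$ overhead the sum stays at most $c(|T|+|T'|)$ since $|T| = 1+|T_c|$. The product node $T = v(T_1,T_2)$ follows the same pattern with one extra layer: {\sc NextTree} first recurses on $T_2$. If that yields $T_2' \neq \emptyset$, then $T' = v(T_1,T_2')$ and the only cost beyond $\alpha$ is at most $c(|T_2|+|T_2'|)$, which fits inside $c(|T|+|T'|)$ because the $1+|T_1|$ terms present on both sides leave slack for $\alpha$. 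If $T_2$ yields $\emptyset$, {\sc NextTree} then recurses on $T_1$; if $T_1$ also yields $\emptyset$ then $T' = \emptyset$, $|T_1|+|T_2| = |T|-1$, and the total is at most $\alpha + c(|T_1|+|T_2|) \le c|T|$; if $T_1$ yields $T_1' \neq \emptyset$, a fresh right subtree $U$, namely {\sc BuildTree}$(\D,r(v))$, is built, $T' = v(T_1',U)$, and summing $\alpha + c|T_2| + c(|T_1|+|T_1'|) + \beta|U|$ against $c(|T|+|T'|) = c\bigl(2+|T_1|+|T_2|+|T_1'|+|U|\bigr)$ once more leaves room for $\alpha$.

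The main obstacle is entirely bookkeeping, and it is concentrated in the product case: one must check, in each of its three sub-branches, that every recursive {\sc NextTree} call returning $\emptyset$ is charged only against $|T|$ and never against $|T'|$ (which may itself be $\emptyset$), that the work that genuinely produces $T'$ — including the {\sc BuildTree} resets — is charged against $|T'|$, and that one fixed constant $c$ suffices uniformly across all branches and all levels of the recursion. I would also note that the argument uses neither unambiguity nor $k$-boundedness of $\D$: it needs only acyclicity of the underlying DAG (so that {\sc BuildTree} terminates) together with the fact that {\sc NextTree} recurses on strict subtrees of its argument (so that the induction is well-founded).
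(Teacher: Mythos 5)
Your proof is correct and follows essentially the same route as the paper's own proof: a preliminary linear-time bound on \textsc{BuildTree} (each recursive call emits exactly one node and does $O(1)$ extra work), followed by structural induction on the input output tree with a case split at leaf, union, and product nodes, balancing the recursive \textsc{NextTree} costs against $|T|$ and the \textsc{BuildTree} resets against $|T'|$. The only cosmetic difference is that you keep two named constants $\alpha,\beta$ and take $c = \max(\alpha,\beta)$, whereas the paper folds them into a single $c$ from the outset; the arithmetic in each branch matches.
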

\begin{proof}
	We choose $c$ as a factor of the number of steps that are taken in {\sc NextTree} without taking into account recursion. That is, the time that it takes to run steps 19-33 without calls. A first observation that we make is that {\sc BuildTree} builds a tree $T$  in time at most $c\vert T\vert$, since each call to {\sc BuildTree} takes less than $c$ steps, and exactly one call to {\sc BuildTree} is done per node in $T$. We prove the lemma by induction on the tree. If $v$ is a leaf node, then the proof is trivial. If $v$ is a product node, let $T = v(T_1,T_2)$, and let $T'$ be the output of {\sc NextTree} such that $T' = v(T_1',T_2')$ or $T' = \emptyset$. If the call in line 22 returns an nonempty tree, then the procedure takes time $c + c(\vert T_2\vert + \vert T_2'\vert)$. Otherwise, line 22 takes time $c\vert T_2\vert$. Then, if the call in line 24 returns a nonempty tree, it takes time $c(\vert T_1\vert + \vert T_1'\vert)$, and then the call in line 27 takes time $c\vert T_2'\vert$; otherwise, it takes time $c\vert T_1\vert$. In each of the routes where $T'$ is not empty, the execution time is bounded by $c(\vert T_1\vert + \vert T_2\vert + \vert T_1'\vert + \vert T_2'\vert+1) \leq c(\vert T\vert + \vert T' \vert)$, and if $T' = \emptyset$, it is bounded by $c(\vert T_1\vert + \vert T_2\vert + 1) = c\vert T\vert$ which proves the statement. If $v$ is a union node, let $T = v(T')$ and let $T_{\out}$ be the output of {\sc NextTree}. If the call in line 30 returns a nonempty tree, it takes time $c(\vert T'\vert + \vert T_{\out}'\vert)$, where $T_{\out} = v(T_{\out}')$, and the procedure takes total time $c + c(\vert T'\vert + \vert T_{\out}'\vert)\leq c(\vert T\vert + \vert T_{\out}\vert)$, which proves the statement. Otherwise,  the call in line 30 takes time $c\vert T'\vert$, and then line 32 takes time $c\vert T_{\out}\vert$, which adds to a total time $c + c(\vert T'\vert + \vert T_{\out}\vert)= c(\vert T\vert + \vert T_{\out}\vert)$, which also proves the statement.
\end{proof}

\begin{lemma}\label{appendix:tree-size}
	Let $\cD$ be a $k$-bounded \dsabbr and $T$ be a left-tilted output tree in $\cD$. The size of $T$ is at most $2k\vert {\sf print}(T)\vert$.
\end{lemma}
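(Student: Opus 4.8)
The plan is to bound $|T|$ by classifying the tree-nodes of $T$ into leaves, product nodes, and union nodes, counting each class, and summing. First the leaves: unwinding the recursive definition of ${\sf print}$ shows that ${\sf print}(T)$ is the left-to-right concatenation of the $\Sigma$-labels of the leaves of $T$ (product nodes concatenate their two subtrees, union nodes copy their unique subtree, leaves emit their single label). Since $\cD$ is a plain \dsabbr{} every leaf emits exactly one symbol, so $L := |{\sf print}(T)|$ is precisely the number of leaves of $T$. Next the product nodes: in $T$ every internal node is unary (a union node) or binary (a product node), so counting edges two ways — edges $=$ nodes $-1$, and edges $= 2\cdot(\#\text{product}) + (\#\text{union})$ — together with nodes $= L + \#\text{product} + \#\text{union}$ forces $\#\text{product} = L-1$.

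The substantive step is bounding the number of union nodes, and this is where both the $k$-boundedness of $\cD$ and the left-tilted hypothesis on $T$ are used. I would partition the union nodes of $T$ into maximal \emph{union chains} $v_1 \to v_2 \to \cdots \to v_m$, where each $v_{i+1} = {\sf child}_T(v_i)$ is a union node and $o := {\sf child}_T(v_m)$ is not; then $o$ is an output node (a leaf or a product node), which I call the \emph{anchor} of the chain, and distinct chains have distinct anchors, so there are at most $L + (L-1) = 2L-1$ chains. To bound the length of one chain I would invoke the structure of left-tilted trees: since $T = {\sf tilt}(T')$ for a full $T'$ and ${\sf tilt}$ deletes exactly those union nodes whose child is rooted at the right child, every union node surviving in $T$ descends to its left child in $\cD$, so $v_{i+1} = \ell(v_i)$ and $o = \ell(v_m)$; consequently $\odepth_\cD(v_m) = \odepth_\cD(o) + 1 = 1$ and $\odepth_\cD(v_i) = \odepth_\cD(v_{i+1}) + 1$, whence $\odepth_\cD(v_1) = m \le k$ by $k$-boundedness. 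Therefore $\#\text{union} \le k(2L-1)$.

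Adding the three counts gives $|T| = L + (L-1) + \#\text{union} \le (2L-1)(k+1)$, from which the stated bound $|T| \le 2k\,|{\sf print}(T)|$ is obtained by a routine simplification (one should be mildly careful here with the small constant slack — e.g.\ by folding the ``$+1$'' per anchor into the leaf count and noting that the values of $k$ arising in our constructions satisfy $k \ge 2$). The main obstacle, and the only non-routine ingredient, is the length bound on a union chain: one has to read off from the definition of ${\sf tilt}$ that the retained union nodes still ``descend to the left'' in $\cD$, which is exactly what makes $\odepth_\cD$ a strictly decreasing potential along a chain; once that is established, $k$-boundedness closes the argument and everything else is elementary tree bookkeeping.
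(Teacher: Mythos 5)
Your proposal follows the same strategy as the paper's proof: separate the nodes of $T$ into leaves, product nodes, and union nodes; observe that the leaves number $L = |{\sf print}(T)|$ and the product nodes number $L-1$ (by the standard edge-count for a binary tree); decompose the union nodes into maximal chains with distinct output-node anchors; and use $k$-boundedness together with the left-tilted property (a surviving union node descends to $\ell(\cdot)$ in $\cD$, so $\odepth_\cD$ decreases by one along a chain) to bound each chain length by $k$. Up to your intermediate bound $|T|\le(2L-1)(k+1)$ the reasoning is sound and actually more explicit and careful than the paper's sketch.

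The gap is the last step. You call $(2L-1)(k+1)\le 2kL$ a ``routine simplification'', but $(2L-1)(k+1)=2kL+(2L-k-1)$, and the extra term is positive as soon as $2L>k+1$: already for $L=2$, $k=2$ your count permits $9$ nodes while $2kL=8$, and a $2$-bounded \dsabbr{} with chains of length $2$ above every output node realizes this, so your count is not merely loose. Neither ``folding the $+1$ per anchor'' nor the remark that $k\ge 2$ repairs this --- for fixed $k$ the deficit $2L-k-1$ grows with $L$. The honest conclusion of your counting is $|T|\le (2L-1)(k+1)\le 2(k+1)\,|{\sf print}(T)|$, which is $O(k\,|{\sf print}(T)|)$ but not the literal $2k\,|{\sf print}(T)|$ of the statement. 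For what it is worth, the paper's own proof has the same latent defect: it asserts ``at most $k$ nodes between each pair of product nodes'' yet then ``replace[s] each edge by $k-1$ nodes'' and never accounts for a possible union chain above the root, which is what lets its arithmetic $(2L-1)+(k-1)(2L-2)=2k(L-1)+1$ land inside $2kL$. Since $k$ is a fixed parameter everywhere the lemma is invoked, any $O(k\,L)$ bound suffices for the complexity results, but the specific constant $2k$ does not follow from the counting argument --- yours or the paper's.
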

\begin{proof}
	Note that $\vert {\sf print}(T)\vert$ is equal to the number of leaves in $T$. Since $T$ is left-tilted, then  for each union node $v$ in $T$ we have that ${\sf child}_{T}(v)$ is rooted in $\ell(v)$. We also have that $\cD$ is $k$-bounded, so there are at most $k$ nodes between each pair of product nodes in $T$. We know that a binary tree with $e$ leaves has $2e-1$ nodes and $2e-2$ edges. Therefore, if we replace each edge by $k-1$ nodes we obtain a tree whose size is an upper bound for the size of $T$, and the proof follows.
\end{proof}

From these lemmas we obtain a result that ensures nearly output-linear delay.

\begin{claim}\label{appendix:ds-almost-linear-delay}
	Let $\cD$ be a $k$-bounded \dsabbr and let $v$ be a node in $\cD$. For some sequence $\mu_1,\ldots,\mu_m$ that contains exactly the elements in $\L_{\cD}(v)$ without repetition, {\sc Enumerate} can produce each element $\mu_i$ for $i\in[2,m]$ with delay $c(\vert \mu_{i-1}\vert + \vert \mu_i\vert)$, and $\mu_1$ with delay $c\vert\mu_1\vert$, where $c$ is a constant.
\end{claim}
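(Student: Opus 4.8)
The plan is to account for the work done by \textsc{Enumerate} between two consecutive writes of the separator $\#$: this consists of exactly one call to \textsc{NextTree}, one depth-first traversal that prints the current tree, and $O(1)$ loop bookkeeping. First I would pin down the successive values of the tree variable $T$. By the corollary drawn from Lemma~\ref{appendix:enum-first}, the initial tree produced by \textsc{BuildTree}$(\cD,v)$ is the first full output tree rooted in $v$, and by Lemma~\ref{appendix:enum-first} it is left-tilted; by Lemma~\ref{appendix:enum-correctlemma}, each subsequent call to \textsc{NextTree} sends ${\sf tilt}(T_i)$ to ${\sf tilt}(T_{i+1})$, where $T_1<\cdots<T_m$ is the ordered list of full output trees rooted in $v$. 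Hence the algorithm successively holds ${\sf tilt}(T_1),\ldots,{\sf tilt}(T_m)$, and since ${\sf print}$ is invariant under ${\sf tilt}$, the emitted strings are $\mu_i := {\sf print}(T_i)$; by Lemmas~\ref{appendix:output-tree-print} and~\ref{appendix:output-tree-unique} these are precisely the elements of $\L_{\cD}(v)$, each appearing once.

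Next I would turn running time into output length using the two quantitative lemmas. By Lemma~\ref{appendix:next-tree-almost-linear-delay}, the call \textsc{NextTree}$(\cD,{\sf tilt}(T_{i-1}))$ that produces ${\sf tilt}(T_i)$ runs in time at most $c_0\big(|{\sf tilt}(T_{i-1})|+|{\sf tilt}(T_i)|\big)$, and the depth-first printing of ${\sf tilt}(T_i)$ takes time $O(|{\sf tilt}(T_i)|)$. Since every tree in play is left-tilted, Lemma~\ref{appendix:tree-size} gives $|{\sf tilt}(T_j)|\le 2k\,|\mu_j|$ for all $j$. Substituting, the delay $\outdelay_i$ between the $i$-th and $(i+1)$-th $\#$ is at most $c\big(|\mu_{i-1}|+|\mu_i|\big)$ for a constant $c$ that is a suitable multiple of $k$; the additive $O(1)$ bookkeeping is absorbed because each $\mu_j$ is a nonempty word over $\Sigma$, hence $|\mu_j|\ge 1$. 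For the first element, the \textsc{BuildTree} call precedes the first $\#$, so only the traversal printing $\mu_1={\sf print}({\sf tilt}(T_1))$ lies between $\#_1$ and $\#_2$; this costs $O(|{\sf tilt}(T_1)|)\le O(k|\mu_1|)$, giving $\outdelay_1\le c|\mu_1|$ (and charging \textsc{BuildTree} here as well would not change the bound, since it too costs $O(|{\sf tilt}(T_1)|)$).

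The step I expect to require the most care is making the reduction from ``time'' to ``output length'' airtight, i.e. justifying that \emph{every} tree handled by \textsc{BuildTree} and \textsc{NextTree} is genuinely left-tilted rather than merely full. This matters because Lemma~\ref{appendix:tree-size} fails for arbitrary full output trees: a chain of union nodes each having the next one as its right child keeps the left output-depth bounded while making the full tree arbitrarily larger than its printout, and it is exactly the tilting operation --- which bypasses union nodes whose child is a right child --- that lets $k$-boundedness bound the size by $2k\,|{\sf print}(T)|$. Concretely, I would verify that \textsc{BuildTree} always recurses into $\ell(v)$ at union nodes, so its output coincides with its own tilt, and that every internal rebuild inside \textsc{NextTree} (the assignments $T_2\gets\textsc{BuildTree}(\cD,r(v))$ and $T\gets\textsc{BuildTree}(\cD,r(v))$) drops the surrounding union wrapper, matching the definition of ${\sf tilt}$; a short induction on the recursion then shows \textsc{NextTree} preserves left-tiltedness, closing the gap. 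Once this is in place, combining Lemmas~\ref{appendix:next-tree-almost-linear-delay} and~\ref{appendix:tree-size} as above yields the claim.
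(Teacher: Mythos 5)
Your proof is correct and follows essentially the same route as the paper's: identify the sequence of (tilted) output trees produced by \textsc{BuildTree}/\textsc{NextTree}, bound each transition by Lemma~\ref{appendix:next-tree-almost-linear-delay}, and convert tree size to output length via Lemma~\ref{appendix:tree-size}. The one place you are noticeably more careful than the paper is in spelling out that the trees actually held by the algorithm are the \emph{tilted} trees $\mathsf{tilt}(T_i)$ — which is exactly what makes Lemma~\ref{appendix:tree-size} applicable — whereas the paper's proof writes $T_i$ loosely; your final paragraph correctly flags and closes that gap.
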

\begin{proof}
	The sequence in question is the one given by the total order $T_1<\cdots <T_m$ of total output trees rooted in $v$, for which $\mu_i = {\sf print}(T_i)$. Let $c'$ be the constant in Lemma~\ref{appendix:next-tree-almost-linear-delay} and let $d$ be a constant such that ${\sf print}(T)$ can be produced in time $d\vert T \vert$. We have that {\sc BuildTree} can build a tree $T$ in size. In Lemma~\ref{appendix:next-tree-almost-linear-delay} it is shown that the first tree $T_1$ in the sequence can be generated in time $c'\vert T_1\vert$, and in Lemma~\ref{appendix:tree-size} we show that $\vert T_1\vert \leq 2k\vert \mu_1\vert$. From this, it follows that $\mu_1$ can be produced in time $2k(c'+d)\vert \mu_1\vert$. For each $i\in[2,m]$ Lemma~\ref{appendix:next-tree-almost-linear-delay} shows that $T_i$ can be generated in time $c'(\vert T_{i-1}\vert + \vert T_i\vert)$. We can bound this number by $2kc'(\vert \mu_{i-1}\vert + \vert \mu_i\vert)$ using Lemma~\ref{appendix:tree-size}. Printing the output takes time $d\vert T_i\vert$, so the total time is $2kc'(\vert \mu_{i-1}\vert + \vert \mu_i\vert) + 2kd\vert \mu_i\vert$, which is bounded by $2k(c'+d)(\vert \mu_{i-1}\vert + \vert \mu_i\vert)$. We conclude the proof by taking $c = 2k(c'+d)$.
\end{proof}

We optimize this result to obtain the desired statement.

\begin{proposition}[Proposition~\ref{ds:lindelay}]
	Fix $k\in\nat$. Let $\cD$ be an unambiguous and $k$-bounded \dsabbr. Then the set $\L_{\cD}(v)$ can be enumerated with output-linear delay for any node $v$ in $\cD$.
\end{proposition}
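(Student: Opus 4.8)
The plan is to upgrade the near-output-linear delay bound of Claim~\ref{appendix:ds-almost-linear-delay} to genuine output-linear delay by eliminating the dependency on $|\mu_{i-1}|$ in the delay for producing $\mu_i$. The issue is that \textsc{NextTree}$(\cD, T_{i-1})$ may traverse a large portion of the old tree $T_{i-1}$ before finding the place where it differs from $T_i$; in the worst case the work is proportional to $|T_{i-1}| + |T_i|$, and while this is bounded by $c(|\mu_{i-1}| + |\mu_i|)$, it is not bounded by $c|\mu_i|$ alone. The standard fix, which I would carry out, is \emph{amortization combined with a pointer/flag trick}: rather than rebuilding each tree from scratch, maintain the left-tilted output tree as a persistent in-memory structure with parent pointers and a stack of ``active'' nodes (the rightmost path of choice points), so that \textsc{NextTree} only ever touches nodes on the path from the current deepest choice point up to the first node that actually changes, plus the freshly built subtree. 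The first observation is that the portion of $T_{i-1}$ destroyed when moving to $T_i$ is in bijection with a portion of $T_i$ that is freshly built, so a careful accounting charges the destruction to the construction.

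Concretely, the key steps in order would be: (1) reorganize the enumeration so that \textsc{NextTree} is driven by an explicit stack $\Pi$ holding exactly the union nodes of the current tree whose child is the \emph{left} child in $\cD$ together with the product nodes, i.e. the nodes where a ``next choice'' is still available, ordered from root to leaf; (2) observe that calling \textsc{NextTree} amounts to popping nodes off $\Pi$ until we reach a node $v$ that can be advanced (a union node whose child we can switch from $\ell(v)$ to $r(v)$, or a product node whose right subtree can itself be advanced), advancing it, and then calling \textsc{BuildTree} on everything to the right, pushing the new choice points back onto $\Pi$; (3) prove the crucial invariant that \emph{every node popped off $\Pi$ and discarded during the transition $T_{i-1}\rightsquigarrow T_i$ lies strictly to the left of some newly built node}, so the number of discarded nodes is at most a constant times the number of newly built nodes in $T_i$; (4) combine this with Lemma~\ref{appendix:tree-size}, which bounds $|T_i| \le 2k|\mu_i|$, to conclude that the total work between writing the $i$-th and $(i{+}1)$-th separator is $O(k\,|\mu_i|)$; (5) handle the boundary cases — the first tree, produced by \textsc{BuildTree}$(\cD,v)$ in time $O(|T_1|) = O(k|\mu_1|)$, and termination when $\Pi$ becomes empty.

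The main obstacle I expect is step~(3): making the charging argument airtight. The subtlety is that when we advance a union node from its left child $\ell(v)$ to its right child $r(v)$, we throw away the entire subtree below $\ell(v)$, which could be large, while $r(v)$'s freshly built subtree could be small — so the naive ``discarded $\le$ built'' claim fails locally. The resolution is to amortize across the whole enumeration rather than per step: each node of the data structure $\cD$ that gets ``expanded into'' a tree node is built at most once per maximal run of consecutive trees that use it, and is discarded at most once per such run, so a potential-function argument (potential = current tree size, paid for at construction time) shows the amortized cost of \textsc{NextTree} is $O(|T_i|)$; since output-linear delay is a worst-case-per-output notion, I would then need the sharper non-amortized claim, which does hold because \textsc{NextTree} as restructured in step~(2) only ever pops nodes it is about to charge to the build phase of the \emph{current} call (it never revisits nodes left untouched on the stack). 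Once this bookkeeping is pinned down, the proposition follows by setting the constant $c$ from Claim~\ref{appendix:ds-almost-linear-delay} appropriately, now with the $|\mu_{i-1}|$ term absorbed.
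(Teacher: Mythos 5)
Your proposal correctly identifies the weak spot in Claim~\ref{appendix:ds-almost-linear-delay} (the $|\mu_{i-1}|$ term in the delay bound), but the resolution you give does not actually close it, and the obstacle you yourself flag as the ``main obstacle'' is exactly where the argument breaks.

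The charging argument in your step~(3) cannot be made non-amortized. Consider a union node $v$ whose left subtree encodes a large set of long outputs and whose right subtree encodes a single short output. When the enumeration crosses from the last output under $\ell(v)$ to the output under $r(v)$, the old tree $T_{i-1}$ has a long rightmost spine of product nodes that must all be popped before the choice point $v$ is reached, while the freshly built subtree under $r(v)$ is tiny. The popped nodes belong to $T_{i-1}$, not $T_i$, and there is nothing in the ``build phase of the current call'' to charge them to; the assertion that \textsc{NextTree} ``only ever pops nodes it is about to charge to the build phase of the current call'' is simply false in this case. A potential-function amortization bounds the \emph{total} work across the whole enumeration, but output-linear delay is a worst-case, per-step guarantee, and no restructuring of the DFS traversal avoids spending $\Omega(|\mu_{i-1}|)$ time between outputs $i-1$ and $i$ when $|\mu_{i-1}| \gg |\mu_i|$.

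The paper closes the gap with an orthogonal and much simpler device — a buffering (delayed-emission) trick — and leaves the data structure and \textsc{NextTree} untouched. Run the algorithm of Claim~\ref{appendix:ds-almost-linear-delay} as an oracle that writes each $\mu_i$ to an auxiliary tape. Before emitting $\mu_i$ to the real output tape, first advance the oracle's computation of $\mu_{i+1}$ by $c'|\mu_i|$ steps; this is legitimate because that time budget is proportional to the output you are about to emit. Since the oracle needs $c'(|\mu_i| + |\mu_{i+1}|)$ steps in total for $\mu_{i+1}$, only $c'|\mu_{i+1}|$ of them remain after $\mu_i$ is written, so the time between the $i$-th and $(i+1)$-th separator is $O(|\mu_{i+1}|)$. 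This is the standard way to smooth a ``previous-plus-current'' delay into a ``current-only'' delay, and it is the missing ingredient in your proposal.
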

\begin{proof}
	Let $c$ be the constant from Claim~\ref{appendix:ds-almost-linear-delay}, and let $\mu_1,\ldots,\mu_m$ be the elements in $\L_{\cD}(v)$ in the order that the algorithm from Claim~\ref{appendix:ds-almost-linear-delay} produces them.  We have that $\mu_1$ can be produced in $c|\mu_1|$ steps, whereas each other $\mu_i$ can be produced in $c(|\mu_{i-1}|+|\mu_i|)$ steps. 
	Our algorithm consists in printing the output set in order in an auxiliary tape, and to simply wait $2c\cdot|\mu_i|$ steps to print each output $\mu_i$ to the actual output tape.
	To see why this is possible to do, note that each output $\mu_i$ will be printed in the auxiliary tape after at most $c|\mu_1| + c(|\mu_1|+|\mu_2|) + c(|\mu_2|+|\mu_3|) + \cdots + c(|\mu_{i-1}|+|\mu_i|)$ steps, which is less than $2c\cdot(|\mu_1|+|\mu_2|+\cdots+|\mu_i|)$. This guarantees that at the moment each output $\mu_i$ need to be printed in the output tape, it will be available in the auxiliary tape.
	Since this clearly works with output-linear delay, the statement follows.
\end{proof}

\subsection{Proof of Theorem \ref{theo:data-structure}}

%
%
%
%

The construction of the operators and the reasoning why each partial result $(\D',v')$ is 2-bounded is stated in the paper. 
By adding the condition that $\D'$ is unambiguous we can deduce that $\L_{\D'}(v')$ can be enumerated with output-linear delay using Proposition~\ref{ds:lindelay}.

\subsection{Proof of Theorem \ref{theo:data-structure-eps}}

In \dsepsabbr, $\eps$-nodes are treated quite particularly. 
For a given \dsepsabbr $\cD$, we require that any node $v\in\cD$ satisfies exactly one of the following:
(1) $\lambda(v) \neq \eps$ and for any node $u$ which is reachable from $v$ it holds that $\lambda(u) \neq \eps$, (2) $\lambda(v) = \eps$ or (3) $\lambda(v) = \cup$, $\lambda(\ell(v)) = \eps$, and $r(v)$ satisfies (1). 
In other words, $\eps$ can only be child of a union node with in-degree 0.
For the rest of the proof, we will refer to a node $v$ that satisfies each case as a node such that (1) $\eps\not\in\cL_{\cD}(v)$, (2) $\lambda(v) = \eps$ or (3) $v$ is {\em in Case 3}, respectively.
Note that this construction ensures that if $\eps\in\cL_{\cD}(v)$, it can be retrieved in constant time.

With these conditions in mind, we can address output-depth, $k$-bounded and safeness. The definition of output-depth is unchanged for nodes $v$ for which $\eps\not\in\cL_{\cD}(v)$, if $\lambda(v) = \eps$, then $\odepth(v) = 0$, and if $v$ is in Case 3, $\odepth(v) = 1$. The definition of $k$-bounded is unchanged. The definition of safe nodes is unchanged except for the additional restriction that a node $v$ can only be safe if $\eps\not\in\cL_{\cD}(v)$.
\begin{claim}\label{appendix:eps-enum}
For a $k$-bounded unambiguous \dsepsabbr $\D$, the set $\cL_{\cD}(v)$ can be enumerated with output-linear delay for every node $v$ in $\cD$.
\end{claim}
\begin{proof}
	To prove this, we formalize the idea behind the construction of an \dsepsabbr.
	Let $\D_{v}$ be the \dsepsabbr induced by the nodes that are reachable from $v$. Formally, let $V_v$ be this set of nodes. 
	Then $\D_{v} = (\Sigma, V_v, I_v, \ell_v, r_v, \lambda_v)$ where $I_v = I \cap V_v$, and also $\ell_v$, $r_v$ and $\lambda$ are the functions $\ell$, $r$ and $\lambda$ induced by $V_v$. 
	It is straightforward to check that $\L_{\D_v}(v) = \L_{\D}(v)$.
	Note that if $\eps\not\in\cL_{\cD}(v)$, then $\cD_v$ is a regular \dsabbr, and if $v$ is in Case 3, then $\cD_{r(v)}$ is a regular \dsabbr as well.
	Furthermore, if $\cD$ is unambiguous and $k$-bounded, then the \dsabbr in each of these cases is also unambiguous and $k$-bounded.
	From here, the proof follows straightforwardly by using Proposition~\ref{ds:lindelay} over these \dsabbr.
\end{proof}

One last notion we make use of is $\eps$-safe nodes. For a given \dsepsabbr $\cD$ and $v\in\cD$ we say that $v$ is $\eps$-safe if either (1) $v$ is safe, (2) $\lambda(v) = \eps$, or (3) $v$ is in Case 3 and $r(v)$ is safe.

We define the operations $\add$, $\prod$ and $\union$ over $\D$ to return a pair $(\D',v')$ such that $\D' = (\Sigma, V', I', \ell', r', \lambda')$ as follows:

For $\add(\D,a)\to(\D',v')$ we define $V' := V \cup \{v'\}$, $I' := I$, and $\lambda'(v') = a$.

Assume $v_1$ and $v_2$ are $\eps$-safe. 
Further, assume that for every word in $w\in\L_{\D}(v_1)\cdot\L_{\D}(v_2)$ there exist only two non-empty words $w_1$ and $w_2$ such that $w_1\in\L_{\D}(v_1)$, $w_2\in\L_{\D}(v_2)$ and $w = w_1w_2$.
Since both $v_1$ and $v_2$ may fall in one of three cases, we define $\prod(\D,v_1,v_2) \to (\D',v')$ by separating into nine cases, of which the first six are straightforward: 
\begin{itemize}
	\item If $\eps\not\in\L_{\D}(v_1)$ and $\eps\not\in\L_{\D}(v_2)$,  we use the construction given for a regular \dsabbr.
	\item If $\eps\not\in\L_{\D}(v_1)$ and $\lambda(v_2) = \eps$, we define $v' = v_1$, and $\D' = \D$.
	\item If $\lambda(v_1) = \eps$ and $\eps\not\in\L_{\D}(v_2)$, we define $v' = v_2$, and $\D' = \D$.
	\item If $\lambda(v_1) = \eps$ and $\lambda(v_2) = \eps$, we define $v' = v_1$, and $\D' = \D$.
	\item If $\lambda(v_1) = \eps$ and $v_2$ is in Case 3, we define $v' = v_2$, and $\D' = \D$.
	\item If $v_1$ is in Case 3 and $\lambda(v_2) = \eps$, we define $v' = v_1$, and $\D' = \D$.
\end{itemize}


\begin{figure}[t]
	\centering
\begin{tikzpicture}[->,>=stealth',roundnode/.style={circle,draw,inner sep=1.2pt},squarednode/.style={rectangle,inner sep=3pt}]
	\node [squarednode] (0) at (1, 2.75) {$\cup$};
	\node [squarednode] (6) at (1.5, 2.75) {$= v_a'$};
	\node [squarednode] (1) at (0, 1) {$v_1$};
	\node [squarednode] (2) at (2, 1.75) {$\odot$};
	\node [squarednode] (3) at (2, 1) {$v_2$};
	\node [squarednode] (4) at (1, 0) {$\eps$};
	\node [squarednode] (5) at (3, 0) {$r(v_2)$};
	\draw (0) to (1);
	\draw (0) to (2);
	\draw (2) to (5);
	\draw (3) to (4);
	\draw (3) to (5);
	\draw (2) to (1);
\end{tikzpicture}
\hspace{1em}
\begin{tikzpicture}[->,>=stealth',roundnode/.style={circle,draw,inner sep=1.2pt},squarednode/.style={rectangle,inner sep=3pt}]
		\node [squarednode] (0) at (2.5, 2.75) {$\cup$};
		\node [squarednode] (6) at (3, 2.75) {$= v_b'$};
\node [squarednode] (1) at (3, 1) {$v_2$};
\node [squarednode] (2) at (1.75, 1.75) {$\odot$};
\node [squarednode] (3) at (0.75, 1) {$v_1$};
\node [squarednode] (4) at (0, 0) {$\eps$};
\node [squarednode] (5) at (1.5, 0) {$r(v_1)$};
	\draw (0) to (1);
	\draw (0) to (2);
	\draw (2) to (5);
	\draw (3) to (4);
	\draw (3) to (5);
	\draw (2) to (1);
\end{tikzpicture}
\hspace{1em}
\begin{tikzpicture}[->,>=stealth',roundnode/.style={circle,draw,inner sep=1.2pt},squarednode/.style={rectangle,inner sep=3pt}]
		\node [squarednode] (0) at (1.5, 2.75) {$\cup$};
			\node [squarednode] (11) at (2, 2.75) {$= v_c'$};
\node [squarednode] (1) at (0.75, 2.25) {$\eps^*$};
\node [squarednode] (2) at (2.25, 2.25) {$\textsf{union}$};
\node [squarednode] (3) at (3, 1.75) {$\cup$};
\node [squarednode] (4) at (2, 1) {$\odot$};
\node [squarednode] (5) at (3, 1) {$v_2$};
\node [squarednode] (6) at (2.5, 0) {$\eps$};
\node [squarednode] (7) at (4, 0) {$r(v_2)$};
\node [squarednode] (8) at (0.5, 1) {$v_1$};
\node [squarednode] (9) at (0, 0) {$\eps$};
\node [squarednode] (10) at (1, 0) {$r(v_1)$};
		\draw (0) to (1);
\draw (0) to (2);
\draw (2) to (10);
\draw (4) to (10);
\draw (4) to (7);
\draw (5) to (6);
\draw (5) to (7);
\draw (3) to (7);
\draw (3) to (4);
\draw (2) to (3);
\draw (8) to (10);
\draw (8) to (9);
\end{tikzpicture}
	\caption{Gadgets for $\prod$ as defined for an \dsepsabbr. Nodes $v_a'$, $v_b'$ and $v_c'$ correspond to $v'$ as is defined for cases (a), (b) and (c) respectively.}
	\label{fig-prod-multi-gadget}
\end{figure}
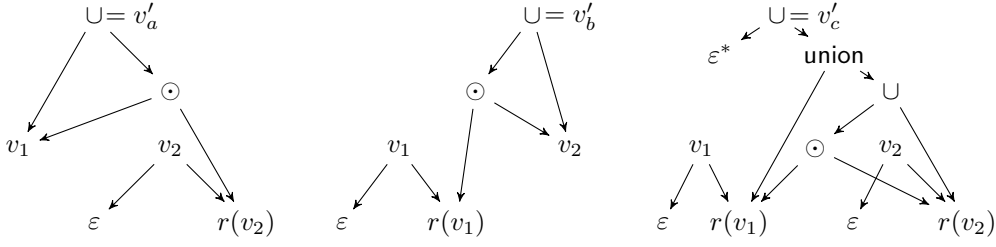

The other three cases are more involved and they are presented graphically in Figure~\ref{fig-prod-multi-gadget}. 
Formally, they are defined as follows:

\begin{itemize}
	\item[(a)] If $\eps\not\in\L_{\D}(v_1)$ and $v_2$ is in Case 3, then $V' = V\cup\{v',v''\}$, $I' = I\cup\{v',v''\}$, $\ell'(v') = v_1$, $r'(v') = v''$, $\ell(v'') = v_1$,  $r(v'') = r(v_2)$, $\lambda'(v') = \cup$ and $\lambda'(v'') = \odot$. 
	\item[(b)] If $v_1$ is in Case 3 and $\eps\not\in\L_{\D}(v_1)$, then $V' = V\cup\{v',v''\}$, $I' = I\cup\{v',v''\}$, $\ell'(v') = v''$, $r'(v') = v_2$, $\ell(v'') = r(v_1)$,  $r(v'') = v_2$, $\lambda'(v') = \cup$ and $\lambda'(v'') = \odot$. 
	\item[(c)] If both $v_1$ and $v_2$ are in Case 3, we do a slightly more delicate construction. 
	First, we define a $\cD''$ with $V'' = V\cup\{v^{3},v^{4}\}$, $I'' = I\cup\{v^{3},v^{4}\}$, $\ell''(v^3) = v^4$, $r''(v^3) = r(v_2)$, $\ell''(v^4) = r(v_1)$, $r''(v^4) = r(v_2)$, $\lambda''(v^3) = \cup$, $\lambda''(v^4) = \odot$.
	Now, let $(\cD^3, v^2) \gets \union(\cD'', r(v_1), v_3)$.
	Lastly, let $V' = V^3 \cup \{v^{*}, v'\}$, $I' = I^3\cup\{v'\}$, $\ell'(v') = v^*$, $r(v') = v_2$, $\lambda(v') = \cup$ and $v^* = \eps$.
\end{itemize}
Note that the $\union$ operation in case (c) does not recurse since $r(v_1)$ is safe. In particular, it does not reach any $\eps$-leaf.

Assume $v_1$ and $v_2$ are $\eps$-safe nodes. 
Further, assume that $\L_{\D}(v_1) \setminus \{\eps\}$ and $\L_{\D}(v_2)\setminus \{\eps\}$ are disjoint. 
We define $\union(\D,v_1, v_2) \to (\D',v')$ as follows:
\begin{itemize}
	\item If $\eps\not\in\L_{\D}(v_1)$ and $\eps\not\in\L_{\D}(v_2)$, we use the construction given for a regular \dsabbr.
	\item If $\eps\not\in\L_{\D}(v_1)$ and $\lambda(v_2) = \eps$, we define $V' = V \cup\{v'\}$, $I' = I\cup\{v'\}$ and $\lambda(v') = \cup$. We connect $\ell(v') = v_2$ and $r(v') = v_1$.
	\item If $\eps\not\in\L_{\D}(v_1)$ and $v_2$ is in Case 3, let $(\D'',v'') = \union(\D,v_1,r(v_2))$ as defined for a regular \dsabbr. We define $V' = V'' \cup\{v'\}$, $I' = I''\cup\{v'\}$ and $\lambda'(v') = \cup$ where $\lambda'$ is an extension of $\lambda''$. We connect $\ell'(v') = \ell(v_2)$ and $r'(v') = v''$.
	\item If $\lambda(v_1) = \eps$ and $\eps\not\in\L_{\D}(v_2)$, we define $V' = V \cup\{v'\}$, $I' = I\cup\{v'\}$ and $\lambda(v') = \cup$. We connect $\ell(v') = v_1$ and $r(v') = v_2$.
	\item If $\lambda(v_1) = \eps$ and $\lambda(v_2) = \eps$, we define $\D' = \D$ and $v' = v_1$.
	\item If $\lambda(v_1) = \eps$ and $v_2$ is in Case 3, we define $\D' = \D$ and $v' = v_2$.
	\item If $v_1$ is in Case 3 and $\eps\not\in\L_{\D}(v_2)$, let $(\D'',v'') = \union(\D,r(v_1),v_2)$ as defined for a regular \dsabbr.  We define $V' = V'' \cup\{v'\}$, $I' = I''\cup\{v'\}$ and $\lambda'(v') = \cup$ where $\lambda'$ is an extension of $\lambda''$. We connect $\ell'(v') = \ell(v_2)$ and $r'(v') = v''$. (*)
	\item If $v_1$ is in Case 3 and $\lambda(v_2) = \eps$, we define $\D' = \D$ and $v' = v_1$.
	\item If both $v_1$ and $v_2$ are in Case 3, let $(\D',v') = \union(\D,r(v_1),v_2)$ by using the construction of case (*).
\end{itemize}

Whenever $\D''$ is mentioned it is assumed to be equal to $(\Sigma, V'', I'', \ell'', r'', \lambda'')$.

It is straightforward to check that each operation behaves as expected. 
That is, if $\add(\D,a)\to(\D',v')$, then $\L_{\D}(v') = \{a\}$, if $\prod(\D,v_1,v_2)\to(\D',v')$, then $\L_{\D}(v') = \L_{\D}(v_1)\cdot\L_{\D}(v_2)$, and if $\union(\D,v_1,v_2)\to(\D',v')$, then $\L_{\D}(v_1)\cup\L_{\D}(v_2)$. 
Moreover, if both $v_1$ and $v_2$ are $\eps$-safe, then the resulting node $v'$ is $\eps$-safe as well for each operation.

Note that each operation falls into a fixed number of cases which can be checked exhaustively, and each construction has a fixed size, so they take constant time. 
Furthermore, each operation is fully persistent.

Finally, let $(\D',v')$ be a partial result obtained from applying the operations $\add$, $\prod$ and $\union$ such that $\D'$ is unambiguous. 
The proof follows from Claim~\ref{appendix:eps-enum}.

\renewcommand{\atitle}{\ref{sec:eval}}
\section{Proofs from Section~\atitle}\label{sec:appendixeval}

\subsection{Proof of Lemma \ref{vpt:steps}}

	We will prove the lemma by induction on $k$. 
	The case $k = 0$ is trivial since $\clevel(0) = \spanc{0}{0}$, $S^0_{p,q}$ is empty and $\llevel(0)$ is not defined. 
	We assume that statements 1 and 2 of the lemma are true for $k-1$ and below. 
	
	If $a_k\in\opS$, the algorithm proceeds into {\sc OpenStep} to build $S^k$ and $T^k$. 
	Statement 1 can be proved trivially since $\clevel(k) = \spanc{k}{k}$, similarly as for the base case.
	For statement 2 let $\llevel(k) = \spanc{i}{k-1}$, and consider a run $\rho\in\Runs(\cT,w\spanc{1}{k})$ such that $\rho\spanc{i}{k}$ starts on $p$ and ends on $q$ for some $p,q$ and $\gamma$, and let $p'$ be its second-to-last state. 
	Since $a_k$ is an open symbol, then the string $a_{i+1}\cdots a_{k-1}$ is well-nested, so it holds that $\clevel(k-1) = \spanc{i}{k-1}$. 
	Therefore, from our hypothesis it holds that $\L_{\D}(S^{k-1}_{p,p'})$ contains $\out(\rho\spanc{i}{k-1})$, and so, $\out(\rho\spanc{i}{k})$ is included in $\L_{\D}(T^{k}_{p,\gamma,q})$ at some iteration of $T^{k}_{p,\gamma,q}$ at line 36. 
	To show that every element in $\L_{\D}(T^k_{p,\gamma,q})$ corresponds to some run $\rho\in\Runs(\cT,w\spanc{1}{k})$, we note that the only step that modifies $T^k_{p,\gamma,q}$ is line 36, which is reached only when a valid subrun from $i$ to $k$ can be constructed.
	
	If $a_k\in\clS$, the algorithm proceeds into {\sc CloseStep} to build $S^k$ and $T^k$. 
	Let $\clevel(k) = \spanc{j}{k}$.
	In this case, statement 2 can be deduced directly from the hypothesis since $j < k$ and the table on the top of $T^k$ is the same as $T^{j}$.
	To prove statement 1 notice that since $a_k$ is a close symbol it holds that $\clevel(k-1) = \spanc{j'}{k-1}$ and $\llevel(k-1) = \spanc{j}{j'-1}$ for some $j'$. 
	Consider a run $\rho\in\Runs(\cT,w)$ such that $\rho\spanc{j}{k}$ starts on $p$, ends on $q$, and the last symbol pushed onto the stack is $\gamma$.
	This run can be subdivided in three subruns from $p$ to $p'$, from $p'$ to $q'$, and a transition from $q'$ to $q$ as it is illustrated in Figure~\ref{fig:delta-schema} (Right). 
	The first  two subruns correspond to $\rho\spanc{j}{j'+1}$ and $\rho\spanc{j'}{k-1}$, for which $\out(\rho\spanc{j}{j'+1})\in\L_{\D}(T^{k-1}_{p,\gamma,q})$ and $\out(\rho\spanc{j'}{k-1})\in\L_{\D}(S^{k-1}_{p',q'})$.
	Therefore, $\out(\rho\spanc{j}{k}) \in \L_{\D}(S^k_{p,q})$ at some iteration of line 47.
	To show that every element in $S^k_{p,q}$ corresponds to some run $\rho\in\Runs(\cT, w\spanc{1}{k})$, note that the only line at which $S^k_{p,q}$ is modified are is line 47, which is reached only when a valid run from $j$ to $k$ has been constructed.
	
\subsection{Proof of Theorem \ref{eval:prep}}

This theorem is a straightforward consequence of Lemma~\ref{vpt:steps}.

\subsection{Proof of Lemma \ref{eval:unambiguous}}

\begin{proof}
	For the sake of simplification, assume that $\cT$ is I/O-unambiguous on subruns as well. Formally, we extend the condition so that for every well-nested word $w$, span $\spanc{i}{j}$ and $\mu\in\Omega^*$ there exists only one run $\rho\in\Runs(\cT,w)$ such that $\mu = \out(\rho\spanc{i}{j})$.
	Towards a contradiction, we assume that $\cD$ is not I/O-unambiguous. Therefore, at least one of these conditions must hold: (1) There is some union node $v$ in $\cD$ for which $\L_{\cD}(\ell(v))$ and $\L_{\cD}(r(v))$ are not disjoint, or (2) there is some product node $v$ for which there are at least two ways to decompose some $\mu\in\L_{\cD}(v)$ in non-empty strings $\mu_1$ and $\mu_2$ such that $\mu = \mu_1\cdot\mu_2$ and $\mu_1\in\L(\ell(v))$ and $\mu_2\in\L_{\D}(r(v))$. 
	
	Assume the first condition is true and let $v$ be an union node that satisfies it, and let $k$ be the step in which it was added to $\cD$. If this node was added on {\sc OpenStep}, then the node $v$ represents a subset of the subruns defined in condition 1 of Lemma~\ref{vpt:steps}. Consider two different iterations of lines 35-36 on step $k$ where two nodes $v$ and $v'$ were united for which there is an element $\mu\in\L_{\cD}(v)\cap\L_{\cD}(v')$. Since these nodes were assigned to $T_{p,\gamma,q}$ on different iterations, the states $p'$ that were being considered must have been different. Therefore, if $\llevel(k) = \spanc{i}{j}$, $\mu = \out(\rho\spanc{i}{k}) = \out(\rho'\spanc{i}{k})$ for two runs $\rho$ and $\rho'$ where the $(k-1)$-th state is different. This violates the condition that $\cT$ is unambiguous. If this node was added on {\sc CloseStep}, we can follow an analogous argument. Note that union nodes created on a $\prod$ operation are unambiguous by construction (see Theorem~\ref{theo:data-structure-eps}).
	
	Assume now that the second condition is true and let $v$ be a node for which the condition holds and let $k$ be the step where it was created. We note that this node could not have been created in {\sc OpenStep} since the only step that creates product nodes is line 36, where $v_{\lambda}$ has the label $(\oout,k)$, and $S_{p,p'}$ is connected to nodes that were created in a previous step, so all of the elements $\mu\in\L(S_{p,p'})$ only contain pairs $(\oout,j)$ where $j < k$. We can follow a similar argument to prove that this node could not have been created in line 45 of {\sc CloseStep}. We now have that $v$ was created in line 44 of {\sc OpenStep}, and therefore $\ell(v) = T^{k-1}_{p,\gamma,q}$ and $r(v) = S^{k-1}_{p',q'}$ unless either of these indices were empty. However, that is not possible since we assumed that the step where $v$ was created was $k$, and if either were empty, no node would have been created. Now let $\mu\in\L(v)$ be such that there exist strings $\mu_1,\mu_1'\in\L(T^{k-1}_{p,\gamma,q})$ and $\mu_2,\mu_2'\in\L(S^{k-1}_{p',q'})$ such that $\mu = \mu_1\mu_2 = \mu_1'\mu_2'$ and $\mu_1 \neq \mu_1'$. Without loss of generality, let $\mu''$ be the non-empty suffix in $\mu_1$ such that $\mu_1'\mu'' = \mu_1$. Here we reach a contradiction since $\mu''$ is a prefix of $\mu_2$ and thus it must contain a pair $(\oout,j)$ such that and $j \in\llevel(k)$ and $j\in\clevel(k)$, which is not possible.
	
	The fact that all nodes in $\cD$ are $\eps$-safe carries easily from Theorem~\ref{theo:data-structure-eps}.
\end{proof}

\section{Applications in document spanners}\label{sec:appendixspanners}

This section presents an application of our enumeration algorithm to the evaluation of recursive spanners~\cite{PeterfreundCFK19}. Practical formalisms to define document spanner for information extraction with recursion was only proposed very recently. In~\cite{liatpaper}, the author suggests using extraction grammars to specify document spanners, which is the natural extension of regular spanners to a controlled form of recursion. Furthermore, the author gives an enumeration algorithm for unambiguous functional extraction grammars that outputs the results with constant-delay after quintic time preprocessing (i.e., in the document). We can show a streaming enumeration algorithm with update-time that is independent of the document, and output-linear delay by restricting to the class of visibly pushdown extraction grammars. We proceed by recalling the framework of document spanners and extraction grammars to define the class of visibly pushdown extraction grammars and state the main algorithmic result. 

We start by recalling the basics of document spanners~\cite{FaginKRV15}. Fix an alphabet~$\Sigma$ and a set of variables $\varset$ such that $\Sigma \cap \varset = \emptyset$. A document $d$ over $\Sigma$ is basically a word in $\Sigma^*$. A span $s$ of a document $d$ is a pair $\spanc{i}{j}$ of natural numbers $i$ and $j$ with $1 \leq i \leq j \leq |d|+1$. Intuitively, a span represents a substring of $d$ by identifying the starting and ending position. 
We denote by $\spanset(d)$ the set of all possible spans of $d$.
Let $X \subseteq \varset$ be a finite set of variables.
An $(X, d)$-mapping $\smap\colon X \rightarrow \spanset(d)$ assigns variables in $X$ to spans of $d$. An $(X, d)$-relation is a finite set of $(X, d)$-mappings. Then a document spanner $P$ (or just spanner) is a function associated with a finite set $X$ of variables that maps documents $d$ into $(X, d)$-relations.  

We use the framework of extraction grammars, recently proposed in~\cite{liatpaper}, to specify document spanners. 
For $X \subseteq \varset$, let $\varcaptures{X} = \{\varop{x}, \varcl{x}\mid x\in X\}$ be the set of captures of $X$ where, intuitively, $\varop{x}$ denotes the opening of $x$, and $\varcl{x}$ its closing. 
An \emph{extraction context-free grammar}, or \emph{extraction grammar} for short, is a tuple $G = (X, V, \Sigma, S, P)$ such that $X \subseteq \varset$, $V$ is a finite set of non-terminals symbols with $V\cap \varset = \emptyset$, $\Sigma$ is the alphabet of terminal symbols with $\Sigma \cap V = \emptyset$, $S \in V$ is the start symbol, and $P \subseteq V \times (V \cup \Sigma \cup \varcaptures{X})^*$ is a finite relation. In the literature, the elements of $V$ are also referred as ``variables'', but we call them non-terminals to distinguish $V$ from~$\varset$.
Each pair $(A, \alpha) \in P$ is called a production and we write it as $A \rightarrow \alpha$. The set of productions $P$ defines the (left) derivation relation $\gprod{G} \ \subseteq \, (V \cup \Sigma \cup \varcaptures{X})^* \times (V \cup \Sigma \cup \varcaptures{X})^*$ such that $w A \beta \gprod{G} w \alpha \beta$ iff $w \in (\Sigma \cup \varcaptures{X})^*$, $A \in V$, $\alpha, \beta \in (V \cup \Sigma \cup \varcaptures{X})^*$, and $A \rightarrow \alpha \in P$. We denote by $\gprod{G}^*$ the reflexive and transitive closure of $\gprod{G}$. Then the language defined by $G$ is $\cL(G) = \{w \in (\Sigma \cup \varcaptures{X})^* \mid S \gprod{G}^* w\}$. 
A word $w \in \cL(G)$ is called a \emph{ref-word} produced by $G$. 

In order to define a spanner from $G$, we need to interpret ref-words as mappings~\cite{Freydenberger19}. Formally, a ref-word $r = a_1 \ldots a_n \in (\Sigma \cup \varcaptures{X})^*$ is called valid for $X$ if, for every $x \in X$, there exists exactly one position $i$ with $a_i = \varop{x}$ and exactly one position $j$ with $a_j = \ \varcl{x}$, such that $i < j$. In other words, a valid ref-word defines a correct match of open and close captures. Moreover, each $x \in X$ induces a unique factorization of $r$ of the form $r = r_x^p \cdot \varop{x} \, \cdot \, r_x \, \cdot\, \varcl{x} \cdot r_x^s$. 
This factorization defines an $(X,d)$-mapping as follows. 
Let $\splain: (\Sigma \cup \varcaptures{X})^* \rightarrow \Sigma^*$ be the morphism that removes the captures from ref-words, namely, $\splain(a) = a$ when $a \in \Sigma$ and $\splain(c) = \eps$ when $c \in \varcaptures{X}$.
We extend $\splain$ to operate over strings in the obvious way.
Furthermore, let $r$ be a valid ref-word for $X$, $d$ be a document, and assume that $\splain(r) = d$.
Then we define the $(X,d)$-mapping $\smap^r$ such that $\smap^r(x) = \spanc{i}{j}$ iff $r = r_x^p \cdot \varop{x} \, \cdot \, r_x \, \cdot\, \varcl{x} \cdot r_x^s$, $i = |\splain(r_x^p)|+1$, and $j = i + |\splain(r_x)|$. 
Finally, the spanner $\sem{G}$ associated to an extraction grammar $G$ is defined over any document $d \in \Sigma^*$ as follows:
$$
\sem{G}(d) \ = \ \{ \, \mu^r \ \mid \ r \in \cL(G), \text{ $r$ is valid for $X$}, \text{ and } \splain(r) = d\, \}.
$$
There are two classes of extraction grammars that are relevant for our discussion.
The first class of grammars are called functional extraction grammars. 
An extraction grammar $G$ is \emph{functional} if every $r \in \cL(G)$ is valid for $X$.  In~\cite{liatpaper} it was shown that for any extraction grammar $G$ there exists an equivalent functional grammar $G'$ (i.e. $\sem{G} = \sem{G'}$). Non-functional grammars are problematic given that, even for regular spanners, their decision problems easily become intractable~\cite{MaturanaRV18,FreydenbergerKP18}. For this reason, we restrict to functional extraction grammars without loss of expressive power. 
The second class of grammars are called unambiguous extraction grammars. An extraction grammar $G$ is \emph{unambiguous} if for every $r \in \cL(G)$ there exists exactly one path from $S$ to $r$ in the graph $((V \cup \Sigma \cup \varcaptures{X})^*, \gprod{G})$. In other words, there exists exactly one left-most derivation.

We consider now a sub-class of extraction grammars for nested words. Let $\Sigma = (\opS, \clS, \noS)$ be a structured alphabet. Then a \emph{visibly pushdown extraction grammar} (VPEG) is a functional extraction grammar $G = (X, V, \Sigma, S, P)$ in which $\Sigma = (\opS, \clS, \noS)$ is a structured alphabet, and all the productions in $P$ are of one of the following forms: (1) $A \rightarrow \eps$; (2) $A \rightarrow a B$ such that $a \in \noS \cup \varcaptures{X}$ and $B \in V$; (3) $A \rightarrow \op{a} \, B \, \cl{b} \, C$ such that $\op{a} \in \opS$, $\cl{b} \in \clS$, and $B, C \in V$. Intuitively, rules $A \rightarrow a B$ allow to produce arbitrary sequences of neutral symbols, where rules $A \rightarrow \op{a} \, B \, \cl{b} \, C$ forces the word to be well-nested. 

Visibly pushdown extraction grammars are a subclass of extraction grammars that works for well-nested documents. In fact, the reader can notice that the visibly pushdown restriction for extraction grammars is the analog counterpart of visibly pushdown grammars\footnote{The definition of visibly pushdown grammars in~\cite{AlurM04} is slightly more complicated given that they consider nested words that are not necessary well-nested (see the discussion in Section~\ref{sec:prelim}).} introduced in~\cite{AlurM04}. Therefore, one could expect that VPEGs are less expressive than extraction grammars. 
Interestingly, we can use Theorem~\ref{theo:main} to give an efficient streaming enumeration algorithm for evaluating VPEG. 
\begin{theorem}\label{theo:spanners}
	Fix a set of variables $X$. The problem of, given a visibly pushdown extraction grammar $G = (X, V, \Sigma, S, P)$ and a stream $\Stream$, enumerating all $(X,\Stream[1,n])$-mappings of $\sem{G}(w)$ can be solved with update-time $\cO(2^{|G|^3})$, and output-linear delay. Furthermore, if $G$ is restricted to be unambiguous, then the problem can be solved with update-time $\cO(|G|^3)$.
\end{theorem} 
This result goes by constructing an extraction pushdown automata~\cite{liatpaper} from $G$, and reduce it to a visibly pushdown transducers. Note that, although the update-time of the algorithm is exponential in the size of the grammar, in terms of data-complexity the update-time is constant. Furthermore, for the special case of unambiguous grammars the update-time is even polynomial. Unambiguous grammars are very common in parsing tasks~\cite{aho1986compilers} and, thus, this restriction could be useful in practice.

\subsection{Proof of Theorem~\ref{theo:spanners}}

To link the model of visibly pushdown extraction grammars and visibly pushdown automata we define another class of automata based on the ideas in~\cite{liatpaper}. Let $\cA$ be an {\em extraction visibly pushdown automaton} (EVPA) if $\cA = (X,Q,\Sigma,\Gamma,\Delta,I,F)$ where $X$ is a set of variables, $Q$ is a set of states, $\Sigma = (\opS,\clS,\noS)$ is a visibly pushdown alphabet, $\Gamma$ is a stack alphabet, $\Delta \subseteq
(Q \times \opS \times Q \times \Gamma) \ \cup (Q \times \clS \times \Gamma \times Q) \ \cup (Q \times (\noS\cup\varcaptures{X}) \times Q)$, $I$ is a set of initial states, and $F$ is a set of final states. Note that this is a simple extension of \vpa where neutral transitions are allowed to read neutral symbols or captures in $X$. We define the runs as in \vpa except the input in a EVPA is a ref-word $w\in(\Sigma\cup\varcaptures{X})$, and we say that $w\in\L(\cA)$ if and only if there is an accepting run of $\cA$ on $w$. Furthermore, $\cA$ is unambiguous if for every ref-word $w$ there exists at most one accepting run of $\cA$ over $w$. It is straightforward to see that this is a direct counterpart to visibly pushdown extraction grammars. Therefore, we can use the ideas in \cite{AlurM04} to obtain a one-to-one conversion from one to another.

\begin{claim}\label{appendix:spannerclaim}
	For a given VPEG $G$ there exists an EVPA $\cA_G$ such that $\L(G) = \L(\cA_G)$. Moreover, $\cA_G$ is unambiguous iff $G$ is unambiguous, and $\cA_G$ can be constructed in time $\cO(\vert G\vert)$.
\end{claim}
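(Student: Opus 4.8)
The plan is to build $\cA_G$ by reading the productions of $G$ off directly, using the non-terminals of $G$ as control states and the type-$3$ productions as stack symbols. Concretely, over the same variable set $X$, set $Q = V \cup \{q_f\}$ for a fresh state $q_f$, let the stack alphabet $\Gamma$ be the set of productions of the form $A \to \op{a}\,B\,\cl{b}\,C$, take $F = \{q_f\}$, and set $I = \{S\}$, to which we add $q_f$ exactly when $S \to \eps \in P$. Call a non-terminal $A$ \emph{nullable} if $A \to \eps \in P$; since every type-$2$ and type-$3$ production contributes a terminal symbol, $A \gprod{G}^* \eps$ holds iff $A$ is nullable, so nullability is read off $P$ in linear time. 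The transitions mirror the productions: for $A \to aB$ with $a \in \noS \cup \varcaptures{X}$ add the neutral transition $(A, a, B)$, and, if $B$ is nullable, also $(A, a, q_f)$; for $r : A \to \op{a}\,B\,\cl{b}\,C$ add the push transition $(A, \op{a}, B, r)$, and, if $B$ is nullable, also $(A, \op{a}, q_f, r)$, together with the pop transitions $(q_f, \cl{b}, r, C)$ and, if $C$ is nullable, $(q_f, \cl{b}, r, q_f)$. Intuitively $q_f$ records that the ref-word produced so far already forms a complete word at the current nesting level, so $q_f$ has no neutral or push transitions, whereas every other state $A$ records that a \emph{non-empty} suffix in $\L(A)$ still has to be produced. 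Each production yields $O(1)$ transitions, so $\cA_G$ has size $O(|G|)$ and is built in time $O(|G|)$.

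The next step is to establish $\L(\cA_G) = \L(G)$ via the invariant: for every non-terminal $A$ and ref-word $w$, one has $A \gprod{G}^* w$ iff either $w = \eps$ and $A$ is nullable, or $w \neq \eps$ and $\cA_G$ has a run from $(A, \eps)$ on $w$ ending in $(q_f, \eps)$. I would prove this by induction on the length of the leftmost derivation for the "only if" direction and on $|w|$ for the "if" direction, splitting respectively on the first production applied to $A$ and on the first symbol of $w$; the type-$2$ and $\eps$-production cases are routine, and for $r : A \to \op{a}\,B\,\cl{b}\,C$ one decomposes $w = \op{a}\, w_B\, \cl{b}\, w_C$, using that $w_B$ has balanced, well-nested brackets (captures and neutrals counting as neutral), so any run segment on $w_B$ that starts and ends just above the occurrence of $r$ stays above $r$ throughout — which lets one lift the induction hypothesis for $w_B$ and apply it again for $w_C$. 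Feeding $A = S$ into the invariant and unfolding the definitions of $I$ and $F$ yields $w \in \L(\cA_G) \iff S \gprod{G}^* w$, i.e. $\L(\cA_G) = \L(G)$; here one uses that a run starting in $q_f$ has an empty stack and no applicable move, so it contributes precisely the case $\eps \in \L(G)$, never interfering with the runs starting in $S$.

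For the unambiguity statement I would upgrade the invariant to a bijection between the leftmost derivations of a ref-word $w$ from $S$ and the accepting runs of $\cA_G$ on $w$: reading $w$ left to right, each neutral transition matches a type-$2$ production, each push matches a type-$3$ production, and each move that enters $q_f$ (a $q_f$-variant neutral or push transition, a pop into $q_f$, or the $q_f$ start) matches the application of an $\eps$-production that closes off the current level. The point is that the $q_f$-variant transitions never create genuine ambiguity: since a state $B \in V$ only ever produces non-empty suffixes of $\L(B)$, while $q_f$ has no within-level moves, for a fixed remaining input exactly one of "go to $B$" and "go to $q_f$" can be extended to an accepting run; the same dichotomy applies to the two pop variants. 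Hence the only real branching of $\cA_G$ is the choice, at a state $A$, of which production of $A$ to apply, which is exactly the branching of a leftmost derivation. Consequently $\cA_G$ admits two accepting runs on some $w$ iff $G$ admits two leftmost derivations of $w$, giving "$\cA_G$ unambiguous iff $G$ unambiguous", and the linear-time bound is immediate from the construction.

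The main obstacle I anticipate is the bookkeeping around $\eps$-productions, where the correctness, linear-size, and unambiguity requirements genuinely interact. The naive construction — letting \emph{every} nullable non-terminal fire every matching close transition — simultaneously blows the automaton up to $\Theta(|G|^2)$ transitions and introduces a second derivation of each empty sub-factor (a nullable non-terminal could be either "entered and then emptied" or "skipped"). Isolating the single auxiliary state $q_f$, routing all "level complete" information through it, and dually forbidding real non-terminal states from deriving $\eps$ is what reconciles the three requirements at once; carefully verifying that this reroute neither drops nor duplicates any accepting run (in particular, tracking the nullable subcases in the $\op{a}\,B\,\cl{b}\,C$ step where $B$ or $C$ is emptied) is the delicate part of the argument.
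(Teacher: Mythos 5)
Your construction is correct, and it follows the same route as the paper: the paper simply cites the visibly-pushdown-grammar-to-VPA conversion of Alur and Madhusudan (Theorem~6 of~\cite{AlurM04}) and notes the minor adaptations for captures and well-nestedness, while you spell out exactly that construction, including the single auxiliary ``level finished'' state $q_f$ and the production-as-stack-symbol trick. Your observation that nullability degenerates to a direct check for $A \to \eps \in P$ (because every type-2 and type-3 production emits a terminal) is a valid and slightly sharper justification of the linear-time bound than the paper's ``constant number of traversals over $P$''.
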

\begin{proof}
	Let $G = (X, V, \Sigma, S, P)$ be a VPEG. We construct a EVPA $\cA_G = (X,Q,\Sigma,\Gamma,\Delta,I,F)$ such that $\L(G) = \L(\cA_G)$ using an almost identical construction to the one in Theorem 6 of~\cite{AlurM04}. The only differences arise in that our structure is defined for well-nested words, so it can be slightly simpified, and in the case where a production is of the form $X\to aY$, for which we add the possibility that $a\in\varcaptures{X}$. This construction provides one transition in $\Delta$ per production in $P$, and in some cases it needs to check if a variable is nullable. Checking if a single variable is nullable is costly, but by a constant number of traversals in $P$ it is possible to check which variables in $X$ are nullable or not, which can be done before building $\Delta$. Therefore, this construction can be done in time $\cO(\vert P\vert)$. Finally, $\cA_G$ is unambiguous if and only if $G$ is unambiguous, which is another consequence of Theorem 6 of~\cite{AlurM04}.	
\end{proof}

Here we define the spanner $\br{\cA}$ for a given EVPA $\cA$ identically to the definition for an extraction grammar. Note that from the proof it also follows that if $G$ is functional, then $\cA_G$ is functional as well.

For the next part of the proof assume that $\cA_G$ is unambiguous. We will show that for an EVPA $\cA$ and stream $\Stream$, the set $\br{\cA}(d)$, can be enumerated with output-linear delay and update-time $\cO(\vert\cA_G\vert^3)$, for $d = \Stream[1,n]$. Towards this goal, we will start with an unambiguous $\cA_G = (X,Q,\Sigma,\Gamma,\Delta,I,F)$ and convert it into a VPT $\cT_G$ with output symbol set $2^{\varcaptures{X}}$ and use our algorithm to enumerate the set $\br{\cT_G}(w)$ where $d' = d\#$, using a dummy symbol $\#$. Each element $w\in\br{\cT_G}(d')$ can then be converted into a mapping $\mu\in\br{G}(d)$ after it is given as output in time $\cO(\vert\mu\vert)$.

Let $\cT_G = (Q', \Sigma', \Gamma, \oalph, \Delta', \qinit, F')$ 
where $Q' = Q\cup\{q_f\}$, 
$\Sigma' = (\opS,\clS,\noS_{\#})$ 
such that $\noS_{\#} = \noS\cup\{\#\}$, $\oalph = 2^{\varcaptures{X}}\cup\{\eps\}$ and $F' = \{q_f\}$. 
To define $\Delta'$ we introduce a {\sf merge} operation on a path over $\cA_G$. 
This is defined for any non-empty sequence of transitions $t = (p_1,v_1,q_1)(p_2,v_2,p_2)\cdots(p_m,v_m,q_m)\in\Delta^*$ such that $v_i\in\varcaptures{X}$ for $i\in[1,m]$, and $q_i = p_{i+1}\in[i,m-1]$. 
If these conditions hold, we say that $t$ is a v-path ending in $p_m$. 
Let $t$ be such a v-path and let $S = \{v_1,\ldots,v_m\}$. For $\op{a}\in\opS$, and a transition $(p,\op{a},\gamma,q)$ such that $p = q_m$, we define ${\sf merge}(t, (p,\op{a},\gamma,q)) := (p_1,\op{a},S,\gamma,q)$. For $\cl{a}\in\clS$ and a transition $(p,\cl{a},q,\gamma)$ such that $p = q_m$, we define ${\sf merge}(t,(p,\cl{a},q,\gamma)) := (p_1,\cl{a},S,q,\gamma)$. For $a\in\noS$ and a transition $(p,a,q)$ such that $p = q_m$, we define ${\sf merge}(t,(p,a,q)) := (p_1,a,S,q)$. We now define $\Delta'$ as follows:
\begin{align*}
	\Delta' = \,&\{(p,\op{a},\eps,\gamma,q)\mid (p,\op{a},\gamma,q)\in\Delta\}\,\cup\\
	&\{{\sf merge}(t,(p,\op{a},\gamma,q))\mid\text{there is a v-path $t\in\Delta^*$ ending in $p$ and }(p,\op{a},\gamma,q)\in\Delta\}\,\cup\\
	&\{(p,\cl{a},\eps,q,\gamma)\mid (p,\cl{a},q,\gamma)\in\Delta\}\,\cup\\
	&\{{\sf merge}(t,(p,\cl{a},q,\gamma))\mid\text{there is a v-path $t\in\Delta^*$ ending in $p$ and }(p,\cl{a},q,\gamma)\in\Delta\}\,\cup\\
	&\{(p,a,\eps,q)\mid (p,a,q)\in\Delta\}\, \cup\\
	&\{{\sf merge}(t,(p,a,q))\mid\text{there is a v-path $t\in\Delta^*$ ending in $p$ and }(p,a,q)\in\Delta\}\,\cup\\
	&\{{\sf merge}(t,(p,\#,q_f))\mid\text{there is a v-path $t\in\Delta^*$ ending in $p$ and } p\in F\}.
\end{align*}
Since $\cA_G$ is unambiguous, and therefore, the transitions in $\Delta$ define a DAG over $Q$, from which we deduce that $\Delta$ is well-defined.
By the definition of {\sf merge} it is straightforward to check that every accepting path in $\cA_G$ is preserved in $\cT_G$, in the sense that if $r\in\L(\cA_G)$ then there exists an accepting path of $\cT_G$ over $({\sf plain}(r)\#, \omega)$, where $\omega$ is a sequence of elements in $2^{\varcaptures{X}}\cup\{\eps\}$ built from the captures present in $r$.

To show accepting pairs for $\cT_G$ correspond to a valid counterpart in $\cA_G$ let $(d,\omega)$ be an input/output pair that is accepted by $\cT_G$. Note that $d = d'\#$ from our definition of $\Delta'$. It can be seen that for every accepting path of $\cT_G$ over $(d,\omega)$ there exists at least one ref-word $r$ built from $d$ and $\omega$. However, note that for every such ref-word $r$ the only difference may be in the order of the elements inside each group of contiguous captures, which will be asociated to the same position in $\mu^r$. From this, it follows that for each accepting pair $(d,\omega)$ there exists only one mapping $\mu\in\br{\cA_G}(d')$ that can be built from $(d,\omega)$.

The size of $\Delta$ is bounded by the number of valid v-paths there could exist in $\cA_G$. Recall that $\cA_G$ is functional, an thus every v-path in $\cA_G$ contains at most one instance of each element in $\varcaptures{X}$. From this it follows that the size of $\cT_G$ is in $\cO(\vert\Delta\vert\vert 2^{\varcaptures{X}}\vert)$. Furthermore, since the transitions in $\Delta$ form a DAG over $Q$, each of these v-paths can be found by a single traversal over $\cA_G$, so building $\cT_G$ takes time $\cO(\vert\Delta\vert)$.

By using the algorithm detailed in Section~\ref{sec:eval} we can enumerate the set $\br{\cT_G}(d)$ with update-time $\cO(\vert\cT_G\vert^3)$ and output-linear delay. 
However, with a more fine-grained analysis of the algorithm, we note that the update-time is bounded by $\vert Q'\vert^2\vert\Delta'\vert\in \cO(\vert Q\vert^2\vert\Delta\vert\vert 2^{\varcaptures{X}}\vert)$. We modify the enumeration algorithm slightly so that for each output $\omega\in\br{\cT_G}(d)$ we build the expected output in $\br{G}(d)$. We do this by checking $w$ symbol by symbol and building a mapping $\mu\in\br{G}(d)$, and this can be done in time $\cO(\vert\mu\vert)$. As the set $X$ is fixed, it follows that this enumeration can be done with update-time $\cO(\vert G\vert^3)$ and output-linear delay.

Finally, we adress the case where $G$ is an arbitrary VPEG. The way we deal with this case is by determinizing the EVPA constructed in Claim~\ref{appendix:spannerclaim}. This can be done in time $\cO(2^{\vert \cA_G\vert})$. From here, we can follow the reasoning given for the unambiguous case to prove the statement.

\end{document}